\renewenvironment{abstract}{
  \small
  \list{}{%
    \setlength{\leftmargin}{2cm}% <------ ABSTRACT MARGIN SIZE
    \setlength{\rightmargin}{\leftmargin}%
  }%
  \item\relax}
  {\endlist
}
\newenvironment{widetext}{\begin{strip}}{\end{strip}}
\def\@maketitle{%
  \newpage
  \null
  \vskip 2em%
  \begin{center}%
  \let \footnote \thanks
    {\large\bf \@title \par}% <----- TITLE
    \vskip 1.5em%
    {\large
      \lineskip .5em%
      \begin{tabular}[t]{c}%
	\parbox{0.7\textwidth}{\centering \normalsize \@author}% <----- AUTHORS
      \end{tabular}\par}%
  \end{center}%
  \par
  \vspace{-0.3cm}}
\newtheoremstyle{definition}% ⟨name⟩
  {3pt}% ⟨Space above⟩
  {3pt}% ⟨Space below⟩
  {\sl}% ⟨Body font⟩
  {}% ⟨Indent amount⟩1
  {\bf}% ⟨Theorem head font⟩
  {:}% ⟨Punctuation after theorem head⟩
  {.5em}% ⟨Space after theorem head⟩2
  {\thmname{#1}\thmnumber{\@ifnotempty{#1}{ }\@upn{#2}}
    \thmnote{ {\bfseries(#3)}}}% ⟨Theorem head spec (can be left empty, meaning
\theoremstyle{definition}
\newtheorem{defn}{Definition}
\newtheoremstyle{convention}% ⟨name⟩
  {3pt}% ⟨Space above⟩
  {3pt}% ⟨Space below⟩
  {\sl}% ⟨Body font⟩
  {}% ⟨Indent amount⟩1
  {\bf}% ⟨Theorem head font⟩
  {:}% ⟨Punctuation after theorem head⟩
  {.5em}% ⟨Space after theorem head⟩2
  {\thmname{#1}\thmnumber{\@ifnotempty{#1}{ }\@upn{#2}}
    \thmnote{ {\bfseries(#3)}}}% ⟨Theorem head spec (can be left empty, meaning
\theoremstyle{convention}
\newtheorem*{conv}{Conventions}
\setlist[itemize,1]{label={\raisebox{0.35ex}{\tiny$\bullet$}}}
\newtheoremstyle{prop}% ⟨name⟩
  {3pt}% ⟨Space above⟩
  {3pt}% ⟨Space below⟩
  {\sl}% ⟨Body font⟩
  {}% ⟨Indent amount⟩1
  {\bf}% ⟨Theorem head font⟩
  {:}% ⟨Punctuation after theorem head⟩
  {.5em}% ⟨Space after theorem head⟩2
  {\thmname{#1}\thmnumber{\@ifnotempty{#1}{ }\@upn{#2}}
    \thmnote{ {\bfseries(#3)}}}% ⟨Theorem head spec (can be left empty, meaning
\theoremstyle{prop}
\newtheorem{lem}[defn]{Lemma}
\newtheorem{thm}[defn]{Theorem}
\newtheorem*{thm*}{Theorem}
\newcommand{\Q}{Q}
\newcommand{\Qent}{Q_{\textnormal{ent}}}
\newcommand{\ca}[1]{\mathcal{#1}}
\newcommand{\chan}{\Lambda}
\newcommand{\ket}[1]{|#1\rangle}
\newcommand{\bra}[1]{\langle#1|}
\newcommand{\Hmin}{H_{\textnormal{min}}}
\newcommand{\Hmax}{H_{\textnormal{max}}}
\newcommand{\tr}{\textnormal{tr}}
\newcommand{\xb}{X}
\newcommand{\zb}{Z}
\newcommand{\nb}{D}
\newcommand{\ix}{I_X}
\newcommand{\iz}{I_Z}
\newcommand{\ex}{\gamma}
\newcommand{\ez}{\lambda}
\newcommand{\Ez}{\Lambda}
\newcommand{\nx}{n}
\newcommand{\nz}{k}
\newcommand{\N}{N}
\newcommand{\mux}{\mu_x}
\newcommand{\muz}{\mu_z}
\newcommand{\etolx}{e_x}
\newcommand{\etolz}{e_z}
\newcommand{\exest}{e_x}
\newcommand{\ezest}{e_z}
\newcommand{\ppass}{p_{\textnormal{pass}}}
\newcommand{\pabort}{p}
\newcommand{\ketbra}[1]{\ket{#1}\bra{#1}}
\newcommand{\End}{\textnormal{End}}
\newcommand{\Herm}{\textnormal{Herm}}
\newcommand{\idop}[1]{I_{#1}}
\newcommand{\idchan}[1]{I_{#1}}
\newcommand{\dt}[1]{\textbf{#1}}
\DeclarePairedDelimiter\abs{\lvert}{\rvert}%
\DeclareMathOperator{\Forall}{\forall}
\renewcommand{\epsilon}{\varepsilon}
\DeclarePairedDelimiter\norm{\lVert}{\rVert}%
\let\oldabs\abs
\def\abs{\@ifstar{\oldabs}{\oldabs*}}
\let\oldnorm\norm
\def\norm{\@ifstar{\oldnorm}{\oldnorm*}}
\begin{document}

\title{Capacity estimation and verification \\ of quantum channels
  with arbitrarily correlated errors}
\author[1,2]{Corsin Pfister}
\author[1,3]{M. Adriaan Rol}
\author[1,4]{Atul Mantri}
\author[2,5]{Marco Tomamichel}
\author[1]{Stephanie Wehner}
\affil[1]{QuTech, Delft University of Technology, Lorentzweg 1, 2628 CJ
  Delft, Netherlands}
\affil[2]{Centre for Quantum Technologies, 3 Science
  Drive 2, Singapore 117543}
\affil[3]{Kavli Institute of Nanoscience, Delft University of Technology, P.O. Box 5046, 2600 GA Delft, The Netherlands}
\affil[4]{Singapore University of Technology and Design, 20 Dover Drive,
  Singapore 138682}
\affil[5]{University of Sydney, School of Physics, NSW 2006 Sydney,
  Australia}

\twocolumn[

\maketitle

\begin{abstract}
  One of the main figures of merit for quantum memories and quantum
  communication devices is their quantum capacity. It has been studied for
  arbitrary kinds of quantum channels, but its practical estimation has so far
  been limited to devices that implement independent and identically distributed
  (i.i.d.) quantum channels, where each qubit is affected by the same noise process.
  Real devices, however, typically exhibit correlated errors.

  Here, we overcome this limitation by presenting protocols that
  estimate a channel's \emph{one-shot} quantum capacity for the case where the
  device acts on (an arbitrary number of) qubits. The one-shot quantum capacity
  quantifies a device's ability to store or communicate quantum information,
  even if there are correlated errors across the different qubits.

  We present a protocol which is easy to implement and which comes in two
  versions. The first version estimates the one-shot quantum capacity by
  preparing and measuring in two different bases, where all involved qubits are
  used as test qubits. The second version verifies on-the-fly that a channel's
  one-shot quantum capacity exceeds a minimal tolerated value while storing or
  communicating data, therefore combining test qubits and data qubits in one
  protocol. We discuss the performance of our method using simple examples, such as the dephasing
  channel for which our method is asymptotically optimal. Finally, we apply our method to
  estimate the one-shot capacity in an experiment using a transmon qubit.
\end{abstract}

\vspace{1cm}
]

\section*{Introduction}

One of the main obstacles on the way to quantum computers and quantum
communication networks is the problem of noise due to imperfections in the
devices. Noise is caused by uncontrolled interactions of the quantum information
carriers with their environment. These interactions take place at all stages:
when the carriers are processed, when they are transmitted and when they are
stored. Physicists and engineers spend large efforts in developing noise
protection measures, and assessing their performance is crucial for the
development of quantum information processing devices. In this article, we focus
on the estimation of noise in the storage and transmission of the quantum
information carriers, that is, we describe methods to assess \emph{quantum
memory} and \emph{quantum communication} devices.

In the language of quantum information theory, memory and communication devices
are described by a \emph{quantum channel}, which is a function $\chan$ that maps
an input state $\rho_{\text{in}}$ of the device to its output state
$\rho_{\text{out}} = \chan(\rho_{\text{in}})$ (see Section~\ref{app:preliminaries} for
a precise definition). In this unified description, assessing the noise in a
quantum device reduces to estimating the decoherence of a quantum channel. One
way to achieve this is through \emph{quantum process tomography} \cite{CN97},
which aims at completely determining the channel from measurement data (see
e.g.~\cite{FR16,FBK16} for more recent works on tomography, and
e.g.~\cite{Gre15,PR04} for surveys on specific types of tomography). This comes
with two major disadvantages. Firstly, process tomography typically only works
for channels that behave the same way in every run of the experiment (formalized
by the \emph{i.i.d.  assumption} - for \emph{independent and identically
  distributed}), or under some symmetry assumptions. This assumption is violated
for many devices that are used in practice, which typically show correlated
errors. Secondly, since process tomography aims at a complete characterization
of the channel, it requires the collection of large amounts of data for many
combinations of input states and measurement settings.  A complete
characterization of a channel is certainly useful (as all properties of the
channel can be inferred from it), but it is very costly if the task at hand is
to simply estimate a figure of merit of the channel. For quantum storage and
quantum communication devices, a central figure of merit is the \emph{quantum
  capacity} of the channel, which quantifies the amount of quantum information
that can be stored or transmitted by the device~\cite{NC00}.  While the
deployment of a suitable error correcting code requires knowledge of the
specifics of the channel, an estimate of the quantum capacity is of great use
when assessing the usefulness of the tested device.

In this work, we present a method to estimate the \emph{one-shot} quantum
capacity $\Q^\epsilon(\chan)$ of a quantum channel $\chan$. While the quantum
capacity $\Q$ only makes statements for devices that behave identically under
many repeated uses, the one-shot quantum capacity $\Q^\epsilon$ applies to the
more general case of devices with arbitrarily correlated errors. It quantifies
the number of qubits that can be sent through the channel with a fidelity of at
least $1-\epsilon$ in a single use of the device using the best possible error correcting code
(we will explain this in more detail in the next section). We present a protocol
that allows to estimate $\Q^\epsilon(\chan)$ from data obtained from simple
measurements. In addition to dealing with arbitrarily correlated errors, it has
the advantage of requiring fewer measurement settings than quantum process
tomography. Our method can also be used to assess whether a possibly
imperfect error-correction scheme forms an improvement. This is the case if the error-corrected channel
has a higher capacity than what we would otherwise expect.

\section*{Results}

\subsection*{The one-shot quantum capacity}

Noise can be modelled as a channel $\chan$, which is given as a map
\begin{align}
  \label{eq:mem}
  \chan: \ca{S}(\ca{H}) \rightarrow \ca{S}(\ca{H}) \,,
\end{align}
where $\ca{S}(\ca{H})$ denotes the set of quantum states on the Hilbert space of the system that is being
stored or transmitted.
% When a channel $\chan$ is given by $N$ uses of the same
% device, then $\ca{H} = \ca{H}_i^{\otimes N}$, where $\ca{H}_i$ is the Hilbert
% space for a single use of the channel. Under the idealization of the i.i.d.
% assumption, the channel is given by
% \begin{align}
%   \chan = \chan_i^{\otimes N} \,, \quad \chan_i: \ca{S}(\ca{H}_i) \rightarrow
%   \ca{S}(\ca{H}_i)
% \end{align}
% where $\chan_i$ is the channel for a single use of the device. Here, we consider
% the more realistic case of correlated errors, where the channel $\chan$ does not
% have such a product form. We switch to the fully general picture in which
% multiple uses of the device correspond to a single use of a larger channel that
% transforms all systems jointly (see the black box in Figure~\ref{fig:coding}). This
% includes all configurations of multiple uses of a device as a special case, no
% matter whether the processing is sequential or in parallel.
%The discussion of
%the one-shot quantum capacity in this section holds generally for all kinds of
%channels.
For reasons of illustration, we will discuss channels of storage
devices here, but mathematically, nothing is different for communication
devices. In the realm of communication, it is convenient to think of a sender (Alice) who wants
to relay qubits to a receiver (Bob). For memory device, Alice and Bob simply label the input and output.

Consider a quantum memory device designed for storing a quantum system with
Hilbert space $\ca{H}$ for some time interval $\Delta t$. Ideally, it leaves the
state of the system completely invariant over that time span, but real storage
devices are always subject to noise. A measure for how well the channel $\chan$
preserves the state of the system is obtained by minimizing the square of the
\emph{fidelity} between the input state $\ket{\phi}$ and the output state
$\chan(\phi)$,
\begin{align}
  F(\ket{\phi}, \chan(\phi)) = \sqrt{ \bra{\phi} \chan(\phi) \ket{\phi} } \,,
\end{align}
over all possible input states $\ket{\phi} \in \ca{H}$,
\begin{align}
  \label{eq:minf}
  \min_{\ket{\phi} \in \ca{H}} F^2(\ket{\phi}, \chan(\phi)) =
  \min_{\ket{\phi} \in \ca{H}} \bra{\phi} \chan(\phi) \ket{\phi} \,.
\end{align}
Low values of the quantity \eqref{eq:minf} imply that if the device is used
without modification, then at least some states of the system are strongly
affected by the channel, therefore introducing errors. However, this does not
necessarily mean that the device is useless as a storage device, as this
quantity does not account for the possibility that such errors can be corrected
using \emph{quantum error correction (QEC)}.

\begin{figure}[t]
  \begin{center}
  \includegraphics[width=\linewidth]{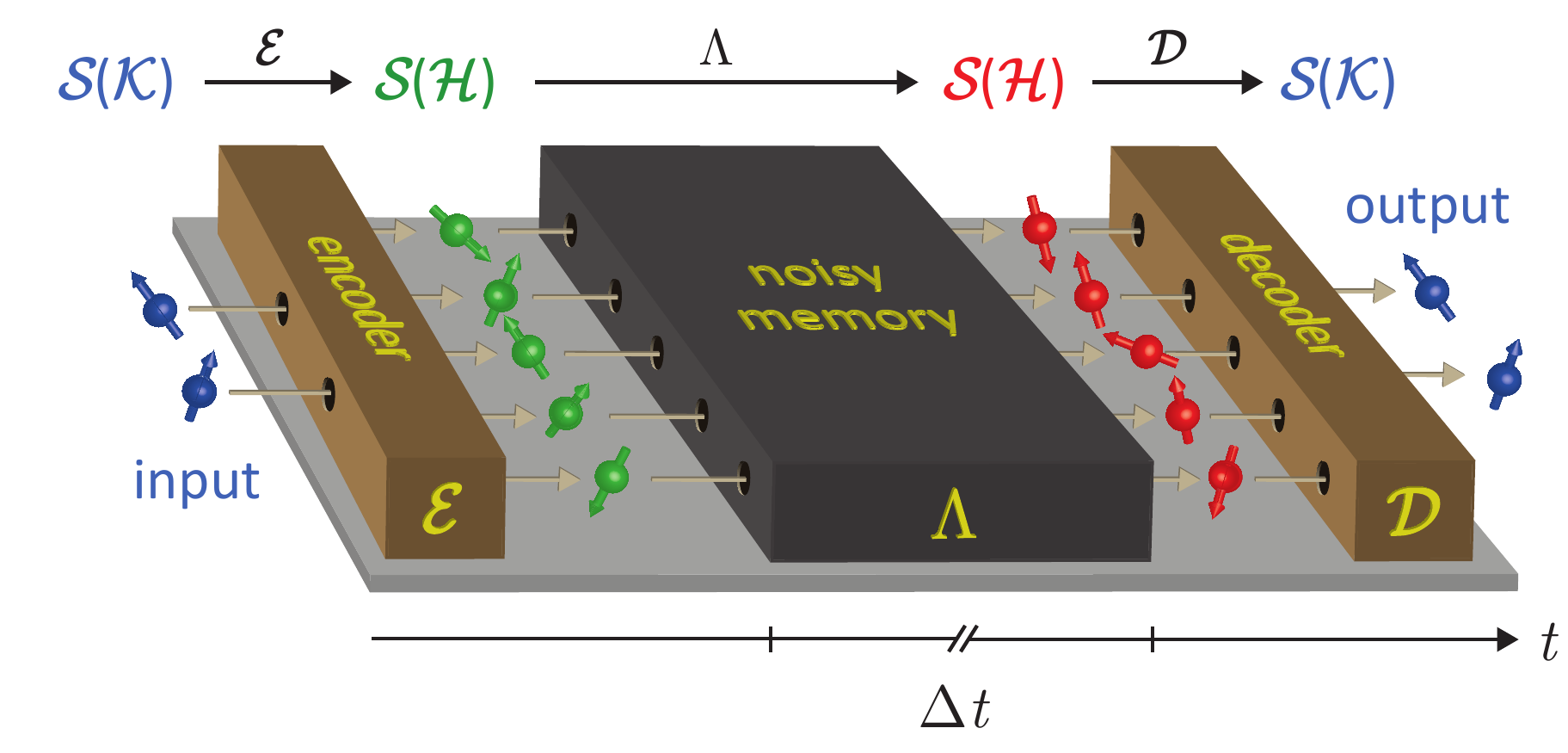}
  \caption{\textbf{Time diagram of an error-corrected quantum memory.}
    An error-correcting code can turn a noisy quantum memory for some system
    with a Hilbert space $\ca{H}$ into an approximately noise-free memory for
    some smaller system with a lower-dimensional Hilbert space $\ca{K}$.
	Such a
    code consists of an \emph{encoder} $\ca{E}$, which is applied before the
    quantum memory, and a \emph{decoder} $\ca{D}$, which is applied after the
    quantum memory. The encoder maps the state space $\ca{K}$ of the smaller
    system into a subspace $\ca{H}' \subseteq \ca{H}$ of the larger system that
    is stored by the quantum memory, so it implements an encoding channel
    $\ca{E}: \ca{S}(\ca{K}) \rightarrow \ca{S}(\ca{H})$. The goal is to design
    the encoder such that the image $\ca{E}(\ca{S}(\ca{K})) = \ca{S}(\ca{H}')
    \subseteq \ca{S}(\ca{H})$ is a subspace that is left approximately intact by
    the quantum memory. Then, the decoder can be chosen such that it implements
    a channel $\ca{D}: \ca{S}(\ca{H}) \rightarrow \ca{S}(\ca{K})$ which maps
    that intact subspace back to the state space of the smaller system.  This
    leads to an error-corrected memory for the smaller system which implements
    the channel $\ca{D} \circ \chan \circ \ca{E}: \ca{S}(\ca{K}) \rightarrow
    \ca{S}(\ca{K})$. Note that this figure shows a time diagram, so the three
    devices are not necessarily placed in the same spatial order as they appear
    in the figure.
    \label{fig:coding}}
  \end{center}
\end{figure}

An error-correcting code for a channel $\chan$ consists of an \emph{encoding}
$\ca{E}$, which is applied before the channel, and a \emph{decoding} $\ca{D}$,
which is applied after the channel (see the explanations in Figure~\ref{fig:coding}).
Together, these devices form an error-corrected quantum memory for a smaller
system, implementing a channel
\begin{align}
  \ca{D} \circ \chan \circ \ca{E}: \ca{S}(\ca{K}) \rightarrow \ca{S}(\ca{K}) \,,
\end{align}
where $\ca{K}$ is the Hilbert space of the smaller system and where $\circ$
denotes the composition of maps. Instead of evaluating the quantity
\eqref{eq:minf} for the channel~$\chan$ directly, it should be evaluated for
such a corrected channel $\ca{D} \circ \chan \circ \ca{E}$. A figure of merit
for the usefulness of the quantum memory is then given by the size of the
largest system $\ca{K}$ that can be stored in the memory using such an
error-correcting code. This is identical to the largest subspace $\ca{H}'
\subseteq \ca{H}$ that is left approximately invariant by the memory. This is
quantified by the \emph{one-shot quantum capacity} $\Q^\epsilon(\chan)$,
defined by \cite{BD10,TBR16}
\begin{align}
  \label{eq:cap}
  \Q^\epsilon(\chan) := \max \{ \log m \mid F_\text{min}(\chan, m) \geq 1 -
  \epsilon \} \,,
\end{align}
where
\begin{align}
  \label{eq:fmin}
  F_\text{min}(\chan, m) :=
  \max_{\ca{H}' \subseteq \ca{H} \atop \dim(\ca{H}') = m} \max_{\ca{D}}
  \min_{\ket{\phi} \in \ca{H}'} \bra{\phi} (\ca{D} \circ \chan)(\phi)
  \ket{\phi}
\end{align}
and where the inner maximum is taken over all possible decoders $\ca{D}: \ca{S}(\ca{H})
\rightarrow \ca{S}(\ca{H})$. The logarithm in equation \eqref{eq:cap} (and in
the rest of this article) is taken with respect to base 2, i.e. $\log \equiv
\log_2$. This way, the one-shot quantum capacity corresponds to the maximal
number of qubits that can be stored and retrieved with a fidelity of at least
$1-\epsilon$ using the best possible error correcting code.

The one-shot quantum capacity tells us strictly more than the asymptotic quantum
capacity, in the sense that the latter can be obtained from the former:
\begin{align}
  \label{eq:asympcap}
  \Q(\chan) = \lim_{\epsilon \to 0} \lim_{N \to \infty} \frac{1}{N}
  \Q^\epsilon(\chan^{\otimes N}) \,.
\end{align}
The asymptotic quantum capacity is the number of qubits that can be transmitted
or stored per use of a device with asymptotically vanishing error, in the limit where it is used infinitely often
under the i.i.d. assumption. %\footnote{
  % Interestingly, the question of whether the limit $\epsilon \to 0$ in
  % \cref{eq:asympcap} is relevant, i.e. whether allowing a finite
  % error increases the \emph{asymptotic rate}, is an open question \cite{TBR16}.
  % It has been shown to be irrelevant for some class of channels \cite{TWW15}.
  % For finitely many channel uses, the allowed error rate certainly matters.}
Therefore, it is an asymptotic \emph{rate}, while
the one-shot quantum capacity is the total number of qubits that can be
transmitted or stored in a single use of a (possibly non-tensor product) channel,
allowing some error $\epsilon \geq 0$.

\begin{table}[h]
  \begin{framed}
    \begin{center}
      \textbf{One-shot quantum capacity estimation}
    \end{center}
    \hrule
    \begin{center} \textbf{Protocol parameter} \end{center}
    \begin{itemize}[leftmargin=0.4cm]
      \item $N \in \mathbb{N}$, even: total number of qubits
    \end{itemize}
    \begin{center} \textbf{The protocol} \end{center}
    \begin{itemize}[leftmargin=0.4cm]
      \item Alice chooses $s \in \{0,1\}^{N}$ and $b \in \{\xb,\zb\}^{N}_{N/2}$
	fully at random and communicates them to Bob, where
	\begin{align*}
	  \{\xb,\zb\}^{N}_{N/2} = \left\{ b \in \{\xb,\zb\}^{N} \,\middle\vert\,
	    \parbox{2.5cm}{X, Z
	  \textnormal{ each occur $N/2$ times in $b$}} \right\} \,.
	\end{align*}
      \item For each qubit slot $i = 1, \ldots, N$ of the channel, Alice
	prepares a test qubit $i$ in the state $s_i$ with respect to basis $b_i
	\in \{\xb,\zb\}$ and sends it through the channel to Bob.
      \item For each qubit $i = 1, \ldots, N$ that Bob receives, he measures
	test qubit $i$ in the basis $b_i$ and records the outcome $s'_i \in
	\{0,1\}$.
      \item Bob determine the error rates
	\begin{align*}
	  \exest = \frac{2}{N} \sum_{i \in \ix} s_i \oplus s'_i \,, \qquad
	  \ezest = \frac{2}{N} \sum_{i \in \iz} s_i \oplus s'_i \,,
	\end{align*}
	where
	\begin{align*}
	  &\ix = \{ i \in \{1, \ldots, N\} \mid b_i = \xb \} \,, \\
	  &\iz = \{ i \in \{1, \ldots, N\} \mid b_i = \zb \} \,.
	\end{align*}
      \item Knowing the two error rates $\exest$ and $\ezest$, Bob determines a
	lower bound on the one-shot quantum capacity according to
	Theorem~\ref{thm:estimation}.
    \end{itemize}
  \end{framed}
  \caption{\textbf{The estimation protocol.}
    \label{prot:estimation}}
\end{table}

\subsection*{One-shot quantum capacity estimation}

Now that the one-shot quantum capacity is identified as the relevant figure of
merit for quantum memory and communication devices, the question is whether we
can estimate this quantity for a given device.
% Otherwise, the one-shot quantum
% capacity would only be of theoretical interest but not practically accessible.
We answer this question in the affirmative for the case where
$\chan$ is a channel that stores or communicates (arbitrarily many) qubits.
% Since---in analogy to classical computers---any quantum computation can be
% implemented by a quantum computer composed of qubits, this is a hugely important
% and relatively unrestrictive special case.

We present a simple protocol (see Protocol~\ref{prot:estimation}) that estimates the
one-shot quantum capacity $\Q^\epsilon(\chan)$ for an $N$-to-$N$-qubit channel
$\chan$. Our protocol only requires the preparation and measurement of single qubit states in two bases.
Specifically, even though it is known that the optimal encoder for a given channel $\Lambda$ may require
the creation of a highly entangled state, no entanglement is required to execute our test.
For simplicity, we assume here that $N$ is an even number (for more
general cases, see Section~\ref{app:final}). The protocol does not make any assumption
on whether the qubits are processed sequentially, as in communication devices,
or in parallel, as in storage devices (potentially with correlated errors in
both cases). The data collection of the protocol is
very simple. Alice and Bob agree on two qubit bases $\xb$ and $\zb$. These two
bases should be chosen to be ``incompatible'', in the sense that the
\emph{preparation quality} $q$, which is defined as
\begin{align}
  \label{eq:q}
  q = -\log \max_{i,j=0,1} \abs{ {\langle i_\xb | j_\zb \rangle } }^2 \,,
\end{align}
is as high as possible, where $\ket{i_\xb}$ and $\ket{j_\zb}$ are eigenstates of
$X$ and $Z$, respectively. In the ideal case, where the two bases $\xb$ and
$\zb$ are mutually unbiased bases (MUBs), such as the Pauli-$X$ and $Z$ basis,
it holds that $q=1$. Our protocol can be seen as exploiting the idea that the
ability to transmit information in two complementary bases relates to a
channel's ability to convey (quantum) information~\cite{RB09,CW05}, which
we show holds even with correlated noise. We remark that Pauli-$X$ and $Z$ basis
have also been used to estimate the process fidelity of a quantum operation~\cite{hofman,hofmanFollowup}
in the i.i.d. case, which however we are precisely trying to avoid here.

% Alice sends $N$ test qubits through the channel, which she prepares as follows.
% She chooses half of the qubits at random and prepares each of them in one of the
% two $\xb$-eigenstates, chosen at random. The other half is prepared in one of
% the two $\zb$-eigenstates at random. Then she tells Bob all her basis and state
% choices and sends the qubits to him. Bob measures each qubit in the basis in
% which it has been prepared and checks whether he measures it in the same state
% as it has been prepared. If the outcome does not correspond to the prepared
% state, he records that as an error. This way, Bob determines the error rate
% $\exest$ and $\ezest$ in the $\xb$- and $\zb$-qubits, respectively.  These are
% the only two parameters that need to be measured in the protocol.  Given these
% two error rates, Bob can calculate a bound on the one-shot quantum capacity of
% the channel, which is given as a simple optimization problem (see
% \cref{thm:estimation} below).
The bound for the capacity estimate is a function of the number of qubits
$N$, the preparation quality $q$, the maximally allowed decoding error
probability $\epsilon$ of $\Q^\epsilon(\chan)$, the two measured error rates
$\exest$ and $\ezest$, and some probability $p$ that quantifies the typicality
of the protocol run (we will discuss this parameter in the Discussion section).
More precisely, the bound is given as follows.

\begin{thm}
  \label{thm:estimation}
  Let $N \in \mathbb{N}_+$ be an even number, let $\exest$ and $\ezest$ be error
  rates determined in a run of Protocol~\ref{prot:estimation} where the used bases $X$
  and $Z$ had a preparation quality of $q$ (see equation \eqref{eq:q} above).
  Then, for every $\epsilon > 0$ and for every $p \in [0,1)$, it holds that
  \begin{itemize}
    \item either, the probability that at least one error rate exceeds
      $\exest$ or $\ezest$, respectively, was higher than $p$,
    \item or the one-shot quantum capacity of the $N$-qubit channel
      $\chan$ is bounded by
  \end{itemize}
  \pagebreak
  \begin{widetext}
      \begin{align}
	\label{eq:bound}
	\Q^\epsilon(\chan) \geq \sup_{\eta \in \left(0,\sqrt{\epsilon/2}\right)}
	\Bigg[ N \Big( q - h\left(\exest + \mu\right) - h\left(\ezest + \mu\right)
	\Big) - 2 \log\left( \kappa \right) - 4 \log \left( \frac{1}{\eta} \right)
	- 2 \Bigg] \,,
      \end{align}
  \end{widetext}
  where $h$ is the binary entropy function
  \begin{align}
    h(x) := - x \log (x) - (1-x) \log (1-x)
  \end{align}
  and $\mu$ and $\kappa$ are given by
  \begin{align}
    \label{eq:kappa}
    \mu = \sqrt{ \frac{N+2}{N^2} \ln \left(
      \frac{ 3+\frac{5}{\sqrt{1-p}} }{ \sqrt{\epsilon / 2} -
      \eta} \right) } \,, \quad
    \kappa = 2 \left(\frac{3+\frac{5}{\sqrt{1-p}}}
      {\sqrt{\epsilon/2} - \eta}\right)^2 \,.
  \end{align}
\end{thm}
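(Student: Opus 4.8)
The plan is to pass to the entanglement-based picture and then split the estimate into an information-theoretic part and a statistical part. First I would rewrite Protocol~\ref{prot:estimation} in an equivalent virtual form: rather than sending random $\xb$- and $\zb$-eigenstates, imagine Alice preparing $N$ maximally entangled pairs, retaining a reference register $R$ and sending the halves through $\chan$. Fixing a Stinespring dilation of $\chan$ gives a pure tripartite state $\omega_{RBE}$, with $B$ Bob's output and $E$ the environment. Since measuring a maximally entangled pair in a fixed basis on both sides reproduces exactly the prepare-and-measure statistics, the rates $\exest$ and $\ezest$ are faithful estimators of the $\xb$- and $\zb$-basis disagreement patterns of $\omega_{RBE}$; this is what lets measurable data speak about a quantity, $\Q^\epsilon(\chan)$, that is defined through the optimal (possibly highly entangled) code.

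Next I would invoke a one-shot, hashing-type lower bound on the capacity, viewing quantum communication as coherent secret-key extraction in the $\xb$ basis: $\Q^\epsilon(\chan) \ge \Hmin^{s}(\mathbb{X}\mid E)_\omega - \Hmax^{s}(\mathbb{X}\mid B)_\omega - c$, where $\mathbb{X}$ is the outcome of an $\xb$-basis measurement of $R$, the smoothing parameter $s$ is tied to $\eta$, and $c$ collects the conversion overhead. This step is the source of the corrections $-2\log(\kappa)-4\log(1/\eta)-2$ and of the admissible range $\eta\in(0,\sqrt{\epsilon/2})$ in \eqref{eq:bound}. I would then apply the entropic uncertainty relation with quantum side information, in its smooth form, to trade the privacy term for a complementary-basis statement, $\Hmin^{s}(\mathbb{X}\mid E)_\omega \ge N q - \Hmax^{s}(\mathbb{Z}\mid B)_\omega$, where $q$ is the preparation quality \eqref{eq:q} and $\mathbb{Z}$ is the $\zb$-basis outcome of $R$. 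Combined, these give $\Q^\epsilon(\chan)\ge N q - \Hmax^{s}(\mathbb{Z}\mid B)_\omega - \Hmax^{s}(\mathbb{X}\mid B)_\omega - c$, already exhibiting the characteristic two-term structure of the claim.

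Finally I would bound each smooth max-entropy by the binary entropy of an inflated error rate. Data processing lets me replace Bob's quantum register $B$ by his actual measurement outcome, turning $\Hmax^{s}(\mathbb{Z}\mid B)$ into the classical max-entropy of an error pattern. The hard part is this last estimation step, and it is where the result genuinely goes beyond the i.i.d.\ case: each basis is measured only on a random half of the $N$ qubits, the two bases are incompatible so their statistics cannot be collected jointly, and the errors may be arbitrarily correlated. To control this I would use a \emph{quantum-sampling} / random-subset argument (in the spirit of Bouman and Fehr) certifying that, except with probability at most $p$, the error pattern on the full $N$-qubit register has relative weight at most $\exest+\mu$ (respectively $\ezest+\mu$). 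This yields $\Hmax^{s}(\mathbb{X}\mid B)\le N\,h(\exest+\mu)$ and likewise for $\zb$, and it is precisely this concentration estimate that fixes $\mu$ and $\kappa$ as in \eqref{eq:kappa}.

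Assembling the three bounds and optimising over the free parameter $\eta$ produces \eqref{eq:bound}, while the two-sided reading of the ``except with probability at most $p$'' clause gives the dichotomy in the statement. I expect the sampling step to be the main obstacle: the information-theoretic reductions are essentially off-the-shelf smooth-entropy manipulations, but making the transfer from a randomly chosen, singly-measured half to a high-confidence smooth-max-entropy bound on the full virtual state rigorous — simultaneously for two non-commuting observables and without any i.i.d.\ or de~Finetti assumption — is the delicate part that determines the exact form of $\mu$ and $\kappa$.
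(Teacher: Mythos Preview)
Your overall architecture---pass to the entanglement-based picture, reduce to $Nq - \Hmax(X\mid B) - \Hmax(Z\mid B)$ minus overhead, then bound each max-entropy by a Serfling-type sampling estimate---matches the paper's, and you correctly locate the dichotomy and the origin of $\mu$ in the statistical step (the paper uses Serfling's inequality directly rather than Bouman--Fehr, but that is a minor stylistic difference).

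Where your route diverges is the information-theoretic reduction. You invoke a one-shot ``hashing-type'' lower bound $\Q^\epsilon(\chan)\ge \Hmin^{s}(\mathbb{X}\mid E)-\Hmax^{s}(\mathbb{X}\mid B)-c$ with \emph{classical} $\mathbb{X}$, and then apply the uncertainty relation once to $\Hmin^{s}(\mathbb{X}\mid E)$. The paper does not do this. It first bounds $\Q^\epsilon(\chan)$ by the \emph{quantum} conditional min-entropy $\Hmin^{\sqrt{\epsilon/2}-\eta}(A\mid E)$ (via Buscemi--Datta and Morgan--Winter), and then---flagged explicitly as the main technical contribution---proves
\[
\Hmin^{3\delta+\delta'+4\delta''}(A\mid E)\ \ge\ Nq - \Hmax^{\delta''}(X\mid B)-\Hmax^{\delta'}(Z\mid B)-2\log\frac{2}{\delta^{2}}
\]
by combining the smooth uncertainty relation with \emph{two} applications of the smooth max-entropy chain rule, min--max duality, and the observation $\Hmax(A\mid X)=0$. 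The chain-rule applications are what generate the $-2\log\kappa$ term and the particular smoothing combination $3\delta+\delta'+4\delta''$ which, after the substitution $\delta=(\sqrt{\epsilon/2}-\eta)/(3+5/\sqrt{1-p})$, yields the exact constants in~\eqref{eq:kappa}.

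So your proposal hides the paper's hardest step inside an assumed lemma. The paper's own discussion makes the point sharply: the available one-shot achievability results give $\Hmin(A\mid E)$ for a \emph{quantum} register $A$, and passing from that to classical $X,Z$ is precisely the gap being bridged. If you can supply or cite a one-shot coherent-CSS/hashing bound of the form you state (with the right smoothing and error parameters), your route would be a genuine and arguably cleaner alternative; but as written it is not off-the-shelf, and justifying it would most likely require the same duality/chain-rule machinery the paper uses. Treat that step as the real obstacle, not the sampling.
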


In the asymptotic limit where $N \to \infty$, the bound on the right hand side
of inequality \eqref{eq:bound} converges to $N(q - h(\exest) - h(\ezest))$. All
the other terms can be seen as correction terms that account for finite-size
effects. We will discuss this in more detail in the Discussion section below.

\subsection*{One-shot capacity verification}

Protocol~\ref{prot:estimation} above estimates how much quantum information can be
stored in a quantum memory device. This is of great use when the task is to
figure out whether a device is potentially useful as a quantum memory device.
When eventually, an error-correcting code is implemented, the corrected memory
might be used without further testing.

In some cases, however, one wants to implement the memory with a means to verify
its quality while using it. For example, one may suspect the quality of the
memory to diminish (say, due to damage or overuse). In that case, the capacity
estimation that was made before the implementation of the error-correcting code
may no longer be valid. A method to verify that the quality of the memory is
good enough for the implemented code may be required whenever it is used.
Protocol~\ref{prot:verification} shows such a verification protocol.

\begin{table}
  \begin{framed}
    \begin{center}
      \textbf{One-shot quantum capacity verification}
    \end{center}
    \hrule
    \begin{center} \textbf{Protocol parameters} \end{center}
    \begin{itemize}[leftmargin=0.4cm]
      \item $N \in \mathbb{N}$: number of data qubits
      \item $\etolx, \etolz \in [0,1]$: tolerated error rate in $X$, $Z$
    \end{itemize}
    \begin{center} \textbf{The protocol} \end{center}
    \begin{itemize}[leftmargin=0.4cm]
      \item Alice chooses $s \in \{0,1\}^{3N}$ and $b \in
	\{\xb,\zb,\nb\}^{3N}_N$ fully at random and communicates them to Bob,
	where
	\begin{align*}
	  \{\xb,\zb,\nb\}^{3N}_N = \left\{ b \in \{\xb,\zb,\nb\}^{3N} \,
	    \middle\vert \, \parbox{1.5cm}{\raggedright $X$, $Z$, $D$
	      \textnormal{ occur $N$ times in $b$}} \right\}.
	\end{align*}
      \item For each qubit slot $i = 1, \ldots, {3N}$ of the channel, if $b_i
	\in \{\xb,\zb\}$, Alice prepares a test qubit $i$ in the state $s_i$
	with respect to basis $b_i \in \{\xb,\zb\}$ and sends it through the
	channel to Bob. If $b_i = \nb$, Alice uses the slot for a data qubit.
      \item For each qubit $i = 1, \ldots, {3N}$ that Bob receives, if $b_i \in
	\{\xb,\zb\}$, Bob measures test qubit $i$ in the basis $b_i$ and records
	the outcome $s'_i \in \{0,1\}$. If $b_i = D$, Bob leaves the data qubit
	untouched.
      \item They determine the error rates
	\begin{align*}
	  \ex = \frac{1}{N} \sum_{i \in \ix} s_i \oplus s'_i \,, \qquad
	  \ez = \frac{1}{N} \sum_{i \in \iz} s_i \oplus s'_i \,,
	\end{align*}
	where
	\begin{align*}
	  &\ix = \{ i \in \{1, \ldots, {3N}\} \mid b_i = \xb \} \,, \\
	  &\iz = \{ i \in \{1, \ldots, {3N}\} \mid b_i = \zb \} \,.
	\end{align*}
	If $\ex \leq \etolx$ and $\ez \leq \etolz$, they continue with the
	conclusion below. Otherwise, they abort the protocol.
      \item They conclude that the one-shot quantum capacity of the channel
	$\chan$ on the $N$ data qubits is bounded as in Theorem~\ref{thm:verification}.
    \end{itemize}
  \end{framed}
  \caption{\textbf{The verification protocol.}
    \label{prot:verification}}
\end{table}

The protocol assumes that Alice holds $N$ data qubits that she wants to send to
Bob in a way that allows her to verify the quality of the transmission. To this
end, she uses a channel for $3N$ qubits and places her $N$ data qubits in random
slots of this channel. The other $2N$ slots are used for test qubits, half of
which are prepared and measured in the $X$-basis and half of which are prepared
and measured in the $Z$-basis (just as in the estimation protocol), while Alice
and Bob leave the data qubits untouched. The error rates on the test bits allows
to infer a bound on the capacity of the channel on the data qubits. This is
shown in Figure~\ref{fig:verification}.

\begin{figure}[h!]
  \centering
  \subcaptionbox{inference structure of the estimation protocol}
  {
    \begin{tikzpicture}[xscale=1,yscale=0.7]
      \node (x) {$X$-qubits};
      \draw ($(x) + (-1.5,-0.4)$) rectangle ($(x) + (1.5 ,0.4)$);
      \node at ($(x) + (4,0)$) (z) {$Z$-qubits};
      \draw ($(z) + (-1.5,-0.4)$) rectangle ($(z) + (1.5 ,0.4)$);
      \draw[decorate,decoration={brace,amplitude=5pt,mirror,raise=0.05cm}]
	($(x) + (1.5,0.7)$) -- ($(x) + (-1.5,0.7)$) node[midway,yshift=0.6cm]
	{$\exest$};
      \draw[decorate,decoration={brace,amplitude=5pt,mirror,raise=0.05cm}]
	($(z) + (1.5,0.7)$) -- ($(z) + (-1.5,0.7)$) node[midway,yshift=0.6cm]
	{$\ezest$};
      \draw[decorate,decoration={brace,amplitude=5pt,mirror,raise=0.05cm}]
	($(z) + (1.5,2)$) -- ($(x) + (-1.5,2)$) node[midway,yshift=0.6cm]
	{$\Q^\epsilon(\chan)$};
      \draw[->] ($(z) + (0,1.8)$) to[bend right=50] ($(z) + (-1.3,2.9)$);
      \draw[->] ($(x) + (0,1.8)$) to[bend left=50] ($(x) + (1.3,2.9)$);

      \draw[decorate,decoration={brace,amplitude=5pt,raise=0.05cm}]
	($(x) + (1.5,-0.7)$) -- ($(x) + (-1.5,-0.7)$) node[midway,yshift=-0.5cm]
	{$N/2$};
      \draw[decorate,decoration={brace,amplitude=5pt,raise=0.05cm}]
	($(z) + (1.5,-0.7)$) -- ($(z) + (-1.5,-0.7)$) node[midway,yshift=-0.5cm]
	{$N/2$};
      \draw[decorate,decoration={brace,amplitude=5pt,raise=0.05cm}]
	($(z) + (1.5,-1.7)$) -- ($(x) + (-1.5,-1.7)$) node[midway,yshift=-0.5cm]
	{$N$};
    \end{tikzpicture}
  }
  \subcaptionbox{inference structure of the verification protocol}
  {
    \begin{tikzpicture}[xscale=0.7,yscale=0.7]
      \node (x) {$X$-qubits};
      \node at ($(x) + (-4,0)$) (d) {data qubits};
      \node at ($(d) + (-4,0)$) (z) {$Z$-qubits};
      \draw ($(x) + (-1.5,-0.4)$) rectangle ($(x) + (1.5 ,0.4)$);
      \draw ($(d) + (-1.5,-0.4)$) rectangle ($(d) + (1.5 ,0.4)$);
      \draw ($(z) + (-1.5,-0.4)$) rectangle ($(z) + (1.5 ,0.4)$);
      \draw[decorate,decoration={brace,amplitude=5pt,mirror,raise=0.05cm}]
	($(d) + (1.5,0.7)$) -- ($(d) + (-1.5,0.7)$) node[midway,yshift=0.6cm]
	{$\Q^\epsilon(\chan)$};
      \draw[decorate,decoration={brace,amplitude=5pt,mirror,raise=0.05cm}]
	($(z) + (1.5,0.7)$) -- ($(z) + (-1.5,0.7)$) node[midway,yshift=0.6cm]
	{$\ez \leq \etolz$};
      \draw[decorate,decoration={brace,amplitude=5pt,mirror,raise=0.05cm}]
	($(x) + (1.5,0.7)$) -- ($(x) + (-1.5,0.7)$) node[midway,yshift=0.6cm]
	{$\ex \leq \etolx$};
      \draw[->] ($(z) + (0,2)$) to[bend left=50] ($(d) + (-0.5,2)$);
      \draw[->] ($(x) + (0,2)$) to[bend right=50] ($(d) + (0.5,2)$);

      \draw[decorate,decoration={brace,amplitude=5pt,raise=0.05cm}]
	($(x) + (1.5,-0.7)$) -- ($(x) + (-1.5,-0.7)$) node[midway,yshift=-0.5cm]
	{$N$};
      \draw[decorate,decoration={brace,amplitude=5pt,raise=0.05cm}]
	($(z) + (1.5,-0.7)$) -- ($(z) + (-1.5,-0.7)$) node[midway,yshift=-0.5cm]
	{$N$};
      \draw[decorate,decoration={brace,amplitude=5pt,raise=0.05cm}]
	($(d) + (1.5,-0.7)$) -- ($(d) + (-1.5,-0.7)$) node[midway,yshift=-0.5cm]
	{$N$};
    \end{tikzpicture}
  }
  \caption{\textbf{Comparison of the inference structures of the two protocols.}
    \textbf{(a)} In the estimation protocol, all qubits are test qubits, and the
    goal is to estimate the capacity for the channel on all qubits. \textbf{(b)}
    In the verification protocol, one third of the qubits are data qubits that
    are left untouched. The remaining $2N$ qubits are test qubits, whose error
    rates allow to bound the capacity of the channel on the $N$ data qubits.
    \label{fig:comparison}}
\end{figure}

For this protocol, we denote the measured error rate in $X$ by $\ex$ and the
measured error rate in $Z$ by $\ez$.  Bob checks whether these error rates
exceed some tolerated values $\etolx$ and $\etolz$, respectively, which has been
specified before the protocol run. If one or both error rates exceed the
tolerated value, the protocol aborts because the transmission quality is
considered too low. If both error rates are below their tolerated value, Bob
concludes that the transmission was of high quality, in the sense that the
channel on the data qubits had a high one-shot quantum capacity.  This is stated
more precisely in the following theorem.

\begin{thm}
  \label{thm:verification}
  Let $N \in \mathbb{N}_+$, let $\etolx$, $\etolz \in [0,1]$. Assume that
  Protocol~\ref{prot:verification} is run successfully without abortion, where the used
  bases $X$ and $Z$ had a preparation quality of $q$. Then, for every $\epsilon
  > 0$ and for every $p \in [0,1)$, it holds that
  \begin{itemize}
    \item either, the probability that the protocol aborts was higher than
      $p$,
    \item or the one-shot quantum capacity of the channel $\chan$ on the
      $N$ data qubits is bounded by inequality~\eqref{eq:bound}, where $\kappa$
      is as in equation \eqref{eq:kappa} and where $\mu$ is given by
  \end{itemize}
  \begin{align}
    \mu = \sqrt{ \frac{2(N+1)}{N^2} \ln \left(
      \frac{ 3+\frac{5}{\sqrt{1-\pabort}} }{ \sqrt{\epsilon / 2} - \eta}
      \right) } \,.
  \end{align}
\end{thm}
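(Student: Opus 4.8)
The plan is to run the argument in parallel with the proof of Theorem~\ref{thm:estimation}: the quantum (entropic) core is essentially identical, and only the classical sampling step — the inference from tested to untested qubits — has to be redone, which is exactly what changes $\mu$ while leaving the bound \eqref{eq:bound} and the constant $\kappa$ untouched. First I would recall the entropic characterisation of the one-shot capacity from \cite{BD10,TBR16}, which lower-bounds $\Q^\epsilon(\chan)$ restricted to the $N$ data slots by a smooth-min-entropy (decoupling) functional of the data register $A$ against the environment $E$ of a dilation of $\chan$. Applying the entropic uncertainty relation with quantum side information to the complementary $\xb$- and $\zb$-measurements on $A$ converts this functional into the two conditional max-entropies $\Hmax(\xb_A\mid B)$ and $\Hmax(\zb_A\mid B)$, with $B$ Bob's register, producing the per-qubit budget $q-h(\cdot)-h(\cdot)$ together with the smoothing overhead $-2\log\kappa-4\log(1/\eta)-2$. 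Bounding each max-entropy by $N\,h(\bar e)$ in terms of the average disagreement $\bar e$ between Alice's prepared string and Bob's guess then reduces everything to controlling the \emph{virtual} error rates $\bar\ex,\bar\ez$ that the data qubits would exhibit if they were measured in the respective bases.

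The heart of the matter, and the only genuinely new step, is the sampling estimate relating these virtual rates to the test rates. Because the channel acts before the assignment $b\in\{\xb,\zb,\nb\}^{3N}_N$ is revealed and this assignment is uniformly random, the channel cannot tell data slots from test slots; conditioning on the set of $\zb$-slots, the $\xb$-test and data slots form a uniformly random balanced bipartition of the remaining $2N$ qubits. I would therefore define $\bar\ex$ through a hypothetical $\xb$-measurement of the data qubits and apply a Serfling-type tail bound for sampling \emph{without} replacement in its unsampled-group form, with the $N$ $\xb$-test qubits as the sampled half and the $N$ data qubits as the unsampled half of a size-$2N$ population. The finite-population correction $1-(n-1)/M$ at $n=N$, $M=2N$ is what yields the factor $N+1$, and the unsampled-group rescaling accounts for the remaining numerical factors, giving $\mu=\sqrt{\tfrac{2(N+1)}{N^2}\ln(\cdots)}$; by contrast Theorem~\ref{thm:estimation} estimates a \emph{total} mean from a size-$N/2$ sample of a size-$N$ population, whose correction $1-(n-1)/M$ produces $N+2$ instead.

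Combining the sampling estimate with the abort condition produces the dichotomy. If the abort probability exceeds $\pabort$ we are in the first horn. Otherwise the pass event has probability at least $1-\pabort$, and conditioning on it inflates the relevant probabilities by at most $1/(1-\pabort)$; carrying this through the smoothing (which introduces the square root) is what turns the bare tail probability into the factor $3+5/\sqrt{1-\pabort}$ appearing in $\mu$ and $\kappa$. On the pass event the measured rates obey $\ex\le\etolx$ and $\ez\le\etolz$, so the virtual rates are at most $\etolx+\mu$ and $\etolz+\mu$; this is exactly the content of \eqref{eq:bound} read with the tolerated rates in place of the measured ones (note that in the notation of the paper these coincide). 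Identifying the failure probability of the estimate with $\sqrt{\epsilon/2}-\eta$ and taking the supremum over $\eta\in(0,\sqrt{\epsilon/2})$ then reproduces \eqref{eq:bound} verbatim with the stated $\mu$ and $\kappa$.

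I expect the main obstacle to be this sampling step under arbitrarily correlated noise. One must make the virtual error rate of the never-measured data qubits well defined, justify rigorously that the random slot placement supplies the exchangeability needed to treat the $2N$ relevant qubits as a fixed (possibly adversarially correlated) population that is randomly bipartitioned — so that the hypergeometric/Serfling estimate remains valid — and then propagate the one-sided failure probability cleanly through the smooth-entropy chain so that the single parameter $\pabort$ captures precisely the abort event. Once this step is in place, matching the finite-size constants (the $N+1$, the additive $-2$, and the exact form of $\kappa$) to the estimation case is a bookkeeping task.
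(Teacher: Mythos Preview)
Your proposal is correct and follows essentially the same route as the paper: reuse the entropic core (the min-entropy bound on $\Q^\epsilon$ combined with the inequality $\Hmin^{3\epsilon+\epsilon'+4\epsilon''}(A|E)\ge Nq-\Hmax^{\epsilon''}(X|B)-\Hmax^{\epsilon'}(Z|B)-2\log(2/\epsilon^2)$), and redo only the Serfling sampling step with $N$ test qubits against $N$ data qubits in a $2N$ population, which is exactly what produces $\mu=\sqrt{2(N+1)N^{-2}\ln(\cdots)}$ while leaving $\kappa$ and the rest of \eqref{eq:bound} unchanged. One small wording caveat: the conversion of $\Hmin(A|E)$ into the \emph{two} max-entropies is not achieved by the uncertainty relation alone but also uses the chain rule and duality (the paper's Theorem~\ref{thm:hminae}); since you explicitly import this step from the proof of Theorem~\ref{thm:estimation}, this does not affect the validity of your plan.
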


The bound for the verification protocol looks formally almost identical to the
one for the estimation protocol, but there are three differences. Firstly, the
function $\mu$ has a different dependence on $N$, which is a consequence of the
different structure of the protocol as explained in Figure~\ref{fig:comparison}.
Secondly, the error rates $\etolx$ and $\etolz$ are preset accepted error rates
instead of calculated error rates from data, and the bound holds when the measured rates are below
those preset values. Thirdly, the probability $p$ in the bound is the abort
probability of the protocol. We will say more about this probability in the
Discussion section.

\section*{Discussion}

In this section, we shall discuss our bound as a bound on the \emph{rate}
$\frac{1}{N} \Q^\epsilon(\chan)$, which quantifies the amount of quantum
information that can be sent \emph{per qubit}. This has the advantage that it
makes comparisons easier. To discuss our bound on the capacity rate, we have
plotted its value as a function of $N$ in Figure~\ref{fig:plot}. We plotted the bound
for the estimation protocol, but qualitatively, the bound for the verification
protocol behaves identically, so our discussion applies to both protocols.

\begin{figure*}[t]
  \centering
  \subcaptionbox{for different decoding error probabilities $\epsilon$}
  {
    \begin{tikzpicture}
      \begin{semilogxaxis}[
	width=7.8cm,
	xmin = 1000,
	xmax = 100000000,
	ymin = 0,
	xlabel = {$N$},
	ylabel = {bound on the rate $\frac{1}{N} \Q^\epsilon(\chan)$},
	ylabel style = {xshift=-0.3cm},
	legend style = {at = {(0.35,0.3)}, anchor=west},
	legend cell align = left,
	ytick = {0, 0.1, 0.2, 0.3},
	extra y ticks = {0, 0.427206085768},
	extra y tick labels = {0,$\approx 0.427$},
      ]
	\addplot[dashed, black] table [col sep=comma]
	  {csv/asymp.csv};

	\addplot[solid, OliveGreen] table [col sep=comma]
	  {csv/rate_err-0.05_eps-E-2.csv};
	\addplot[solid, blue] table [col sep=comma]
	  {csv/rate_err-0.05_eps-E-6.csv};
	\addplot[solid, red] table [col sep=comma]
	  {csv/rate_err-0.05_eps-E-10.csv};
	\addlegendentry{$q-h(\exest)-h(\ezest)$}
	\addlegendentry{$\epsilon = 10^{-2}$}
	\addlegendentry{$\epsilon = 10^{-6}$}
	\addlegendentry{$\epsilon = 10^{-10}$}
      \end{semilogxaxis}
    \end{tikzpicture}
  }
  \subcaptionbox{for different error rates $\exest$ and $\ezest$}
  {
    \begin{tikzpicture}
      \begin{semilogxaxis}[
	width=7.8cm,
	xmin = 1000,
	xmax = 100000000,
	ymin = 0,
	ymax = 1,
	xlabel = {$N$},
	ylabel = {bound on the rate $\frac{1}{N} \Q^\epsilon(\chan)$},
	legend style = {at = {(0.05,0.77)}, anchor=west},
	legend cell align = left,
      ]

	\addplot[solid, OliveGreen] table [col sep=comma]
	  {csv/rate_err-0.02_eps-E-6.csv};
	\addplot[dashdotted, black] table [col sep=comma]
	  {csv/rate_errx-0.08_errz-0.00_eps-E-6.csv};
	\addplot[solid, blue] table [col sep=comma]
	  {csv/rate_err-0.05_eps-E-6.csv};
	\addplot[solid, red] table [col sep=comma]
	  {csv/rate_err-0.08_eps-E-6.csv};
	\addlegendentry{$\exest = \ezest = 2\%$}
	\addlegendentry{$\exest = 8\%, \ezest = 0$}
	\addlegendentry{$\exest = \ezest = 5\%$}
	\addlegendentry{$\exest = \ezest = 8\%$}
      \end{semilogxaxis}
    \end{tikzpicture}
  }
  \caption{\textbf{Bound on the rate for the capacity estimation protocol as a
      function of the number of qubits.}
    This figure shows the bound on the one-shot quantum capacity for the
    estimation protocol expressed as a \emph{rate}, that is, the right hand side
    of inequality \eqref{eq:bound} divided by the number of qubits $N$. The
    plots show the bound as a function of $N$ with the parameters as $q = 1$,
    and $p = 1/2$.
    \textbf{(a)} We plotted the bound for fixed error rates $\exest = \ezest =
    5\%$ for a few different values of $\epsilon$ in order to visualize the
    dependence on the decoding error probability. The lower the allowed decoding
    error probability $\epsilon$ is set, the higher the number of qubits needs
    to be in order to get a positive bound on the rate (note that the $N$-axis
    is logarithmic). In the asymptotic limit $N \to \infty$, the bound converges
    to $q - h(\exest) - h(\ezest)$. If $q=1$, this coincides exactly with the
    (asymptotic) capacity for some important classes of channels, such as
    depolarizing channels. This shows that our bound is asymptotically optimal,
    and therefore, improvements are only possible in the finite-size correction
    terms.
    \textbf{(b)} To see the dependence on the error rates, we plotted our bound
    for a fixed value of $\epsilon = 10^{-6}$ for a few different values of
    $\exest$ and $\ezest$. The higher the error rate, the higher the number of
    qubits needs to be in order to achieve a positive rate. For every pair of
    error rates $\exest$ and $\ezest$, the bound is monotonically increasing in
    $N$ and converges to $q-h(\exest)-h(\ezest)$. Therefore, the bound can only
    be positive when $q-h(\exest)-h(\ezest)$ is positive, which yields an easy
    criterion for the potential usefulness of a channel with known error rates
    (although the full version of the bound with the correction terms is not
    hard to evaluate either).
    \label{fig:plot}}
\end{figure*}

\smallskip
\textit{Example: Dephasing channel.} In order to assess the strength of
our bound, it is helpful consider some example channels. A particularly
insightful example is the case where the channel $\chan$ is given by $N$
independent copies of a dephasing channel of strength $\alpha \in [0,1]$, that
is,
\begin{align}
  \label{eq:dephasing}
  \chan = \chan_D^{\otimes N} \,, \quad \chan_D(\rho): \rho \mapsto
  \left(1-\frac{\alpha}{2}\right) \rho + \frac{\alpha}{2} \sigma \rho \sigma \,,
\end{align}
where $\sigma$ denotes one of the qubit Pauli operators with respect to some
basis. Of particular interest is the case where the dephasing happens with
respect to one of the two bases $X$ or $Z$ in which Alice and Bob prepare and
measure the test qubits. Let us assume that $\sigma = \sigma_Z$. In order to see
what happens when our estimation protocol is used in this case, we could
simulate a protocol run and see what bound on the one-shot quantum capacity
would be obtained. However, the estimation protocol does essentially nothing but
determine the two error rates $\exest$ and $\ezest$. The expected values of
these rates can be readily obtained from equation \eqref{eq:dephasing}. The
error rate $\ezest$ vanishes, because dephasing in the $Z$-basis leaves the
$Z$-diagonal invariant. In the $X$-basis the bits are left invariant with
probability $1-\alpha/2$, and flipped with probability $\alpha/2$, so asymptotically $\exest =
\alpha/2$. Hence, for the dephasing channel, the estimation protocol is expected
to yield the bound in inequality \eqref{eq:bound} with $\ezest = 0$ and $\exest
= \alpha / 2$.

\smallskip
\textit{Asymptotic tightness of the bound.} As one can see in Figure~\ref{fig:plot},
the bound on the one-shot quantum capacity, expressed as a rate, converges to
$q-h(\exest)-h(\ezest)$, which in the case of the dephasing channel is given by
$q - h(\alpha/2)$. If we additionally assume that the bases $X$ and $Z$ are
mutually unbiased (as are Pauli-$X$ and $Z$), this
is equal to $1 - h(\alpha/2)$. This is precisely the (asymptotic)
quantum capacity of the dephasing channel. This means that our bound on the
one-shot quantum capacity is asymptotically tight; if our bound can be improved,
then only in the finite-size correction terms. In particular, our bound cannot
be improved by a constant factor. Since most estimates that enter the derivation
of the bound are of the same type as the estimates used in modern security
proofs of QKD \cite{TLGR12}, any possible improvements of the QKD security
bounds would also lead to an improvement of our bound on the one-shot quantum
capacity (if there is any). In this sense, our bound is essentially as tight as
the corresponding security bounds for QKD in the finite regime.

\smallskip
\textit{Measurement calibration.} Above, we have assumed that Alice and
Bob were very lucky: they set up their bases $X$ and $Z$ such that one of them
is exactly aligned with the dephasing basis, and therefore optimally exploited
the asymmetry of the channel. In general, since they do not know the channel
whose capacity they estimate, they do not know about the direction of the
asymmetry. Instead, they have to calibrate their devices by trying out several
pairs of bases until they find one with low error rates. Otherwise, the bound on
the one-shot quantum capacity that they infer is suboptimal. It is an
interesting open question how such a calibration can be optimized.

\smallskip
\textit{Example: Fully depolarizing channel.}
Another insightful example is the case where $\chan$ is given by the channel
which outputs the fully mixed state of $N$ qubits, independently of the input
state. The capacity of this channel is zero, yet with probability $2^{-N}$,
Alice and Bob measure error rates $\exest = \ezest = 0$. One may think that
these vanishing error rates lead to a highly positive bound on the capacity, but
this is not the case. As one can read in Theorem~\ref{thm:estimation} and Theorem~\ref{thm:verification},
the bound depends on a probability $p$, and the term $1-p$ corresponds precisely
to the probability of such an unlikely case. In fact, for $1-p=2^{-N}$, the
bound is never positive. This example shows that in the one-shot regime, a
meaningful capacity estimation can only be made under the assumption that the
observed data is not extremely untypical for the channel. However, this is only
a problem for very low values of $N$: thanks to the natural logarithm in $\mu$
(see equation \eqref{eq:kappa} above), the concern reduces to untypical events
with an exponentially (in $N$) small probability. For reasonable numbers of $N$,
the influence of $p$ on the bound is negligible, except for extremely low values
of $1-p$. For more information on this probability, see Section~\ref{app:final}. We
note that this issue is not only given in our context of capacity estimation,
but in all statistical tests on a finite sample, including quantum key
distribution.

\begin{figure*}[t]
  \centering
  \subcaptionbox{plot for each of the three storage times}
  {
    \begin{tikzpicture}
      \begin{semilogxaxis}[
        width=7.8cm,
        xmin = 0.000000001,
        xmax = 0.1,
        ymin = -0.01,
        xlabel = {$\log_{10}(\epsilon)$},
        xtick = {0.000000001,0.00000001,0.0000001,0.000001,0.00001,0.0001,
          0.001,0.01,0.1},
        xticklabels = {$-9$,$-8$,$-7$,$-6$,$-5$,$-4$,$-3$,$-2$,$-1$},
        ylabel = {bound on the rate $\frac{1}{N} \Q^\epsilon(\chan)$},
        ylabel style = {xshift=-0.3cm},
        legend style = {at = {(0.35,0.35)}, anchor=west},
        legend cell align = left,
        % ytick = {0, 0.1, 0.2, 0.3},
        % extra y ticks = {0, 0.427206085768},
        % extra y tick labels = {0,$\approx 0.427$},
            ]
        \addplot[solid, OliveGreen] table [col sep=comma]
          {csv/bound_vs_epsilon_300ns.csv};
        \addplot[solid, blue] table [col sep=comma]
          {csv/bound_vs_epsilon_600ns.csv};
        \addplot[solid, red] table [col sep=comma]
          {csv/bound_vs_epsilon_1us.csv};
        \addlegendentry{$\Delta t = \SI{300}{\nano\second}$}
        \addlegendentry{$\Delta t = \SI{600}{\nano\second}$}
        \addlegendentry{$\Delta t = \SI{1}{\micro\second}$}
      \end{semilogxaxis}
    \end{tikzpicture}
  }
  \subcaptionbox{zoomed-in plot for $\Delta t = \SI{300}{\nano\second}$}
  {
    \begin{tikzpicture}
      \begin{semilogxaxis}[
        width=7.8cm,
        xmin = 0.000000001,
        xmax = 0.1,
        xlabel = {$\log_{10}(\epsilon)$},
        xtick = {0.000000001,0.00000001,0.0000001,0.000001,0.00001,0.0001,
          0.001,0.01,0.1},
        xticklabels = {$-9$,$-8$,$-7$,$-6$,$-5$,$-4$,$-3$,$-2$,$-1$},
        ylabel = {bound on the rate $\frac{1}{N} \Q^\epsilon(\chan)$},
        ylabel style = {xshift=-0.3cm},
        legend style = {at = {(0.4,0.25)}, anchor=west},
        legend cell align = left,
        % ytick = {0, 0.1, 0.2, 0.3},
        % extra y ticks = {0, 0.427206085768},
        % extra y tick labels = {0,$\approx 0.427$},
            ]
        \addplot[solid, OliveGreen] table [col sep=comma]
          {csv/bound_vs_epsilon_300ns.csv};
        \addlegendentry{$\Delta t = 300 \text{ ns}$}
      \end{semilogxaxis}
    \end{tikzpicture}
  }
  \caption{\textbf{Bound on the rate for the experimental data as a function of
    $\epsilon$.}
    This figure shows the bound on the one-shot quantum capacity rate for the
    data gained in the transmon qubit. We pick $p=1/2$, and use $q=0.9$ as preparation quality to account for the
experimental imperfections.
    \textbf{(a)} The experiment was carried out three times with different
    storage times $\Delta t$, for each of which we plotted the bound resulting
    from the estimation protocol as a function of the decoding error probability
    $\epsilon$. Since the number of qubit preparations and measurements was high
    ($N = \num{1.04e6}$), the dependence on $\epsilon$ is rather small.
    \textbf{(b)} For a better visibility of the $\epsilon$-dependence, we show
    the plot for the shortest storage time separately and more zoomed-in in the
    direction of the bound.
    \label{fig:experiment-plots}}
\end{figure*}
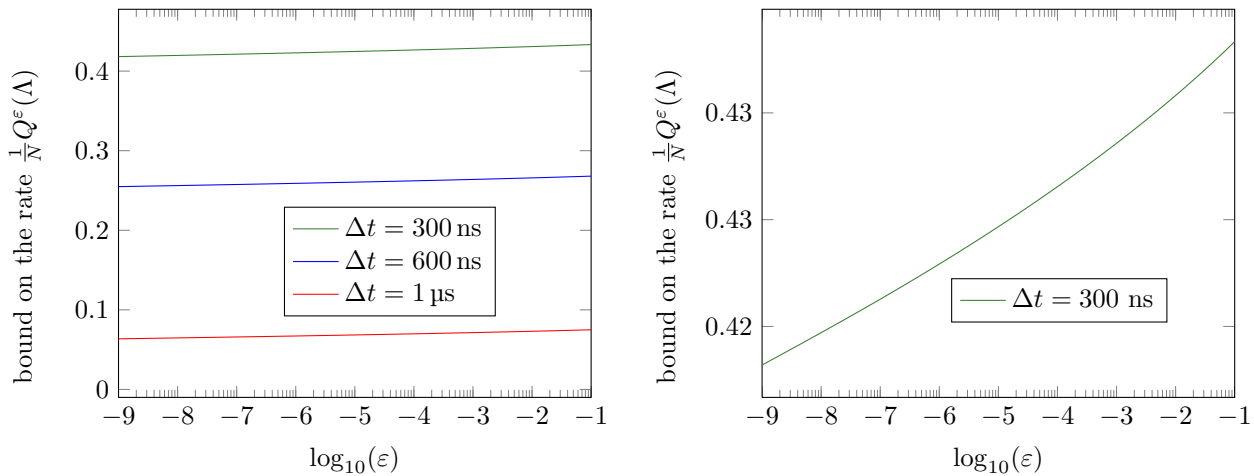

\smallskip
\textit{Experiment.} We demonstrate the use of this protocol by implementing it on a Transmon qubit.
The experiment is performed on qubit $A_{\mathrm{T}}$ previously reported on in~\cite{Riste15}.
We measure a relaxation time of $T_1 = \SI{18.5 \pm 0.6}{\micro\second}$ and a
Ramsey dephasing time of $T_2^\star = \SI{3.8 \pm 0.3}{\micro\second}$ before performing the experiment.
Readout of the qubit state is performed by probing the readout resonator with a microwave tone.
The resulting transients are amplified using a TWPA~\cite{lincolnlabssciencepaper} at the front end of the amplification chain.
This results in a readout fidelity $F_{\mathrm{RO}} = 1-({p_{01} +p_{10}})/{2}=98.0\%$,
where $p_{01}$ ($p_{10}$) is the probability of declaring state 1 (0) when the input state was $\ket{0}$ ($\ket{1}$) respectively.
The qubit state is controlled using resonant microwave pulses.

The experiment implements Protocol~\ref{prot:estimation} to estimate the capacity of the idling operation $I(\tau)$. % for consistency you may want to call this idling channel
We do this by generating 8000 pairs of random numbers corresponding to the bases $b\in \{X, Z \}$ and states $s\in \{0,1\}$.
These are then used to generate pulse sequences that rotate $\ket{0}$ to the required state, and wait for a time $\tau$ before measuring the qubit in the $Z$-basis and declaring a state.
If the required state was in the $X$ basis, a recovery pulse is applied that rotates the state to the $Z$ basis before it is read out.
This protocol is repeated 130 times, with a distinct randomization for each repetition, yielding a total of $N = \num{1.04e6}$ measurement outcomes in approximately one and a half hours.

% Ambiguity in N (number of qubits or number of repetitions).

%  /194147_Capacity_tomo_idle_time_3e-07s_base_ZX.hdf5
% N (total number of rounds): 1040000
% n (number of X-rounds):     520586
% k (number of Z-rounds):     519414
% e_x (error rate in X):      0.0397590407733
% e_z (error rate in Z):      0.0326290781535

%/212731_Capacity_tomo_idle_time_6e-07s_base_ZX.hdf5
% N (total number of rounds): 1040000
% n (number of X-rounds):     520175
% k (number of Z-rounds):     519825
% e_x (error rate in X):      0.0649550631999
% e_z (error rate in Z):      0.0460712739864

%231341_Capacity_tomo_idle_time_1e-06s_base_ZX.hdf5
% N (total number of rounds): 1040000
% n (number of X-rounds):     519521
% k (number of Z-rounds):     520479
% e_x (error rate in X):      0.105354740232
% e_z (error rate in Z):      0.0595758906699
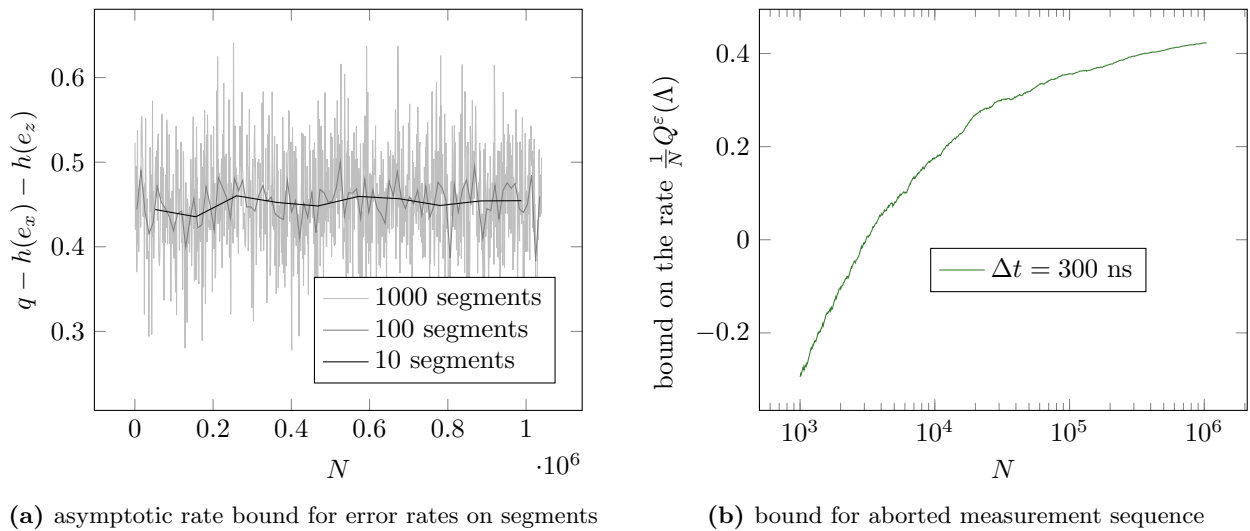
\begin{figure*}[t]
  \centering
  \subcaptionbox{asymptotic rate bound for error rates on segments}
  {
    \begin{tikzpicture}
      \begin{axis}[
        width=8cm,
        xlabel = {$N$},
        ylabel = {$q - h(e_x) - h(e_z)$},
        ylabel style = {yshift=-0.3cm},
        legend style = {at = {(0.45,0.2)}, anchor=west},
        legend cell align = left,
        % ytick = {0, 0.1, 0.2, 0.3},
        % extra y ticks = {0, 0.427206085768},
        % extra y tick labels = {0,$\approx 0.427$},
            ]
        \addplot[solid, lightgray] table [col sep=comma] {csv/fluc_1000.csv};
        \addplot[solid, gray] table [col sep=comma] {csv/fluc_100.csv};
        \addplot[solid, black] table [col sep=comma] {csv/fluc_10.csv};
        \addlegendentry{1000 segments}
        \addlegendentry{100 segments}
        \addlegendentry{10 segments}
      \end{axis}
    \end{tikzpicture}
    \hspace{0.3cm}
  }
  \subcaptionbox{bound for aborted measurement sequence}
  {
    \begin{tikzpicture}
      \begin{semilogxaxis}[
        width=8cm,
        xlabel = {$N$},
        ylabel = {bound on the rate $\frac{1}{N} \Q^\epsilon(\chan)$},
        ylabel style = {xshift=-0.3cm},
        legend style = {at = {(0.35,0.35)}, anchor=west},
        legend cell align = left,
        % ytick = {0, 0.1, 0.2, 0.3},
        % extra y ticks = {0, 0.427206085768},
        % extra y tick labels = {0,$\approx 0.427$},
            ]
        \addplot[solid, OliveGreen] table [col sep=comma]
          {csv/bound_vs_N.csv};
        \addlegendentry{$\Delta t = 300 \text{ ns}$}
      \end{semilogxaxis}
    \end{tikzpicture}
    \hspace{0.3cm}
  }
  % \subcaptionbox{error rates on segments}
  % {
  %   \begin{tikzpicture}
  %     \begin{axis}[
  %       width=8cm,
  %       % xmin = 0.000000001,
  %       % xmax = 0.1,
  %       % ymin = -0.01,
  %       xlabel = {$N$},
  %       ylabel = {errors},
  %       ylabel style = {yshift=-0.3cm},
  %       legend style = {at = {(0.45,0.2)}, anchor=west},
  %       legend cell align = left,
  %       % ytick = {0, 0.1, 0.2, 0.3},
  %       % extra y ticks = {0, 0.427206085768},
  %       % extra y tick labels = {0,$\approx 0.427$},
  %           ]
  %       \addplot[solid, lightgray] table [col sep=comma]
  %       {csv/errors_300ns_1000steps.csv};
  %       \addplot[solid, gray] table [col sep=comma] {csv/errors_300ns_100steps.csv};
  %       \addplot[solid, black] table [col sep=comma] {csv/errors_300ns_10steps.csv};
  %       \addlegendentry{1000 segments}
  %       \addlegendentry{100 segments}
  %       \addlegendentry{10 segments}
  %     \end{axis}
  %   \end{tikzpicture}
  %   \hspace{0.3cm}
  % }
  \caption{\textbf{Error fluctuations across the measurements.}
    Here we visualize the statistical fluctuations in the measurement outcomes
    over the course of the transmon qubit experiment.
    \textbf{(a)} For the experiment with $\Delta t = \SI{300}{\nano\second}$, we
    split up the $N = \num{1.04e6}$ sequential measurement outcomes into equally
    large and chronologically ordered segments and calculate the error rates
    $e_x$ and $e_z$ on each segment. For a meaningful and comparable quantity
    for comparison, we calculate the asymptotic bound $q-h(e_x)-h(e_z)$ for each
    of segment with $q=0.9$, that is, the bound on the capacity rate that would be obtained
    if infinitely many measurements with the error rates as on the respective
    segments would be measured. As expected, the fluctuations decrease with the
    number of segments, or in other words, the larger the segments, the smaller
    the differences between them. Note that in contrast to all other plots, this
    is a linear plot.
    \textbf{(b)} For a glimpse on the cumulative effect of the fluctuations, we
    set 1000 logarithmically distributed ``break points'' and calculate the
    bound as if the experiment ended at each of those points where $q=0.9$, $\epsilon=10^{-6}$, and we pick $p=1/2$. The resulting plot
    is to be compared with the plots in Figure~\ref{fig:plot}. The fluctuations that
    make the curve deviate from a smooth curve come from the fact that the
    measured error rates are not constant throughout the experiment, indicating that the noise affecting the transmon qubit is indeed unlikely
	to correspond to an i.i.d. process.
    \label{fig:fluc}}
\end{figure*}

\smallskip
\textit{Other open questions.}
Our result assumes that the system on which the channel acts is composed of
qubits. An interesting open question is whether this restriction can be removed
and an analogous bound can be derived for channels of arbitrary dimension and
composition.

It would also be interesting to see our bound extended to continuous variable systems.
There are many tools already available~\cite{FFBL12,furrer,berta,beta:ur} that may be useful to perform such
an analysis, but it remains to be determined how exactly they can be applied to such systems.

\section*{Methods}
\label{sec:methods}

To prove the bound on the one-shot quantum capacity, we combine several results.
Firstly, as we recapitulate in more detail in Section~\ref{app:bound-on-capacity}, it
has been shown that the one-shot quantum capacity is bounded by the one-shot
capacity of \emph{entanglement transmission} $\Qent^\epsilon(\chan)$
\cite{BKN00}. More precisely, it holds that for every channel $\chan$ and for
every $\epsilon > 0$ \cite{BD10},
\begin{align}
  \label{eq:cap-similar}
  \Q^\epsilon(\chan) \geq \Qent^{\epsilon/2}(\chan) - 1 \,.
\end{align}
The one-shot capacity of entanglement transmission, in turn, has been proved to
be bounded by the smooth min-entropy $\Hmin^{\epsilon}(A|E)_\rho$, which is
defined by
\begin{align}
  \Hmin^\epsilon(A|B)_\rho := \max_{\rho' \in B^\epsilon(\rho)}
  \Hmin(A|B)_{\rho'} \,,
\end{align}
where
\begin{align}
	\Hmin(A|B)_\rho := \max_{\sigma_B} \sup \{
	\lambda \in \mathbb{R} \mid \rho_{AB} \leq 2^{-\lambda} I_A \otimes
	\sigma_B \} \,.
\end{align}
It has been shown that \cite{BD10,MW14,TBR16}
\begin{align}
  \label{eq:minentbound}
  \Q^\epsilon_{\text{ent}}(\chan) \geq \sup_{\eta \in (0,\sqrt{\epsilon})}
  \left( \Hmin^{\sqrt{\epsilon} - \eta}(A|E)_\rho - 4 \log \frac{1}{\eta} - 1
  \right) \,,
\end{align}
Here, the smooth min-entropy is evaluated for the state
\begin{align}
  \rho_{AE} = (\idchan{A} \otimes \chan^c_{A' \rightarrow E}) (\Phi_{AA'}) \,,
\end{align}
where $\Phi_{AA'}$ is a maximally entangled state over the input system $A'$ and
a copy $A$ of it, and where $\chan^c_{A' \rightarrow E}$ is the
\emph{complementary channel} of the channel $\chan_{A' \rightarrow B}$. The
system $E$ is the environment of the channel (see \cite{Wil13,TBR16} and
Section~\ref{app:bound-on-capacity} for more details). Taking together the results
\eqref{eq:cap-similar} and \eqref{eq:minentbound}, we get that for all $\epsilon
> 0$,
\begin{align}
  \Q^\epsilon(\chan) \geq \sup_{\eta \in (0, \sqrt{\epsilon / 2})} \left(
    \Hmin^{\sqrt{\epsilon / 2} - \eta}(A|E)_\rho - 4 \log \frac{1}{\eta} - 2
    \right) \,.
\end{align}
Therefore, the min-entropy bounds the one-shot quantum capacity.

Estimating the min-entropy has been a subject of intense research in quantum key
distribution (QKD). However, min-entropy estimation protocols in QKD cannot be
directly applied here, because they estimate the min-entropy
$\Hmin^\epsilon(X|E)$ for \emph{classical} information $X$, while in the bound
\eqref{eq:minentbound}, the system $A$ holds \emph{quantum information}. We
bridge this gap: as our main technical contribution, we show in
Section~\ref{app:boundonhmin} that for a system $A$ that is composed of qubits, it
holds that for every $\epsilon > 0$ and every $\epsilon', \epsilon'' \geq 0$,
\begin{align}
  \label{eq:main-contribution}
  &H_{\text{min}}^{3\epsilon + \epsilon' + 4\epsilon''}(A|E)_\rho \nonumber \\
  & \quad \geq Nq
    - \left( H_{\text{max}}^{\epsilon''}(X|B)_\rho
    + H_{\text{max}}^{\epsilon'}(Z|B)_\rho \right)
    - 2\log\frac{2}{\epsilon^2} \,.
\end{align}
Inequality \eqref{eq:main-contribution} reduces estimating the min-entropy of
quantum information $A$ to estimating the max-entropy of measurement outcomes
$X$ and $Z$ on the system $A$.

We prove inequality \eqref{eq:main-contribution} using three main ingredients.
Firstly, we use an uncertainty relation for the smooth min- and max-entropies
\cite{TR11}. Secondly, we use a duality relation for the smooth min- and
max-entropies \cite{KRS09,TCR10}. These two ingredients were also used in modern
security proofs of quantum key distribution \cite{TLGR12}. We combine these two
tools with a third tool, namely a chain rule theorem for the smooth max-entropy
\cite{VDTR12} to arrive at the bound in inequality \eqref{eq:main-contribution}.

Given inequalities \eqref{eq:minentbound} and \eqref{eq:main-contribution}, all
we are left to do is to devise a protocol that estimates the max-entropies of
$X$ and $Z$ given Bob's quantum information $B$. Here we can make use of
protocols in quantum key distribution that estimate exactly such a quantity. We
show in Section~\ref{app:final} how two such protocols (one for the max-entropy of $X$
and one for the max-entropy of $Z$) can be combined into one protocol, which
estimates both quantities simultaneously. The resulting protocol, which we
presented in two versions, is given by Protocol~\ref{prot:estimation} and
Protocol~\ref{prot:verification} in the Results section. Our bound on the one-shot
quantum capacity of the channel, inequality \eqref{eq:bound}, is obtained by
combining inequalities \eqref{eq:cap-similar} and \eqref{eq:minentbound} with
these max-entropy estimation techniques.

\section*{Acknowledgements}

We thank the Leo DiCarlo group (incl. MAR) at QuTech for providing test data for a superconducting qubit.
We also thank
Mario Berta, Thinh Le Phuc and Jeremy Ribeiro for insightful
discussions, and Julia Cramer for helpful comments on an earlier version of this manuscript. CP, AM, MT and SW were supported by MOE Tier 3A grant ''Randomness
from quantum processes'', NRF CRP ''Space-based QKD''.
SW was also supported by
STW, Netherlands, an NWO VIDI Grant, and an ERC Starting Grant.
MAR was supported by an ERC Synergy grant.

\bibliographystyle{plain}

\bibliography{capTomo}

\begin{thebibliography}{10}

\bibitem{BKN00}
H.~Barnum, E.~Knill, and M.~A. Nielsen.
\newblock On quantum fidelities and channel capacities.
\newblock {\em IEEE Trans. Inf. Theory}, 46(4):1317--1329, 2000.

\bibitem{beta:ur}
M.~Berta, M.~Christandl, F.~Furrer, V.~B. Scholz, and M.~Tomamichel.
\newblock Continuous variable entropic uncertainty relations in the presence of
  quantum memory.
\newblock arXiv:1308.4527, 2013.

\bibitem{berta}
M.~Berta, F.~Furrer, and V.~B. Scholz.
\newblock The smooth entropy formalism for von neumann algebras.
\newblock {\em Journal of Mathematical Physics}, 57(015213), 2016.

\bibitem{BD10}
F.~Buscemi and N.~Datta.
\newblock The quantum capacity of channels with arbitrarily correlated noise.
\newblock {\em IEEE Trans. Inf. Theory}, 56(3):1447--1460, 2010.

\bibitem{CW05}
M.~Christandl and A.~Winter.
\newblock Uncertainty, monogamy, and locking of quantum correlations.
\newblock {\em IEEE Trans. Inf. Theory}, 51(9):3159--3165, 2005.

\bibitem{CN97}
I.~L. Chuang and M.~A. Nielsen.
\newblock Prescription for experimental determination of the dynamics of a
  quantum black box.
\newblock {\em J. Mod. Opt.}, 44(11--12):2455--2467, 1997.

\bibitem{PR04}
G.~M. D'Ariano.
\newblock {\em Characterization of Quantum Devices}.
\newblock Springer-Verlag Berlin Heidelberg, 2004.

\bibitem{FR16}
P.~Faist and R.~Renner.
\newblock Practical and reliable error bars in quantum tomography.
\newblock {\em Phys. Rev. Lett.}, 117:010404, 2016.

\bibitem{FBK16}
C.~Ferrie and R.~Blume-Kohout.
\newblock Minimax quantum tomography: Estimators and relative entropy bounds.
\newblock {\em Phys. Rev. Lett.}, 116:090407, 2016.

\bibitem{furrer}
F.~Furrer, J.~Aberg, and R.~Renner.
\newblock Min- and max-entropy in infinite dimensions.
\newblock {\em Communications in Mathematical Physics}, 306(1):165--186, 2011.

\bibitem{FFBL12}
F.~Furrer, T.~Franz, M.~Berta, A.~Leverrier, V.~B. Scholz, M.~Tomamichel, and
  R.~F. Werner.
\newblock Continuous variable quantum key distribution: Finite-key analysis of
  composable security against coherent attacks.
\newblock {\em Phys. Rev. Lett.}, 109:100502, 2012.

\bibitem{Gre15}
D.~Greenbaum.
\newblock Introduction to quantum gate set tomography.
\newblock 2015.

\bibitem{GW15}
E.~Greene and J.~A. Wellner.
\newblock Exponential bounds for the hypergeometric distribution, 2015.

\bibitem{hofman}
H.~F. Hofmann.
\newblock Complementary classical fidelities as an efficient criterion for the
  evaluation of experimentally realized quantum operations.
\newblock {\em Phys. Rev. Lett.}, 94:160504, 2005.

\bibitem{KRS09}
R.~K{\"o}nig, R.~Renner, and C.~Schaffner.
\newblock The operational meaning of min- and max-entropy.
\newblock {\em IEEE Trans. Inf. Theory}, 55(9):4337--4347, 2009.

\bibitem{lincolnlabssciencepaper}
C.~Macklin, K.~O'Brien, D.~Hover, M.~E. Schwartz, V.~Bolkhovsky, X.~Zhang,
  W.~D. Oliver, and I.~Siddiqi.
\newblock {A near-quantum-limited Josephson traveling-wave parametric
  amplifier}.
\newblock {\em Science}, (September):1--8, sep 2015.

\bibitem{MW14}
C.~Morgan and A.~Winter.
\newblock Pretty strong converse for the quantum capacity of degradable
  channels.
\newblock {\em IEEE Trans. Inf. Theory}, 60(1):317--333, Jan 2014.

\bibitem{NC00}
M.~A. Nielsen and I.~L. Chuang.
\newblock {\em Quantum Computation and Quantum Information}.
\newblock Cambridge University Press, 2000.

\bibitem{Pfi16}
C.~Pfister.
\newblock {\em Decoherence estimation in Quantum Theory and Beyond}.
\newblock PhD thesis, 2016.

\bibitem{PLWC16}
C.~Pfister, N.~L{\"u}tkenhaus, S.~Wehner, and P.~J. Coles.
\newblock Sifting attacks in finite-size quantum key distribution.
\newblock {\em New J. Phys.}, 18(5):053001, 2016.

\bibitem{RB09}
J.~M. Renes and J.-C. Boileau.
\newblock Conjectured strong complementary information tradeoff.
\newblock {\em Phys. Rev. Lett.}, 103:020402, 2009.

\bibitem{Ren05}
R.~Renner.
\newblock {\em {Security of Quantum Key Distribution}}.
\newblock PhD thesis, ETH Z{\"u}rich, 2005.

\bibitem{Riste15}
D.~Rist\`{e}, S.~Poletto, M.~Z. Huang, A.~Bruno, V.~Vesterinen, O.~P. Saira,
  and L.~DiCarlo.
\newblock {Detecting bit-flip errors in a logical qubit using stabilizer
  measurements}.
\newblock {\em Nat.\ Commun.}, {6}, {2015}.

\bibitem{hofmanFollowup}
M.~Sedl{\'a}k and J.~Fiur{\'a}{\^s}ek.
\newblock Generalized hofmann quantum process fidelity bounds for quantum
  filters.
\newblock {\em Phys. Rev. A}, 93:042316, 2016.

\bibitem{Ser74}
R.~J. Serfling.
\newblock Probability inequalities for the sum in sampling without replacement.
\newblock {\em Ann. Stat.}, 2(1):39--48, 1974.

\bibitem{Sti55}
W.~F. Stinespring.
\newblock Positive functions on {C*}-algebras.
\newblock {\em Proc. Amer. Math. Soc.}, 6(2), 1955.

\bibitem{Tom12}
M.~Tomamichel.
\newblock {\em A Framework for Non-Asymptotic Quantum Information Theory}.
\newblock PhD thesis, 2012.

\bibitem{TBR16}
M.~Tomamichel, M.~Berta, and J.~M. Renes.
\newblock Quantum coding with finite resources.
\newblock {\em Nat. Comm.}, 7(11419), 2016.

\bibitem{TCR10}
M.~Tomamichel, R.~Colbeck, and R.~Renner.
\newblock Duality between smooth min- and max-entropies.
\newblock {\em IEEE Trans. Inf. Theory}, 56(9):4674--4681, 2010.

\bibitem{TLGR12}
M.~Tomamichel, C.~Lim, N.~Gisin, and R.~Renner.
\newblock Tight finite-key analysis for quantum cryptography.
\newblock {\em Nat. Commun.}, 3(634), 2012.

\bibitem{TR11}
M.~Tomamichel and R.~Renner.
\newblock Uncertainty relation for smooth entropies.
\newblock {\em Phys. Rev. Lett.}, 106(11):110506, 2011.

\bibitem{VDTR12}
A.~Vitanov, F.~Dupuis, M.~Tomamichel, and R.~Renner.
\newblock Chain rules for smooth min- and max-entropies.
\newblock {\em IEEE Trans. Inf. Theory}, 59(5):2603--2612, 2013.

\bibitem{Wil13}
M.~Wilde.
\newblock {\em Quantum Information Theory}.
\newblock Cambridge University Press, 2013.

\end{thebibliography}

\onecolumn % SWITCH TO SINGLE-COLUMN LAYOUT

\appendix

\section{Mathematical preliminaries and conventions}
\label{app:preliminaries}

In appendix~\ref{app:preliminaries}, we recapitulate some basic definitions. The material presented
should not be seen as an introduction to the subject. We only state the
definitions here to avoid ambiguity and to clarify our notation. The following
definition clarifies our notation for operator spaces.

\begin{defn}[Operator spaces]
  For a finite-dimensional complex inner product space $\ca{H}$, we define the
  following sets of operators on $\ca{H}$:
  \begin{itemize}
    \item $\End(\ca{H}) := \left\{ L: \ca{H} \rightarrow \ca{H} \mid L
	\textnormal{ linear} \right\}$, \hfill (endomorphisms on $\ca{H}$)
    \item $\Herm(\ca{H}) := \left\{ L \in \End(\ca{H}) \mid L^\dagger = L
      \right\}$, \hfill (Hermitian operators on $\ca{H}$)
    \item $\ca{S}(\ca{H}) := \{ \rho\in\Herm(\ca{H})\mid\rho \geq 0, \tr(\rho)=
      1 \}$, \hfill (states / density operators on $\ca{H}$)
    \item $\ca{S}^\leq(\ca{H}) := \{ \rho \in \Herm(\ca{H}) \mid \rho \geq 0, \tr
      (\rho) \leq 1 \}$. \hfill (subnormalized states on $\ca{H}$)
  \end{itemize}
\end{defn}

Next, we make some general conventions.

\begin{conv}
  Throughout this document, we make use of the following conventions:
  \begin{itemize}
    \item $\log$ denotes the binary logarithm (base 2) and $\ln$ denotes the
      natural logarithm (base $e$).
    \item Quantum systems are assumed to be finite-dimensional, and the symbol
      $\ca{H}$ always denotes a finite-dimensional complex inner product space.
    \item A single subscript of $\ca{H}$ refers to the system associated with
      the space (for example, $\ca{H}_A$ is the space of system~$A$).
    \item We use multiple subscripts of $\ca{H}$ to refer to a space of a joint
      system (for example, $\ca{H}_{AB} = \ca{H}_A \otimes \ca{H}_B$).
   \item For multipartite states, such as $\rho_{ABE} \in \ca{S}(\ca{H}_{ABE})$,
     we denote its reduced states by according changes of the subscript, e.g.
     $\rho_{AE} := \tr_B(\rho_{ABE})$, $\rho_A := \tr_B(\tr_E(\rho_{ABE}))$.
  \end{itemize}
\end{conv}

The formal definition of a quantum channel goes as follows.

\begin{defn}[Quantum channel]
  Let $A$ and $B$ be quantum systems.
  \begin{itemize}
    \item The \dt{identity channel} on $A$, denoted by $\idchan{A}$, is the
      linear map
      \begin{align}
	\begin{array}{cccc}
	  \idchan{A}: & \End(\ca{H}_A) & \rightarrow & \End(\ca{H}_A) \\
	         & \rho_A         & \mapsto     & \rho_A \,.
	 \end{array}
       \end{align}
    \item A \dt{quantum channel} from $A$ to $B$ is a linear map
      \begin{align}
	\chan: \End(\ca{H}_A) \rightarrow \End(\ca{H}_B)
      \end{align}
      which is trace-preserving, i.e.\
      \begin{align}
	\tr(\chan(\rho_A)) = \tr(\rho_A) \quad \Forall \rho_A \in
	\End(\ca{H}_A) \,,
      \end{align}
      and which is completely positive. That is, for any quantum system $E$ of
      any dimension $d_E \in \mathbb{N}_+$, the map $\chan \otimes \idchan{E}$
      is a positive map,
      \begin{align}
	(\chan \otimes \idop{E})(\rho_{AE}) \geq 0 \quad \Forall
	\rho_{AE} \in \ca{H}_{AE} \,.
      \end{align}
      Such a map is called a \dt{trace-preserving completely positive map}
      (abbreviated as \dt{TPCPM}).\footnote{
	According to this definition, the terms ``channel'' and ``TPCPM'' are
	equivalent. In practice, the term channel is preferred when speaking of
	the evolution of a system in a physical sense, while the term TPCPM
	refers to the map as a mathematical object. However, this distinction is
	often not very strict.}
  \end{itemize}
\end{defn}

In order to define the the one-shot quantum capacity of a channel and the smooth
entropies of quantum channels below, we need to make use of some distance
measures.

\begin{defn}[Distance measures]
  On the above operator spaces, we define the following distance measures:
  \begin{itemize}
    \item The \dt{trace norm} on $\End(\ca{H})$ is defined as $\norm{L}_1 =
      \tr\left(\sqrt{L^\dagger L}\right)$.
    \item The \dt{trace distance} on $\End(\ca{H})$ is defined as $D(\rho,
      \sigma) := \frac{1}{2} \norm{\rho-\sigma}_1$.
    \item The \dt{generalized fidelity}~\cite{TCR10} on $\ca{S}^\leq(\ca{H})$ is defined as
      $F(\rho, \sigma) := \left\Vert \sqrt{\rho} \sqrt{\sigma} \right\Vert_1 +
      \sqrt{(1-\tr\rho)(1-\tr\sigma)}$.
    \item The \dt{fidelity} is given by the restriction of the generalized
      fidelity to $\ca{S}(\ca{H})$, resulting in $F(\rho, \sigma) =
      \norm{\sqrt{\rho} \sqrt{\sigma}}$.
    \item The \dt{purified distance} on $\ca{S}^\leq(\ca{H})$ is defined as
      $P(\rho, \sigma) := \sqrt{1-F(\rho,\sigma)^2}$.
    \item The \dt{$\mathbf{\epsilon}$-ball} around a subnormalized state $\rho
      \in \ca{S}^\leq(\ca{H})$ is given by $B^\epsilon(\rho) := \{ \rho' \in
      \ca{S}^\leq(\ca{H}) \mid P(\rho, \rho') \leq \epsilon \}$.
  \end{itemize}
\end{defn}

Next, we define two kinds of one-shot capacities for quantum channels. There are
(at least) two meaningful definitions of a capacity. In the i.i.d. scenario,
these two capacities happen to coincide, and they are just referred to as
\emph{the} quantum capacity of a quantum channel. In the one-shot case, however,
the two capacities are distinct, so it is not a priori clear which one should be
chosen as \emph{the} one-shot quantum capacity $\Q^\epsilon(\chan)$. Here, we
follow Buscemi and Datta \cite{BD10}, identifying the one-shot capacity of
minimum output fidelity as the one-shot quantum capacity. More precisely, we
define the following.

\begin{defn}[One-shot capacities]
  \label{def:capacities}
  Let $\chan: \End(\ca{H}_A) \rightarrow \End(\ca{H}_B)$ be a quantum channel,
  let $\epsilon \geq 0$.
  \begin{itemize}
    \item The \dt{one-shot capacity of minimum output fidelity}
      $\Q^\epsilon(\chan)$ of the channel with respect to $\epsilon$, which we
      also call the \dt{one-shot quantum capacity} of the channel, is defined as
      \begin{align}
	\label{eq:capApp}
	\Q^\epsilon(\chan) := \max \{ \log m \mid F_\text{min}(\chan, m) \geq 1 -
	\epsilon \} \,,
      \end{align}
      where
      \begin{align}
	\label{eq:fminApp}
	F_\text{min}(\chan, m) :=
	\max_{\ca{H}_A' \subseteq \ca{H}_A \atop \dim\left(\ca{H}_A'\right) = m}
	\max_{\ca{D}} \min_{\ket{\phi} \in \ca{H}_A'} \bra{\phi} (\ca{D} \circ
	\chan)(\phi) \ket{\phi} \,.
      \end{align}
      The inner maximization ranges over all channels $\ca{D}: \End(\ca{H}_B)
      \rightarrow \End(\ca{H}_A)$ (decoding channels).
    \item The \dt{one-shot capacity of entanglement transmission}
      $\Qent^\epsilon(\chan)$ of the channel is defined as
      \begin{align}
	\Qent^\epsilon(\chan) :=
	\max \{ \log m \mid F_\text{ent}(\chan, m) \geq 1 - \epsilon \} \,,
      \end{align}
      where
      \begin{align}
	F_\text{ent}(\chan, m) \nonumber := \max_{\ca{H}_M \subseteq \ca{H}_A
	  \atop \dim(\ca{H}_M) = m} \max_{\ca{D}} \bra{\Phi_{MM'}} (\ca{D} \circ
	  \chan)(\Phi_{MM'}) \ket{\Phi_{MM'}}
      \end{align}
      The maximization over $\ca{D}$ is as above, and the state
      \begin{align}
	\ket{\Phi_{MM'}} = \frac{1}{\sqrt{\dim(\ca{H}_M)}}
	\sum_{i=1}^{\dim(\ca{H}_M)} \left( \ket{i}_M \otimes \ket{i}_{M'} \right)
      \end{align}
      is a maximally entangled state on the subsystem $\ca{H}_M$ and a copy
      $\ca{H}_{M'}$ of it.
  \end{itemize}
\end{defn}

Although these two capacities are distinct, we will see below that they are
comparable in the sense that they bound each other (see inequality
\eqref{eq:capacities-comparable} below). It is important to note that in
Definition~\ref{def:capacities}, we follow the definitions in \cite{BD10} and \cite{TBR16}
by using the fidelity as the figure of merit.  Other sources, such as
\cite{MW14}, define the one-shot capacities with the trace distance as the
figure of merit. This will have consequences in Section~\ref{app:bound-on-capacity},
when a bound on $\Qent^\epsilon(\chan)$ is converted from one definition to
another (see inequality \eqref{eq:qentbound} below).

Now we define the (smooth) min- and max-entropy of a bipartite quantum state.

\begin{defn}[Min- and max-entropy]
  \label{def:min-max}
  Let $\rho_{AB} \in \ca{S}^\leq(\ca{H}_{AB})$ be a subnormalized bipartite
  state.
  \begin{itemize}
    \item The \dt{min-entropy} of $A$ conditioned on $B$ for the state
      $\rho_{AB}$ is defined~\cite{Ren05} as
      \begin{align}
	\Hmin(A|B)_\rho := \max_{\sigma_B \in \ca{S}^{\leq}(\ca{H}_B)} \sup \{
	\lambda \in \mathbb{R} \mid \rho_{AB} \leq 2^{-\lambda} I_A \otimes
	\sigma_B \} \,.
      \end{align}
    \item The \dt{max-entropy} of $A$ given $B$ for the state $\rho_{AB}$ is
      defined~\cite{KRS09} as
      \begin{align}
	\Hmax(A|B)_\rho := \max_{\sigma_B \in \ca{S}^{\leq}(\ca{H}_B)}
	\log\left\Vert \sqrt{\rho_{AB}} \sqrt{I_A \otimes \sigma_B}
	\right\Vert^2_1 \,.
      \end{align}
  \end{itemize}
\end{defn}

\begin{defn}[Smooth min- and max-entropy]
  \label{def:smooth-entropies}
  Let $\rho_{AB} \in \ca{S}^\leq(\ca{H}_{AB})$ be a bipartite state and let
  $\epsilon \geq 0$.
  \begin{itemize}
    \item The \dt{$\epsilon$-smooth max-entropy} of $A$ conditioned on $B$ is
      defined as
      \begin{align}
        \Hmax^\epsilon(A|B)_\rho
        := \min_{\rho' \in B^\epsilon(\rho)} \Hmax(A|B)_{\rho'} \,,
      \end{align}
    \item The \dt{$\epsilon$-smooth min-entropy} of $A$ conditioned on $B$ is
      defined as
      \begin{align}
        \Hmin^\epsilon(A|B)_\rho := \max_{\rho' \in B^\epsilon(\rho)}
        \Hmin(A|B)_{\rho'} \,.
      \end{align}
  \end{itemize}
\end{defn}

For states that are defined on more systems than labeled in the entropy, the
entropy is evaluated for the according reduced state. For example, given a state
$\rho_{ABE} \in \ca{H}_{ABE}$, the smooth min-entropy $\Hmin^\epsilon(A|E)_\rho$
is evaluated for $\rho_{AE} = \tr_B(\rho_{ABE})$.

To avoid confusion with other sources that define the smooth min- and
max-entropies, it is important to note two things.
\begin{itemize}
  \item Firstly, the max-entropy, as we defined it in Definition~\ref{def:min-max},
    coincides with the R{\'e}nyi entropy of order~1/2, whereas in some older
    sources, it was defined as the R{\'e}nyi entropy of order 0 \cite{Ren05}.
  \item Secondly, the smooth entropies, as we defined them in
    Definition~\ref{def:smooth-entropies}, measure the distance in the purified distance,
    whereas in some older sources, it was defined with respect to the trace
    distance \cite{Ren05}.
\end{itemize}
There are several reasons for making the definitions as we use them here. One
important reason is that this way, the smooth min- and max-entropies satisfy a
duality relation \cite{TCR10} that we will exploit in Section~\ref{app:boundonhmin}
(see Lemma~\ref{lem:duality}).

\section{Background: Proof of the min-entropy bound on the one-shot quantum capacity}
\label{app:bound-on-capacity}

In appendix~\ref{app:bound-on-capacity}, we explain the details of the min-entropy bound on the
one-shot quantum capacity (inequality~\eqref{eq:minentbound} of the main article)
and show how it is derived. This is not a new result, but an application of
results that are well-established in quantum information science, which we provide here for convenience of the reader.

As mentioned in the main article, as the first step in the derivation of
inequality \eqref{eq:minentbound}, we note that the one-shot quantum capacity
$\Q^\epsilon(\chan)$ of a quantum channel $\chan$ can be lower-bounded by the
one-shot capacity of entanglement transmission $\Qent^\epsilon(\chan)$. More
precisely, Barnum, Knill and Nielsen \cite{BKN00} have shown that (here we use
the form presented in \cite{BD10})
\begin{align}
  \label{eq:capacities-comparable}
  \Forall \epsilon > 0: \Qent^\epsilon(\chan) - 1 \leq \Q^{2\epsilon}(\chan)
  \leq \Qent^{4\epsilon}(\chan) \,.
\end{align}
In particular,
\begin{align}
  \label{eq:capbound}
  \Forall \epsilon > 0: \quad \Q^\epsilon(\chan) \geq \Qent^{\epsilon /
  2}(\chan) - 1 \,.
\end{align}

In the next step, we will bound $\Qent^\epsilon(\chan)$. Before we do that, it
is helpful to extend our picture with the Stinespring dilation of the channel
and a purification of the input state, as shown in
Figures~\ref{fig:dilation} and~\ref{fig:fully-purified}. Readers who are already familiar with
these concepts may skip this part and continue reading below
Figure~\ref{fig:fully-purified}.

Recapitulate the situation that we consider: we are given a quantum channel
$\chan$ that takes a quantum system on Alice's side as its input and outputs
another quantum system on Bob's side. It is helpful to give these input and
output systems their own labels. We denote the input system on Alice's side by
$A'$ and the output system on Bob's side by $B$ (the reason for choosing $A'$
instead of $A$ will become clear below).
% From now on, we occasionally write $\chan_{A' \rightarrow B}$ rather than just
% $\chan$ to make clear what systems are involved in a quantum channel.
The situation is depicted in Figure~\ref{fig:chan}.

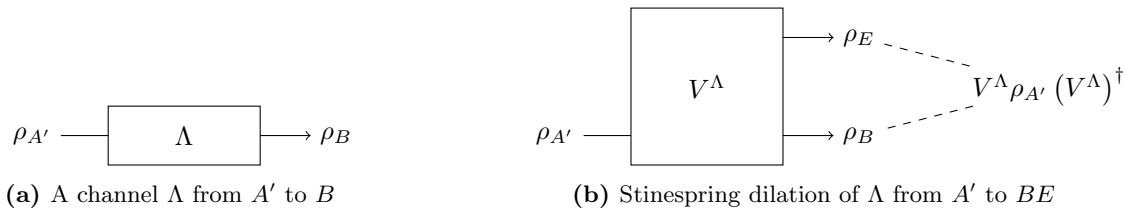
\begin{figure}[h!]
  \centering
  \subcaptionbox{
    A channel $\chan$ from $A'$ to $B$ \label{fig:chan}
  }[0.48\textwidth]
  {
    \begin{tikzpicture}[
      yscale=1.3,
    ]
      \node (a') {$\rho_{A'}$};
      \node at ($(a') + (4,0)$) (b) {$\rho_B$};
      \draw[->] (a') -- (b);

      \draw[fill=white] ($(a') + (1,-0.3)$) rectangle ($(a') + (3,0.3)$);
      \node at ($(a') + (2,0)$) {$\chan$};
    \end{tikzpicture}
  }
  \subcaptionbox{
    Stinespring dilation of $\chan$ from $A'$ to $BE$ \label{fig:stinespring}
  }[0.48\textwidth]
  {
    \begin{tikzpicture}[yscale=1.3]
      \node (a') {$\rho_{A'}$};

      \node at ($(a') + (4,0)$) (b) {$\rho_B$};
      \node at ($(b) + (0,1)$) (e) {$\rho_E$};
      \draw[->] (a') -- (b);
      \draw[->] ($(e) + (-2,0)$) -- (e);

      \draw[fill=white] ($(a') + (1,-0.3)$) rectangle ($(e) + (-1,0.3)$);
      \node at ($(a') + (2,0.5)$) {$V^\chan$};

      \node[inner sep=0] at ($(b) + (2.5,0.5)$) (v) {$V^\chan \rho_{A'} \left(
	V^\chan \right)^\dagger$};
      \draw[dashed] (e) -- (v) -- (b);
    \end{tikzpicture}
  }
  \caption{\textbf{An arbitrary quantum channel $\chan$ and its Stinespring
    dilation $V^\chan$.} For every channel $\chan$ from $A'$ to $B$, there is
    an isometry $V^\chan$ such that $\chan(\rho_{A'}) = \tr(V^\chan \rho_{A'}
    (V^\chan)^\dagger)$. Isometries have the important property that they map
    pure input states to pure output states. This will be important below, where
    we purify the input state $\rho_{A'}$ (see Figure~\ref{fig:fully-purified}).
    \label{fig:dilation}}
\end{figure}

For our purposes, it is useful to extend this picture. Mathematically speaking,
a quantum channel is a trace-preserving completely positive map that maps
density operators $\rho_{A'}$ to density operators $\rho_B$,
\begin{align}
  \chan: \ca{S}(\ca{H}_{A'}) \rightarrow \ca{S}(\ca{H}_B) \,.
\end{align}
The \emph{Stinespring dilation theorem} \cite{Sti55} states that for every such
completely positive map $\chan$, there is a system~$E$ of dimension $d_E \leq
d_{A'}^2$ and a linear isometry
\begin{align}
  V^\chan: \ca{H}_{A'} \rightarrow \ca{H}_B \otimes \ca{H}_E
\end{align}
such that
\begin{align}
  \chan(\rho_{A'}) = \tr_E \left( V^\chan \rho_{A'} \left( V^\chan \right)^\dagger
  \right) \,.
\end{align}
The extended picture is shown in Figure~\ref{fig:stinespring}. The map $V^\chan$ is an
\emph{isometric extension} or \emph{Stinespring dilation} of the channel
$\chan$, and is a standard tool in quantum information theory \cite{Wil13}.

We extend this picture further using another standard tool in quantum
information. The input state $\rho_{A'}$ of the channel may not be a pure state
but a mixed state. This is inconvenient, as many useful mathematical statements
require the involved states to be pure. However, we can work around this by
considering a \emph{purification} of $\rho_{A'}$, that is, a system $A$ of
dimension $d_A \leq d_{A'}$ and a pure state $\rho_{AA'}$ such that
$\tr_{A}(\rho_{AA'}) = \rho_A'$. Every state has such a purification, but it is
not unique \cite{NC00}. Here, for every state $\rho_{A'}$, we consider a
purification with $d_A = d_{A'}$, which is called the \emph{canonical
purification} $\psi_{AA'}^\rho$, and is given by $\psi_{AA'}^\rho =
\ketbra{\psi^\rho}_{AA'}$, where
\begin{align}
  \label{eq:psi}
  \ket{\psi^\rho}_{AA'} = d_{A'} \  \left( \idchan{A} \otimes \sqrt{\rho_{A'}}
  \right) \ket{\Phi}_{AA'} \,.
\end{align}
Here, $\ket{\Phi_{AA'}}$ is the maximally entangled basis with respect to some
bases $\{ \ket{i}_A \}_i$ and $\{ \ket{i}_{A'} \}_i$ for $\ca{H}_A$ and
$\ca{H}_{A'}$, respectively (their choice is irrelevant for what we consider),
\begin{align}
  \label{eq:phi}
  \ket{\Phi}_{AA'} = \sum_{i=1}^{d_{A'}} \frac{1}{\sqrt{d_{A'}}} \ket{i}_A
  \otimes \ket{i}_{A'} \,.
\end{align}
By extending system $A'$ to system $AA'$ in this way, we arrive at the overall
picture shown in Figure~\ref{fig:fully-purified}. After the channel $\chan$ acted on
system $A'$, we not only consider the output system $B$ but the tripartite
system $ABE$, which is in a state $\rho_{ABE}$. Since isometries map pure input
states to pure output states, is is a pure state. This is the reason for this
extension. It will allow us to apply the duality relation in
Section~\ref{app:boundonhmin} (see Lemma~\ref{lem:duality}).

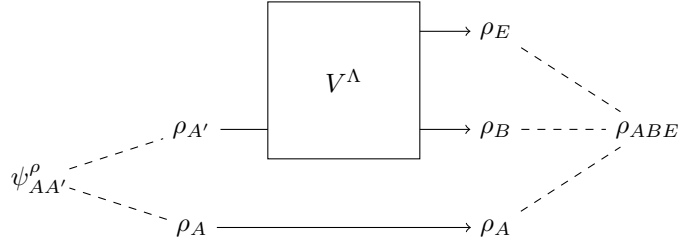
\begin{figure}[h!]
  \centering
  \begin{tikzpicture}[yscale=1.3]
    \node (a) {$\rho_A$};
    \node at ($(a) + (0,1)$) (a') {$\rho_{A'}$};
    \node[inner sep=0] at ($(a) + (-2,0.5)$) (psi) {$\psi_{AA'}^\rho$};
    \draw[dashed] (a) -- (psi) -- (a');

    \node at ($(a) + (4,0)$) (a2) {$\rho_A$};
    \node at ($(a') + (4,0)$) (b) {$\rho_B$};
    \node at ($(b) + (0,1)$) (e) {$\rho_E$};
    \draw[->] (a') -- (b);
    \draw[->] (a) -- (a2);
    \draw[->] ($(e) + (-2,0)$) -- (e);

    \draw[fill=white] ($(a') + (1,-0.3)$) rectangle ($(e) + (-1,0.3)$);
    \node at ($(a') + (2,0.5)$) {$V^\chan$};

    \node at ($(b) + (2,0)$) (abe) {$\rho_{ABE}$};
    \draw[dashed] (e) -- (abe);
    \draw[dashed] (b) -- (abe);
    \draw[dashed] (a2) -- (abe);
  \end{tikzpicture}
  \caption{\textbf{The fully purified diagram for a quantum channel.}
    Since the input state $\psi_{AA'}^\rho$ is pure and $\idchan{A} \otimes
    V^\chan$ is an isometry, the output state $\rho_{ABE}$ is pure.
    \label{fig:fully-purified}}
\end{figure}

Now we are ready to proceed with the next step in the derivation of inequality
\eqref{eq:minentbound}. It turns out that the one-shot capacity of entanglement
transmission $\Qent^\epsilon(\chan)$ of the channel $\chan$ can be bounded by
functions of the state~$\rho_{ABE}$ that we described above. Buscemi and Datta
\cite{BD10} have shown that a lower bound on the one-shot capacity of
entanglement transmission can be formulated in terms of a maximization of an
entropic quantity. Subsequently, Morgan and Winter tightened this bound and
translated it to an optimization of the smooth min-entropy of the state
$\rho_{AE} = \tr_B(\rho_{ABE})$ \cite{MW14}. Here we use this bound in the
following form \cite{TBR16}:
\begin{align}
  \label{eq:qentbound}
  \Forall \epsilon > 0: \quad \Q^\epsilon_{\text{ent}}(\chan) \geq \sup_{\eta
    \in (0,\sqrt{\epsilon})} \sup_{\rho_{A'} \in \ca{S}(\ca{H}_{A'})} \left(
    \Hmin^{\sqrt{\epsilon} - \eta}(A|E)_\rho - 4 \log \frac{1}{\eta} - 1
    \right) \,.
\end{align}
The square root in the smoothing parameter is a consequence of the fact that the
bound \eqref{eq:qentbound} was derived through conversion from a bound where the
figure of merit for entanglement transmission was the purified distance
\cite{MW14} instead of the fidelity \cite{TBR16}.

% The state $\rho_{AE} = \tr_B(\rho_{ABE})$, for which smooth min-entropy
% $\Hmin^{\sqrt{\epsilon} - \eta}(A|E)_\rho$ is evaluated, can also be expressed
% as
% \begin{align}
%   \rho_{AE} = (\idchan{A} \otimes \compchan_{A' \rightarrow E})(\psi^\rho_{AA'})
%   \,,
% \end{align}
% where $\compchan_{A' \rightarrow E}$ is the \emph{complementary channel} of the
% channel $\chan_{A' \rightarrow B}$ in question, which maps system $A'$ to $E$,
% \begin{align}
%   \Forall \rho_{A'} \in \ca{S}(\ca{H}_{A'}): \quad \compchan_{A' \rightarrow
%     E}(\rho_{A'}) = \tr_B \left( V^\chan \rho_{A'} \left( V^\chan
%     \right)^\dagger \right) \,.
% \end{align}

Next, we drop the maximization over the state $\rho_{A'} \in
\ca{S}(\ca{H}_{A'})$ by choosing the maximally mixed state $\rho_{A'} =
\idop{A'} / d_{A'}$. This way, we arrive at another lower bound:
\begin{align}
  \label{eq:qentbound2}
  \Forall \epsilon > 0: \quad \Q^\epsilon_{\text{ent}}(\chan) \geq \sup_{\eta
    \in (0,\sqrt{\epsilon})} \left( \Hmin^{\sqrt{\epsilon} - \eta}(A|E)_\rho - 4
    \log \frac{1}{\eta} - 1 \right) \,.
\end{align}
This corresponds to the case where the input state $\psi^\rho_{AA'}$ in
Figure~\ref{fig:fully-purified} is given by the maximally entangled state $\Phi_{AA'}$
(see equations \eqref{eq:psi} and \eqref{eq:phi} above). This is of particular
importance for us because we can actually estimate $\Hmin^\epsilon(A|E)$ in that
case (see Section~\ref{app:final}). Combining inequalities \eqref{eq:capbound}
and \eqref{eq:qentbound2}, we get the bound
\begin{align}
  \Forall \epsilon > 0: \quad \Q^\epsilon(\chan) \geq \sup_{\eta \in (0,
    \sqrt{\epsilon / 2})} \left( \Hmin^{\sqrt{\epsilon / 2} - \eta}(A|E)_\rho -
    4 \log \frac{1}{\eta} - 2 \right) \,.
\end{align}

\section{Result: Proof of the bound on the min-entropy}
\label{app:boundonhmin}

In Section~\ref{app:bound-on-capacity}, we have seen that the one-shot quantum capacity
$\Q^\epsilon(\chan)$ can be bounded in terms of the smooth min-entropy
$\Hmin^\epsilon(A|E)$ of an appropriately defined state $\rho_{AE}$. In this
appendix, we prove that this min-entropy is bounded by the smooth max-entropies
$\Hmax^{\epsilon}(X|B)$ and $\Hmax^{\epsilon}(Z|B)$ of measurement $X$ and $Z$
on $A$. More precisely, we will show:
\begin{align}
  \label{eq:main-result}
  \Forall \epsilon > 0, \Forall \epsilon', \epsilon'' \geq 0: \quad
  H_{\text{min}}^{3\epsilon + \epsilon' + 4\epsilon''}(A|E)_\rho \geq Nq
    - \left( H_{\text{max}}^{\epsilon''}(X|B)_\rho
    + H_{\text{max}}^{\epsilon'}(Z|B)_\rho \right)
    - 2\log\frac{2}{\epsilon^2}
\end{align}
(see Theorem~\ref{thm:hminae} below). Using this inequality, we will prove our bound on
the one-shot quantum capacity in terms of the protocol parameters in
Section~\ref{app:final}.

In the following, we will cite some lemmas that we will need for the proof of
the bound \eqref{eq:main-result}. The most important ones are:
\begin{itemize}
  \item an uncertainty relation for the smooth min- and max-entropies
    \cite{TR11},
  \item a chain rule theorem for the smooth max-entropy \cite{VDTR12} and
  \item a duality relation for the smooth min- and max-entropies
    \cite{KRS09,TCR10}.
\end{itemize}
We start with the uncertainty relation for the smooth min- and max-entropies.

\begin{lem}[Smooth min-max uncertainty] \label{lem:ur}
  Let $\rho_{ABE} \in \ca{S}(\ca{H}_{ABE})$ be a pure tripartite state where $A$
  is an $N$-qubit system, let $\xb = \{ \xb_0, \xb_1 \}$ and $\zb = \{ \zb_0,
  \zb_1 \}$ be qubit POVMs. Consider the states
  $\rho_{XBE}$ and $\rho_{ZBE}$ that arise from measuring all of the $N$ qubits
  of system $A$ with respect to $\xb$ and $\zb$, respectively, and storing the
  outcomes in a classical register $X$ and $Z$, respectively,
  \begin{align}
    \rho_{XBE} &= \sum_{x \in \{0,1\}^n} P_X(x) \ \ketbra{x} \otimes \rho_{BE}^x
      \,, \\
    \rho_{ZBE} &= \sum_{z \in \{0,1\}^n} P_Z(z) \ \ketbra{z} \otimes \rho_{BE}^z
    \,.
  \end{align}
  where
  \begin{align}
    \label{eq:prodpovmApp}
    &P_X(x) = \tr \left( \Pi_X(x) \rho_A \right) \,, \\
    &\Pi_X(x) = \bigotimes_{i=1}^n \xb_{x_i} \quad \text{for } x = (x_1, \ldots,
      x_n) \in \{0,1\}^n \,, \\
    &\rho^x_{BE} = \frac{\tr_A((\Pi_X(x) \otimes \idop{BE}) \rho_{ABE} (\Pi_X(x)
      \otimes \idop{BE}))}{P_X(x)} \,,
  \end{align}
  and analogously for $\rho_{ZBE}$.
  Then for $\epsilon \geq 0$,
  \begin{align}
    \label{eq:ur}
    \Hmin^\epsilon(X|E)_\rho + \Hmax^\epsilon(Z|B)_\rho \geq Nq \,,
  \end{align}
  where
  \begin{align}
    \label{eq:prepqual}
    q = - \log \max_{i, j} \left\Vert \sqrt{\xb_i} \sqrt{\zb_j}
    \right\Vert_\infty^2 \,.
  \end{align}
  The parameter $q$ is the \dt{preparation quality}. If $\xb$ and $\zb$ are
  measurements with respect to mutually unbiased bases, then $q=1$.
\end{lem}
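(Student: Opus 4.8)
The plan is to derive the bound directly from the smooth entropic uncertainty relation of Tomamichel and Renner~\cite{TR11}, specialised to the $N$-qubit product measurements $\Pi_X$ and $\Pi_Z$. In its general form, that relation states that for any pure tripartite state $\rho_{ABE}$ and any two POVMs $\{M_x\}_x$, $\{N_z\}_z$ on $A$, the post-measurement states obtained by recording the $X$-outcome in a register correlated with $E$ and the $Z$-outcome in a register correlated with $B$ satisfy
\begin{align}
  \Hmin^\epsilon(X|E)_\rho + \Hmax^\epsilon(Z|B)_\rho \geq \log \frac{1}{c} \,,
\end{align}
where the \emph{overlap} is $c = \max_{x,z} \norm{\sqrt{M_x}\sqrt{N_z}}_\infty^2$. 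Applied verbatim to the measurements of the lemma, i.e.\ with $M_x = \Pi_X(x) = \bigotimes_{i=1}^N \xb_{x_i}$ and $N_z = \Pi_Z(z) = \bigotimes_{i=1}^N \zb_{z_i}$, this already yields inequality~\eqref{eq:ur} with right-hand side $\log(1/c_N)$, where $c_N$ is the overlap of the two \emph{product} POVMs. It therefore only remains to show that $\log(1/c_N) = Nq$.

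The key step is to compute $c_N$ and show it factorises as $c^N$, with $c = \max_{i,j} \norm{\sqrt{\xb_i}\sqrt{\zb_j}}_\infty^2$ the single-qubit overlap. Since $\sqrt{\Pi_X(x)} = \bigotimes_i \sqrt{\xb_{x_i}}$ and $\sqrt{\Pi_Z(z)} = \bigotimes_i \sqrt{\zb_{z_i}}$, the operator $\sqrt{\Pi_X(x)}\sqrt{\Pi_Z(z)}$ is itself the tensor product $\bigotimes_i \sqrt{\xb_{x_i}}\sqrt{\zb_{z_i}}$, and multiplicativity of the operator norm under tensor products gives
\begin{align}
  \norm{\sqrt{\Pi_X(x)}\sqrt{\Pi_Z(z)}}_\infty^2
    = \prod_{i=1}^N \norm{\sqrt{\xb_{x_i}}\sqrt{\zb_{z_i}}}_\infty^2 \,.
\end{align}
Because $x$ and $z$ range over all of $\{0,1\}^N$, the per-site pairs $(x_i,z_i)$ may be optimised independently, so the maximum of this product of nonnegative factors equals the product of the per-site maxima:
\begin{align}
  c_N = \max_{x,z} \prod_{i=1}^N \norm{\sqrt{\xb_{x_i}}\sqrt{\zb_{z_i}}}_\infty^2
      = \Big( \max_{i,j} \norm{\sqrt{\xb_i}\sqrt{\zb_j}}_\infty^2 \Big)^N = c^N \,.
\end{align}
Taking logarithms gives $\log(1/c_N) = N \log(1/c) = Nq$, which is exactly the right-hand side of~\eqref{eq:ur}.

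The substance of the argument is light, so the main obstacle is one of bookkeeping rather than depth. First, one must check that the overlap constant entering~\cite{TR11} is precisely $\max \norm{\sqrt{M_x}\sqrt{N_z}}_\infty^2$, matching the definition~\eqref{eq:prepqual} of $q$, and that the system assignment is correct: purity of $\rho_{ABE}$ is what permits conditioning the min-entropy on $E$ and the max-entropy on $B$, the two complementary halves of the purification of $A$. Second, one should confirm that $\{\Pi_X(x)\}_x$ and $\{\Pi_Z(z)\}_z$ are genuine POVMs on the $N$-qubit space (tensor products of POVM elements summing to the identity), so that the relation applies, and that the same smoothing parameter $\epsilon$ appears on both entropy terms. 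The factorisation step itself is the only place requiring a moment's care, namely the identity $\max_{x,z}\prod_i a_{x_i,z_i} = (\max_{i,j} a_{ij})^N$ for nonnegative factors, which is valid precisely because each coordinate is optimised freely. Finally, the closing remark that $q=1$ for mutually unbiased bases follows since rank-one projectors give $\norm{\sqrt{\xb_i}\sqrt{\zb_j}}_\infty^2 = \abs{\langle i_\xb | j_\zb \rangle}^2 = 1/2$, whence $q = -\log(1/2) = 1$.
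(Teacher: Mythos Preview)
Your proposal is correct and matches the paper's treatment: the paper does not give an explicit proof of this lemma but simply attributes it to Tomamichel and Renner~\cite{TR11}, and your argument is precisely the natural specialisation of that general result to $N$-qubit product POVMs, with the overlap factorising as $c^N$ by multiplicativity of the operator norm. There is nothing to add.
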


The chain rule that we will use is actually just one out of a series of chain
rule inequalities proved in \cite{VDTR12}. The particular form that we use here
can be found in \cite{Tom12}.

\begin{lem}[Chain rule for smooth max-entropy] \label{lem:chainrule}
  Let $\rho_{ABC} \in \ca{S}^{\leq}(\ca{H}_{ABC})$ be a tripartite state, let
  $\epsilon > 0$, $\epsilon' \geq 0$, $\epsilon'' \geq 0$. Then
  \begin{align}
    H^{\epsilon + \epsilon' + 2\epsilon''}_{\text{max}}(AB|C)_\rho \leq
    H^{\epsilon'}_{\text{max}}(A|BC)_\rho +
    H^{\epsilon''}_{\text{max}}(B|C)_\rho + \log\frac{2}{\epsilon^2} \,.
  \end{align}
\end{lem}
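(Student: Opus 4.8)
The plan is to prove the lemma by \emph{duality}, turning the asserted chain rule for the max-entropy into an equivalent chain rule for the smooth \emph{min}-entropy, which is easier to attack with a direct operator argument. First I would purify $\rho_{ABC}$ by adjoining a system $D$ to obtain a pure state $\rho_{ABCD}$ (staying within the subnormalized framework). The smooth-entropy duality relation, which for a pure state and a partition of the total system into $S_1 S_2 S_3$ reads $\Hmax^{\delta}(S_1|S_2)_\rho = -\Hmin^{\delta}(S_1|S_3)_\rho$ with the smoothing radius preserved, then applies to each of the three terms: $\Hmax^{\delta}(AB|C) = -\Hmin^{\delta}(AB|D)$, $\Hmax^{\epsilon'}(A|BC) = -\Hmin^{\epsilon'}(A|D)$, and $\Hmax^{\epsilon''}(B|C) = -\Hmin^{\epsilon''}(B|AD)$. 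Substituting these into the claim and rearranging signs shows that the lemma is equivalent to the smooth min-entropy chain rule
\[
\Hmin^{\epsilon'}(A|D)_\rho + \Hmin^{\epsilon''}(B|AD)_\rho \leq \Hmin^{\epsilon+\epsilon'+2\epsilon''}(AB|D)_\rho + \log\tfrac{2}{\epsilon^2},
\]
so it suffices to establish this inequality.

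Next I would treat the min-entropy version first at the non-smooth level, using the semidefinite characterization that $2^{-\Hmin(S_1|S_2)_\rho}$ is the least $t$ for which $\rho_{S_1 S_2} \leq t\,(I_{S_1}\otimes\sigma_{S_2})$ holds for some state $\sigma_{S_2}$. The optimizer $\tau_{AD}$ for $\Hmin(B|AD)$ gives $\rho_{ABD} \leq 2^{-\Hmin(B|AD)}\,(I_B\otimes\tau_{AD})$, and the optimizer $\sigma_D$ for $\Hmin(A|D)$ gives $\rho_{AD}\leq 2^{-\Hmin(A|D)}\,(I_A\otimes\sigma_D)$; the aim is to splice these into a single bound of the form $\rho_{ABD}\leq 2^{-(\,\cdots\,)}\,(I_{AB}\otimes\omega_D)$. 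The obstruction here — and the heart of the whole lemma — is that $\tau_{AD}$, optimal for the $B|AD$ term, is \emph{not} the marginal $\rho_{AD}$ on which the $A|D$ bound acts, so the two operator inequalities do not compose, and indeed no correction-free version holds in general.

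To resolve this I would smooth. The idea is to perturb $\rho$ to a nearby subnormalized state, close in purified distance, whose relevant marginals are made compatible with both optimizers by a truncation/projection step in which the discarded weight is of order $\epsilon^2$; paying for that truncation in the entropy produces precisely the additive constant $\log\frac{2}{\epsilon^2}$ (the signature penalty of trading an $\alpha=\tfrac12$ quantity against a collision-type $\alpha=2$ one). The remaining work is the bookkeeping of the three smoothing radii, which is routine but must be done carefully: the balls of radius $\epsilon'$ and $\epsilon''$ supplied by the two right-hand optimizers are transported through the construction, the triangle inequality for the purified distance accounts for the factor $2$ in front of $\epsilon''$ (a smoothed state on the smaller system $C$ must be lifted, via Uhlmann's theorem, to a compatible smoothed state on $BC$, doubling its radius), and $\epsilon$ is the fresh radius traded against the constant. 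Assembling these gives the displayed min-entropy inequality, and dualizing back recovers the claim. I expect the smoothing calibration — simultaneously reconciling the mismatched optimizers and extracting exactly the constant $\log\frac{2}{\epsilon^2}$ — to be the main obstacle; the duality step and the purified-distance triangle inequalities around it are comparatively mechanical.
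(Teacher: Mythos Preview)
The paper does not actually prove this lemma: it is stated as Lemma~\ref{lem:chainrule} and immediately attributed to \cite{VDTR12} (with the precise form taken from \cite{Tom12}), and then used as a black box in the proof of Theorem~\ref{thm:hminae}. So there is no ``paper's own proof'' to compare against.

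That said, your strategy is sound and is essentially the route taken in the cited reference. The duality reduction is correct: purifying to $\rho_{ABCD}$ and applying $\Hmax^{\delta}(S_1|S_2)=-\Hmin^{\delta}(S_1|S_3)$ termwise gives exactly the min-entropy chain rule
\[
\Hmin^{\epsilon+\epsilon'+2\epsilon''}(AB|D)_\rho \;\geq\; \Hmin^{\epsilon'}(A|D)_\rho + \Hmin^{\epsilon''}(B|AD)_\rho - \log\tfrac{2}{\epsilon^2},
\]
which is the ``natural'' decomposition $H(AB|D)=H(A|D)+H(B|AD)$ in min-entropy form. You also correctly isolate the real obstruction: the optimizer $\tau_{AD}$ for $\Hmin(B|AD)$ is not $\rho_{AD}$, so the two operator inequalities do not chain, and a penalty term is unavoidable.

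Where your sketch is thin is precisely the step you flag yourself. In \cite{VDTR12} the mechanism is not quite a bare ``truncation with discarded weight $\epsilon^2$'': one first proves a \emph{mixed} chain rule of the form $\Hmin(AB|C) \geq \Hmin(A|BC) + H_{\textnormal{min}/\textnormal{max}}(B|C) - \log\tfrac{1}{\epsilon^2}$ via a projector lemma (projecting onto the support where the optimizing $\sigma$ dominates a multiple of $\rho$), and then upgrades it to the fully smoothed statement via Uhlmann-type liftings, which is where the $2\epsilon''$ arises. Your heuristic for the constant (``trading $\alpha=\tfrac12$ against $\alpha=2$'') is a fair intuition but does not by itself produce the argument. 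If you want to complete the proof rather than cite it, you will need to supply that projector lemma explicitly; the rest of your bookkeeping (triangle inequalities for the purified distance, Uhlmann extensions of smoothed marginals) is, as you say, mechanical.
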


The duality relation between the smooth min- and max-entropy, or \emph{min-max
duality}, for short, relates the smooth min-entropy of a state to the
max-entropy of a purification of the state. It was first proved for the
unsmoothed min- and max-entropy K\"onig, Renner and Schaffner in \cite{KRS09}.
The min-max duality for the smooth entropies is due to Tomamichel, Colbeck and
Renner \cite{TCR10}.

\begin{lem}[Min-max duality] \label{lem:duality}
  Let $\rho_{ABE} \in \ca{S}(\ca{H}_{ABE})$ be a pure tripartite state, let
  $\epsilon \geq 0$. Then
  \begin{align}
    \Hmin(A|E)_\rho &= - \Hmax(A|B)_\rho \quad \text{and} \\
    \Hmin^\epsilon(A|E)_\rho &= - \Hmax^\epsilon(A|B)_\rho \,.
  \end{align}
\end{lem}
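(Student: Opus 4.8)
The plan is to prove the unsmoothed identity $\Hmin(A|E)_\rho = -\Hmax(A|B)_\rho$ first, and then bootstrap it to the smoothed version with Uhlmann's theorem. For the unsmoothed case I would start from the semidefinite-programming (SDP) form of the min-entropy. Reading off Definition~\ref{def:min-max}, one checks that
\begin{align}
  2^{-\Hmin(A|E)_\rho} = \min\left\{ \tr(\sigma_E) \,\middle|\, \rho_{AE} \le \idop{A}\otimes\sigma_E,\ \sigma_E\ge 0 \right\},
\end{align}
which is a genuine SDP in $\sigma_E$. Slater's condition holds (take $\sigma_E$ a large multiple of $\idop{E}$), so strong duality applies and the value equals that of the Lagrange dual,
\begin{align}
  2^{-\Hmin(A|E)_\rho} = \max\left\{ \tr(\rho_{AE} Y_{AE}) \,\middle|\, Y_{AE}\ge 0,\ \tr_A(Y_{AE})\le \idop{E} \right\}.
\end{align}

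Next I would bring the max-entropy into a comparable form. By Definition~\ref{def:min-max}, $2^{\Hmax(A|B)_\rho} = \max_{\sigma_B} F(\rho_{AB}, \idop{A}\otimes\sigma_B)^2$, and I would rewrite the fidelity using Uhlmann's theorem: fixing a purification of $\idop{A}\otimes\sigma_B$ on the systems $ABE$ (which exists once $\dim\ca{H}_E$ is large enough, as guaranteed by the Stinespring bound $d_E\le d_{A'}^2$ recalled earlier), the fidelity becomes an overlap maximized over isometries acting on $E$. The crucial step is then to exploit the \emph{purity} of $\rho_{ABE}$: since $\rho_{AE}$ and $\rho_{AB}$ are reductions of one and the same pure vector $\ket{\rho}_{ABE}$, an operator $Y_{AE}$ can be transported to the $B$-side by the transpose trick, so that $\tr(\rho_{AE}Y_{AE}) = \bra{\rho}(Y_{AE}\otimes\idop{B})\ket{\rho}$ is re-expressed as an overlap on $AB$. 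Matching the partial-trace constraint $\tr_A(Y_{AE})\le\idop{E}$ against the normalization of $\sigma_B$ then yields a value-preserving bijection between the feasible points of the two programs, giving $2^{-\Hmin(A|E)_\rho} = 2^{\Hmax(A|B)_\rho}$ and hence the unsmoothed claim.

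For the smoothed identity I would show that the two purified-distance balls correspond under purification. Let $\rho'_{AE}\in B^\epsilon(\rho_{AE})$ attain $\Hmin^\epsilon(A|E)_\rho$; Uhlmann's theorem produces a \emph{pure} purification $\rho'_{ABE}$ on the same system $B$ with $P(\rho'_{ABE},\rho_{ABE}) = P(\rho'_{AE},\rho_{AE})\le\epsilon$, whose marginal $\rho'_{AB}$ lies in $B^\epsilon(\rho_{AB})$ by monotonicity of the purified distance under $\tr_E$. Applying the unsmoothed duality to $\rho'_{ABE}$ gives $\Hmin^\epsilon(A|E)_\rho = -\Hmax(A|B)_{\rho'}\le -\Hmax^\epsilon(A|B)_\rho$. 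The symmetric argument, now starting from the state attaining $\Hmax^\epsilon(A|B)_\rho$ and lifting it to a nearby purification on $E$, gives the reverse inequality, so equality holds.

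The main obstacle is the middle step: converting the dual SDP on $AE$ into the fidelity program on $AB$. This is where all the quantum content sits, because it is exactly the purity of $\rho_{ABE}$ that ties the ``$E$-picture'' of the min-entropy to the ``$B$-picture'' of the max-entropy, and one must check that the transpose-trick correspondence between $Y_{AE}$ and $\sigma_B$ simultaneously preserves positivity, the objective, and the trace/partial-trace constraints. The smoothing step is conceptually easier but hinges on the compatibility of the $\epsilon$-balls under Uhlmann's theorem and partial trace, which is precisely why Definition~\ref{def:smooth-entropies} is phrased with the purified distance rather than the trace distance.
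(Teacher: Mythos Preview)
The paper does not prove Lemma~\ref{lem:duality}; it is stated as a known result and attributed to K\"onig--Renner--Schaffner \cite{KRS09} (unsmoothed case) and Tomamichel--Colbeck--Renner \cite{TCR10} (smoothed case). Your proposal is essentially the standard proof from those references: SDP duality for the unsmoothed identity, followed by an Uhlmann-type lifting of the optimal smoothing state to a nearby purification for the smoothed identity. So there is nothing to compare against in the paper itself, and your route is the canonical one.

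One technical point worth tightening in the smoothed step: when you write that Uhlmann's theorem produces a purification $\rho'_{ABE}$ \emph{on the same system $B$}, this requires $\dim\ca{H}_B$ to be at least the rank of $\rho'_{AE}$. Since $\rho'_{AE}$ is merely $\epsilon$-close to $\rho_{AE}$, its rank can exceed that of $\rho_{AE}$, and hence exceed $\dim\ca{H}_B$. The fix is to invoke Lemma~\ref{lem:invariance} (invariance under isometries) to embed $B$ into a sufficiently large space first; this leaves both sides of the claimed equality unchanged. With that caveat addressed, your argument is correct.
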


Apart from these three main ingredients, we will also make use of three smaller
lemmas. The first one states that the smooth min- and max-entropies are
invariant under isometries \cite{Tom12}.

\begin{lem}[Invariance under isometries]
  \label{lem:invariance}
  Let $\rho_{AB} \in \ca{S}^{\leq}(\ca{H}_{AB})$ be a bipartite state, let
  $\epsilon \geq 0$. Then for all isometries $V: \ca{H}_A \rightarrow \ca{H}_{A'}$
  and $W: \ca{H}_B \rightarrow \ca{H}_{B'}$, the embedded state $\sigma_{A'B'} =
  (V \otimes W) \rho_{AB} (V^\dagger \otimes W^\dagger)$ satisfies
  \begin{align}
    \Hmin^\epsilon(A|B)_\rho = \Hmin^\epsilon(A'|B')_\sigma
    \quad \text{and} \quad
    \Hmax^\epsilon(A|B)_\rho = \Hmax^\epsilon(A'|B')_\sigma \,.
  \end{align}
\end{lem}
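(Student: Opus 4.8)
The plan is to prove the claim first for the unsmoothed entropies, where it follows by pushing the defining conditions through the isometry, and then to lift it to the smooth versions via the purified distance, handling the max-entropy by reduction to the min-entropy through duality. Write $U := V \otimes W$, so that $\sigma_{A'B'} = U \rho_{AB} U^\dagger$, with $V^\dagger V = \idop{A}$, $W^\dagger W = \idop{B}$, and $P := U U^\dagger = (VV^\dagger) \otimes (WW^\dagger)$ the projector onto the image of $U$. For the non-smooth min-entropy I would push the defining semidefinite inequality through $U$ in both directions. If $\rho_{AB} \leq 2^{-\lambda} \idop{A} \otimes \sigma_B$, conjugating by $U$ gives $\sigma_{A'B'} \leq 2^{-\lambda} (VV^\dagger) \otimes (W \sigma_B W^\dagger) \leq 2^{-\lambda} \idop{A'} \otimes (W \sigma_B W^\dagger)$, where $W \sigma_B W^\dagger$ is subnormalized; conversely, conjugating $\sigma_{A'B'} \leq 2^{-\lambda} \idop{A'} \otimes \tau_{B'}$ by $U^\dagger$ and using $V^\dagger \idop{A'} V = \idop{A}$ yields $\rho_{AB} \leq 2^{-\lambda} \idop{A} \otimes (W^\dagger \tau_{B'} W)$. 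These two inclusions of feasible operators on the $B$-system show $\Hmin(A|B)_\rho = \Hmin(A'|B')_\sigma$.

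Lifting to the smooth min-entropy is where the main obstacle lies. The forward inequality is immediate: the purified distance is invariant under the isometry $U$, so $\rho' \mapsto U \rho' U^\dagger$ maps $B^\epsilon(\rho)$ into $B^\epsilon(\sigma)$, and combined with the non-smooth equality this gives $\Hmin^\epsilon(A|B)_\rho \leq \Hmin^\epsilon(A'|B')_\sigma$. The reverse direction is delicate, because an optimal smoothing state $\sigma' \in B^\epsilon(\sigma)$ need not lie in the image of $U$. I would pull it back by setting $\rho' := U^\dagger \sigma' U$: the map $M \mapsto U^\dagger M U$ is completely positive and trace non-increasing, and the purified distance is monotone under such maps, so $P(\rho', \rho) = P(U^\dagger \sigma' U, U^\dagger \sigma U) \leq P(\sigma', \sigma) \leq \epsilon$, i.e. $\rho' \in B^\epsilon(\rho)$. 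Non-smooth invariance then gives $\Hmin(A|B)_{\rho'} = \Hmin(A'|B')_{U \rho' U^\dagger} = \Hmin(A'|B')_{P \sigma' P}$. Finally, a support-projection step closes the gap: since $P = (VV^\dagger) \otimes (WW^\dagger)$ is a \emph{product} projector, $\sigma' \leq 2^{-\lambda} \idop{A'} \otimes \nu$ implies $P \sigma' P \leq 2^{-\lambda} \idop{A'} \otimes (WW^\dagger\, \nu\, WW^\dagger)$, so $\Hmin(A'|B')_{P \sigma' P} \geq \Hmin(A'|B')_{\sigma'}$. Chaining these facts gives $\Hmin(A'|B')_{\sigma'} \leq \Hmin^\epsilon(A|B)_\rho$, and maximizing over $\sigma'$ delivers the reverse inequality. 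The technical crux is thus the combination of monotonicity of the purified distance under trace-non-increasing CP maps and the product structure of $P$; both are standard properties of the purified distance (cf.~\cite{TCR10}).

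For the max-entropy I would avoid a separate fidelity computation and instead reduce to the min-entropy already handled. Purify $\rho_{AB}$ to a pure state $\rho_{ABC}$ and apply the extended isometry $U \otimes \idop{C} = V \otimes W \otimes \idop{C}$; since isometries send pure states to pure states, $\sigma_{A'B'C}$ is a purification of $\sigma_{A'B'}$, and tracing out $B'$ (using that $W$ is an isometry) gives $\sigma_{A'C} = (V \otimes \idop{C}) \rho_{AC} (V \otimes \idop{C})^\dagger$. The smooth duality relation (Lemma~\ref{lem:duality}) then yields $\Hmax^\epsilon(A|B)_\rho = -\Hmin^\epsilon(A|C)_\rho$ and $\Hmax^\epsilon(A'|B')_\sigma = -\Hmin^\epsilon(A'|C)_\sigma$. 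Applying the smooth min-entropy invariance just established, now to the isometry $V$ on the $A$–$C$ cut, gives $\Hmin^\epsilon(A'|C)_\sigma = \Hmin^\epsilon(A|C)_\rho$, and hence the two max-entropies coincide as well.
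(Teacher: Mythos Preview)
The paper does not actually prove Lemma~\ref{lem:invariance}; it merely states the result and cites \cite{Tom12}. So there is no in-paper proof to compare against. Your argument is correct and is essentially the standard proof one finds in that reference: non-smooth invariance by transporting the operator inequality through $U$ and $U^\dagger$, lifting to the smooth case via isometric invariance of the purified distance in one direction and monotonicity of the purified distance under the trace-non-increasing CP map $M \mapsto U^\dagger M U$ together with the product-projector compression $P\sigma'P$ in the other, and finally handling the max-entropy by duality.

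One small point worth tightening: the lemma is stated for subnormalized $\rho_{AB} \in \ca{S}^{\leq}(\ca{H}_{AB})$, whereas the duality relation as quoted in the paper (Lemma~\ref{lem:duality}) is only asserted for normalized pure states. The duality does extend to the subnormalized case (this is in \cite{TCR10,Tom12}), but if you want to stay strictly within the tools stated in the paper you should either invoke that extension explicitly or, alternatively, treat the max-entropy directly: the non-smooth $\Hmax$ is invariant under local isometries by the same fidelity-transport argument you used for $\Hmin$, and the smoothing step then goes through verbatim with your $B^\epsilon$-ball argument, since compressing with the product projector $P$ can only increase the fidelity $\norm{\sqrt{P\sigma'P}\sqrt{\idop{A'}\otimes\nu}}_1$ relative to a suitably compressed $\nu$.
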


In simple terms, the following lemma states that ``forgetting'' side information
cannot decrease one's uncertainty. It is a special case of a more general
theorem, called the \emph{data processing inequality} \cite{Tom12}. We only
state the more special case that we are interested in.

\begin{lem} \label{lem:forget}
  Let $\rho_{ABC} \in \ca{S}^{\leq}(\ca{H}_{ABC})$ be a tripartite state. Then
  \begin{align}
    \Hmax(A|BC) \leq \Hmax(A|B) \,.
  \end{align}
\end{lem}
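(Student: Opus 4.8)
The plan is to recognize the statement as the monotonicity of the conditional max-entropy under discarding the side-information system $C$, since $\rho_{AB} = \tr_C(\rho_{ABC})$. Rather than attacking the fidelity-based definition of $\Hmax$ directly, I would dualize the claim to the min-entropy, where the defining operator inequality from Definition~\ref{def:min-max} makes the argument almost immediate.

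First I would purify: introduce a system $D$ and a pure state $\rho_{ABCD} \in \ca{S}(\ca{H}_{ABCD})$ with $\tr_D(\rho_{ABCD}) = \rho_{ABC}$. Applying the min-max duality (Lemma~\ref{lem:duality}) twice to this pure state --- once with the grouping $A \mid B \mid CD$ and once with the grouping $A \mid BC \mid D$ --- gives $\Hmax(A|B)_\rho = -\Hmin(A|CD)_\rho$ and $\Hmax(A|BC)_\rho = -\Hmin(A|D)_\rho$. Consequently the desired inequality $\Hmax(A|BC) \leq \Hmax(A|B)$ is equivalent to $\Hmin(A|CD)_\rho \leq \Hmin(A|D)_\rho$, i.e.\ to the statement that appending side information $C$ to the conditioning system cannot increase the min-entropy.

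Second, I would prove this min-entropy inequality from the definition. Let $\sigma_{CD} \in \ca{S}^\leq(\ca{H}_{CD})$ and $\lambda$ be (near-)optimal for $\Hmin(A|CD)$, so that $\rho_{ACD} \leq 2^{-\lambda}\, \idop{A} \otimes \sigma_{CD}$. Since the partial trace $\tr_C$ is a positive map, it preserves operator inequalities; applying it to both sides and using $\tr_C(\idop{A} \otimes \sigma_{CD}) = \idop{A} \otimes \sigma_D$ with $\sigma_D := \tr_C(\sigma_{CD}) \in \ca{S}^\leq(\ca{H}_D)$ yields $\rho_{AD} \leq 2^{-\lambda}\, \idop{A} \otimes \sigma_D$. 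Thus $\lambda$ is admissible in the optimization defining $\Hmin(A|D)$ with witness $\sigma_D$, so $\Hmin(A|D) \geq \lambda$; taking the supremum over admissible $\lambda$ gives $\Hmin(A|D) \geq \Hmin(A|CD)$, as required.

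The only nontrivial ingredient is the fact that the partial trace preserves the positive-semidefinite order, $X \leq Y \Rightarrow \tr_C X \leq \tr_C Y$, which is just positivity of the partial-trace map; everything else is relabeling plus the duality lemma. I therefore expect the main (modest) obstacle to be purely organizational: setting up the purification and verifying that the two groupings of the pure state are legitimate instances of Lemma~\ref{lem:duality}. As an alternative that avoids duality entirely, one can argue directly on the max-entropy: its optimizer $\sigma_{BC}$ for $\Hmax(A|BC)$ maps under $\tr_C$ to a valid witness $\sigma_B = \tr_C\sigma_{BC}$ for $\Hmax(A|B)$, and the objective $\log\norm{\sqrt{\rho}\sqrt{\idop{A}\otimes\sigma}}_1^2$ can only increase because this fidelity-type quantity is monotone under the partial trace; this route simply trades positivity of the partial trace for data processing of the fidelity.
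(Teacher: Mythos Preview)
The paper does not actually prove this lemma; it merely states it and remarks that it is a special case of the data processing inequality for smooth entropies, citing \cite{Tom12}. Your proposal therefore supplies strictly more than the paper does, and both routes you sketch are correct. The duality route (purify, apply Lemma~\ref{lem:duality} twice, then use positivity of the partial trace on the defining operator inequality for $\Hmin$) is clean and uses only tools already collected in the paper; the direct route via monotonicity of $\norm{\sqrt{\rho}\sqrt{\sigma}}_1$ under partial trace is equally valid and is in fact the standard way the general data processing inequality is established. One small technical point: Lemma~\ref{lem:duality} as stated in the paper assumes a \emph{normalized} pure state, whereas Lemma~\ref{lem:forget} allows $\rho_{ABC} \in \ca{S}^{\leq}(\ca{H}_{ABC})$. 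This is harmless --- if $\tr\rho = 0$ the claim is vacuous, and otherwise both $\Hmax(A|BC)$ and $\Hmax(A|B)$ shift by the same additive constant $\log\tr\rho$ under normalization --- but you should say so explicitly, or alternatively note that the duality relation in fact holds for subnormalized pure states as well.
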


Finally, the last lemma that we add to our list of tools shows how the
(unsmoothed) max-entropy simplifies in the case where classical side information
is given.

\begin{lem}
  \label{lem:class-cond}
  Let $\rho_{ACX} \in \ca{S}^{\leq}(\ca{H}_{ACX})$ be a state of the form
  \begin{align}
    \rho_{ACX} = \sum_x p_x \ \rho_{AC}^x \otimes \ket{x} \bra{x}_X \,, \quad
    \text{where} \quad \rho_{AC}^x \in \ca{S}^{\leq}(\ca{H}_{AC}) \,.
  \end{align}
  Then \cite{Tom12}
  \begin{align}
    \Hmax(A|CX)_\rho = \log\left(\sum_x P_X(x) \
      2^{\Hmax(A|C)_{\rho_{AC}^x}}\right) \,.
  \end{align}
\end{lem}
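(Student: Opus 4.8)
The plan is to compute $\Hmax(A|CX)$ directly from Definition~\ref{def:min-max}, which for the conditioning system $CX$ reads
\begin{align}
2^{\Hmax(A|CX)_\rho} = \max_{\sigma_{CX}\in\ca{S}^\leq(\ca{H}_{CX})} \norm{\sqrt{\rho_{ACX}}\sqrt{\idop{A}\otimes\sigma_{CX}}}_1^2 \,,
\end{align}
and to observe that the quantity being optimised is the square of the fidelity $F(\rho_{ACX},\idop{A}\otimes\sigma_{CX})=\norm{\sqrt{\rho_{ACX}}\sqrt{\idop{A}\otimes\sigma_{CX}}}_1$, so the task reduces to optimising over the conditioning state $\sigma_{CX}$ and exploiting the classical structure of $X$.

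First I would argue that the optimal $\sigma_{CX}$ may be taken block-diagonal in the classical register, i.e.\ $\sigma_{CX}=\sum_x\sigma_C^x\otimes\ketbra{x}_X$. This follows from monotonicity of the fidelity under the trace-preserving dephasing channel $\ca{P}$ that measures $X$ in the basis $\{\ket{x}\}$: since $\rho_{ACX}$ is already classical on $X$ it is left invariant by $\ca{P}$, whereas $\ca{P}(\idop{A}\otimes\sigma_{CX})=\idop{A}\otimes\ca{P}_X(\sigma_{CX})$ is again a subnormalised state, now block-diagonal on $X$. Hence $F(\rho_{ACX},\idop{A}\otimes\sigma_{CX})\leq F(\rho_{ACX},\idop{A}\otimes\ca{P}_X(\sigma_{CX}))$, so dephasing $\sigma_{CX}$ never lowers the objective and I may restrict to the block-diagonal form.

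Next, for block-diagonal $\sigma_{CX}$ both square roots are block-diagonal in $X$, so their product is as well and the trace norm splits over blocks,
\begin{align}
\norm{\sqrt{\rho_{ACX}}\sqrt{\idop{A}\otimes\sigma_{CX}}}_1 = \sum_x \sqrt{P_X(x)}\;\norm{\sqrt{\rho_{AC}^x}\sqrt{\idop{A}\otimes\sigma_C^x}}_1 \,.
\end{align}
Writing $\sigma_C^x=q_x\,\tilde\sigma_C^x$ with $\tilde\sigma_C^x$ normalised, $q_x=\tr(\sigma_C^x)\geq0$ and $\sum_x q_x\leq1$, homogeneity of the norm pulls out a factor $\sqrt{q_x}$, and the remaining optimisation over each $\tilde\sigma_C^x$ is precisely the one defining $\Hmax(A|C)_{\rho_{AC}^x}$; it decouples across $x$ and gives $\max_{\tilde\sigma_C^x}\norm{\sqrt{\rho_{AC}^x}\sqrt{\idop{A}\otimes\tilde\sigma_C^x}}_1=2^{\Hmax(A|C)_{\rho_{AC}^x}/2}$.

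It then remains to optimise the weights, i.e.\ to maximise $\sum_x\sqrt{P_X(x)\,q_x}\,2^{\Hmax(A|C)_{\rho_{AC}^x}/2}$ subject to $\sum_x q_x\leq1$. By the Cauchy--Schwarz inequality this maximum equals $\big(\sum_x P_X(x)\,2^{\Hmax(A|C)_{\rho_{AC}^x}}\big)^{1/2}$, attained at $q_x\propto P_X(x)\,2^{\Hmax(A|C)_{\rho_{AC}^x}}$; squaring and taking the logarithm yields the claimed identity. I expect the only delicate point to be the reduction to block-diagonal $\sigma_{CX}$ in the second step — everything afterwards is a clean block decomposition together with Cauchy--Schwarz — so the crux is the fidelity-monotonicity (equivalently, pinching) argument showing that the optimiser inherits the classicality of $\rho_{ACX}$ on the register $X$.
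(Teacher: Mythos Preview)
Your proof is correct. The paper does not actually prove this lemma; it is stated with a citation to \cite{Tom12} and used as a black box. Your argument---restricting the optimiser $\sigma_{CX}$ to block-diagonal form via monotonicity of $\norm{\sqrt{\cdot}\sqrt{\cdot}}_1$ under the pinching/dephasing channel on $X$, then splitting the trace norm over blocks and optimising the weights by Cauchy--Schwarz---is precisely the standard derivation one finds in that reference, so there is nothing to compare.
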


Now we are ready to state the theorem formally and prove it.

\begin{thm}
  \label{thm:hminae}
  Let $\rho_{ABE} \in \ca{S}(\ca{H}_{ABE})$ be a pure tripartite state where $A$
  and $B$ are each an $N$-qubit system, let $\xb = \{ \xb_0, \xb_1 \}$ and $\zb
  = \{ \zb_0, \zb_1 \}$ be non-trivial projective measurements on a qubit (that
  is, both elements are one-dimensional projectors). Consider the states
  $\rho_{XBE}$ and $\rho_{ZBE}$ that arise from measuring all of the $N$ qubits
  of system $A$ with respect to $\xb$ and $\zb$ (as in Lemma~\ref{lem:ur}).
  Then, for $\epsilon > 0$ and $\epsilon', \epsilon'' \geq 0$, it holds that
  \begin{align}
    \label{eq:rediscool}
    H_{\text{min}}^{3\epsilon + \epsilon' + 4\epsilon''}(A|E)_\rho \geq Nq
      - (H_{\text{max}}^{\epsilon'}(Z|B)_\rho
      + H_{\text{max}}^{\epsilon''}(X|B)_\rho)
      - 2\log\frac{2}{\epsilon^2} \,,
  \end{align}
  where $q$ is the preparation quality (as in Lemma~\ref{lem:ur}).
\end{thm}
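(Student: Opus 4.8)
The plan is to peel the quantum system $A$ apart into the classical $X$-measurement outcome and a residual, and to combine the duality relation (Lemma~\ref{lem:duality}), a single coherent measurement, the chain rule (Lemma~\ref{lem:chainrule}), and the uncertainty relation (Lemma~\ref{lem:ur}). First I would apply Lemma~\ref{lem:duality} to the pure state $\rho_{ABE}$ to turn the target into an upper bound on a max-entropy, $\Hmin^{\delta}(A|E)_\rho = -\Hmax^{\delta}(A|B)_\rho$, so that it suffices to bound $\Hmax(A|B)$ from above. To bring in the $X$-basis data I would implement the (rank-one, hence Stinespring-clean) $X$-measurement coherently by the isometry $V\colon \ket{\psi}\mapsto \sum_{x\in\{0,1\}^N}\langle x_X|\psi\rangle\,\ket{x}_{\bar A}\otimes\ket{x}_X$, which writes the outcome into a classical register $X$ while keeping a coherent record in an auxiliary $N$-qubit system $\bar A$. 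Since $V$ acts only on $A$, Lemma~\ref{lem:invariance} gives $\Hmax^{\delta}(A|B)_\rho = \Hmax^{\delta}(\bar A X|B)_{\tilde\rho}$ for the still-pure state $\tilde\rho_{\bar A X B E}$.

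Next I would split off the classical register using Lemma~\ref{lem:chainrule} with the roles $(A,B,C)\mapsto(\bar A,X,B)$,
\begin{align*}
\Hmax^{\epsilon+\epsilon'+2\epsilon''}(\bar A X|B)_{\tilde\rho} \leq \Hmax^{\epsilon'}(\bar A|XB)_{\tilde\rho} + \Hmax^{\epsilon''}(X|B)_{\tilde\rho} + \log\frac{2}{\epsilon^2}.
\end{align*}
The middle term is harmless: tracing out $\bar A E$ decoheres $X$ in its own basis, so $\tilde\rho_{XB}$ is exactly the classically $X$-measured state and $\Hmax^{\epsilon''}(X|B)_{\tilde\rho}=\Hmax^{\epsilon''}(X|B)_\rho$. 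The crucial term is the residual $\Hmax^{\epsilon'}(\bar A|XB)$, and here I would invoke duality a second time, now on the pure four-partite state $\tilde\rho_{\bar A X B E}$ (system $\bar A$, side information $XB$, purifying system $E$): $\Hmax^{\epsilon'}(\bar A|XB)_{\tilde\rho} = -\Hmin^{\epsilon'}(\bar A|E)_{\tilde\rho}$. Because tracing out $X$ dephases $\bar A$ in the $X$-basis, the marginal $\tilde\rho_{\bar A E}$ coincides, up to relabelling $\bar A\leftrightarrow X$, with the classical $X$-measurement state $\rho_{XE}$, so $\Hmin^{\epsilon'}(\bar A|E)_{\tilde\rho}=\Hmin^{\epsilon'}(X|E)_\rho$.

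Collecting these identities and undoing the first duality yields
\begin{align*}
\Hmin^{\epsilon+\epsilon'+2\epsilon''}(A|E)_\rho \geq \Hmin^{\epsilon'}(X|E)_\rho - \Hmax^{\epsilon''}(X|B)_\rho - \log\frac{2}{\epsilon^2},
\end{align*}
and I would finish by feeding in the uncertainty relation (Lemma~\ref{lem:ur}) at smoothing $\epsilon'$, namely $\Hmin^{\epsilon'}(X|E)_\rho \geq Nq - \Hmax^{\epsilon'}(Z|B)_\rho$, to get $\Hmin^{\epsilon+\epsilon'+2\epsilon''}(A|E)_\rho \geq Nq - \Hmax^{\epsilon'}(Z|B)_\rho - \Hmax^{\epsilon''}(X|B)_\rho - \log\frac{2}{\epsilon^2}$. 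The stated inequality~\eqref{eq:rediscool} follows, since $\Hmin^{\delta}$ is monotone in $\delta$ (the smaller smoothing $\epsilon+\epsilon'+2\epsilon''$ only strengthens the claim) and $\log\frac{2}{\epsilon^2}\leq 2\log\frac{2}{\epsilon^2}$.

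I expect the main obstacle to be the correct treatment of the residual $\Hmax(\bar A|XB)$: although $\bar A$ looks like a redundant copy of the classical outcome, conditioning on the \emph{quantum} register $X$ leaves genuine uncertainty, and it is the second duality that rewrites this residual as $-\Hmin(X|E)$, which is precisely the slot into which the uncertainty relation injects the decisive $+Nq$. A more conservative bookkeeping of the smoothing parameters, for instance peeling $X$ off with two chain-rule steps and using Lemmas~\ref{lem:forget} and~\ref{lem:class-cond} to handle the classical conditioning, reproduces the looser constants $3\epsilon$ and $2\log\frac{2}{\epsilon^2}$ that appear in the statement.
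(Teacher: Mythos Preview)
Your argument is correct and in fact tighter than the paper's: it proves
\[
\Hmin^{\epsilon+\epsilon'+2\epsilon''}(A|E)_\rho \;\geq\; Nq - \Hmax^{\epsilon'}(Z|B)_\rho - \Hmax^{\epsilon''}(X|B)_\rho - \log\frac{2}{\epsilon^2}\,,
\]
from which inequality~\eqref{eq:rediscool} follows by monotonicity. The core toolkit is identical---coherent $X$-measurement isometry, isometry invariance, duality, chain rule, and the uncertainty relation---but the organisation differs. The paper introduces \emph{two} classical copies $X,X'$ of the outcome alongside the original $A$, applies the chain rule twice (first peeling off $X$ from $AXX'|B$, then peeling $X$ from $AX|B$), and is left with a residual term $\Hmax(A|XB)$ that it controls by dropping $B$ (Lemma~\ref{lem:forget}) and computing $\Hmax(A|X)_\rho=0$ explicitly via Lemma~\ref{lem:class-cond}, at the cost of the extra $\log\frac{2}{\epsilon^2}$ and the larger smoothing radius $3\epsilon+\epsilon'+4\epsilon''$. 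You instead apply the chain rule once and eliminate the residual $\Hmax^{\epsilon'}(\bar A|XB)$ in one stroke by a second duality on the pure state $\tilde\rho_{\bar A X B E}$, together with the identification $\tilde\rho_{\bar A E}\cong\rho_{XE}$. This is cleaner, avoids Lemmas~\ref{lem:forget} and~\ref{lem:class-cond} altogether, and yields the sharper constants; the ``more conservative bookkeeping'' you mention at the end is precisely the route the paper takes.
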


\begin{proof}
  Starting from $\rho_{ABE}$, we construct a purification $\rho_{AXX'BE}$ of
  $\rho_{XBE}$. Further below, we will expand the smooth max-entropy of this
  state using the chain rule (Lemma~\ref{lem:chainrule}). Reformulating the terms in
  that expansion will lead us to the desired result.

  Consider the product POVM elements
  \begin{align}
    \label{eq:prodpovm}
    \Pi_X(x) = \bigotimes_{i=1}^N \xb_{x_i} \quad \text{for } x = (x_i)_{i=1}^N
    \in \{0,1\}^N \,.
  \end{align}
  We construct $\rho_{AXX'BE}$ from $\rho_{ABE}$ by performing a coherent
  measurement on the $A$ system with respect to the POVM formed by the elements
  \eqref{eq:prodpovm}. The outcome of this measurement is stored in two copies
  $X$ and $X'$ of a classical register. For $x \in \{0,1\}^N$, let $V_x$ be the
  map
  \begin{align}
    \begin{array}{cccc}
      V_x: & \ca{H}_A    & \rightarrow & \ca{H}_{AXX'} \\
           & \ket{\psi} & \mapsto     & \Pi_X(x) \ket{\psi} \otimes
					\ket{x}_X \otimes \ket{x}_{X'} \,.
    \end{array}
  \end{align}
  We define the state $\rho_{AXX'BE} := V(\rho_{ABE})$, where
  \begin{align}
    \begin{array}{cccl}
      V: & \End(\ca{H}_{ABE}) & \rightarrow & \End(\ca{H}_{AXX'BE}) \\
	 & \rho_{ABE}        & \mapsto     & \sum_x (V_x \otimes \idop{BE})
	 \rho_{ABE} (V_x^\dagger \otimes \idop{BE}) \,.
    \end{array}
  \end{align}
  The map $V$ is an isometry that maps the pure state $\rho_{ABE}$ to the pure
  state $\rho_{AXX'BE}$. Thus, by virtue of Lemma~\ref{lem:duality}, it holds that
  \begin{align}
    \label{eq:touse}
    \Hmin^{\epsilon'}(X|E)_\rho = -\Hmax^{\epsilon'}(X|AX'B)_\rho \,.
  \end{align}
  We will use equation \eqref{eq:touse} further below.

  Now we expand the max-entropy of $\rho_{AXX'BE}$ using the chain rule,
  Lemma~\ref{lem:chainrule}:
  \begin{align}
    \label{eq:chained1}
    \Hmax^{\epsilon + \epsilon' + 2(\epsilon + 2\epsilon'')}(AXX'|B)_\rho \leq
      \Hmax^{\epsilon'}(X|AX'B)_\rho + \Hmax^{\epsilon + 2\epsilon''}
      (AX'|B)_\rho + \log \frac{2}{\epsilon^2} \,.
  \end{align}
  The states $\rho_{AB}$ and $\rho_{AXX'B}$ only differ by an isometry, so by
  Lemma~\ref{lem:invariance}, we have
  \begin{align} \label{eq:replace-a}
    H^{\epsilon + \epsilon' + 2(\epsilon +
      2\epsilon'')}_{\text{max}}(AXX'|B)_\rho  = H^{3\epsilon +
      \epsilon' + 4\epsilon''}_{\text{max}}(A|B)_\rho \,.
  \end{align}
  (It will become clear further below why we choose the smoothing parameter on
  the left hand side this way.) Moreover, the marginals $\rho_{AX'B}$ and
  $\rho_{AXB}$ only differ by a unitary $\ca{H}_X \rightarrow \ca{H}_{X'}$ and
  therefore
  \begin{align}
    \label{eq:marginals-equiv}
    H^{\epsilon + 2\epsilon''}_{\text{max}}(AX'|B)_\rho
    = H^{\epsilon + 2\epsilon''}_{\text{max}}(AX|B)_\rho
  \end{align}
  Combining Equations~\ref{eq:chained1},~\ref{eq:replace-a},and~\ref{eq:marginals-equiv} yields
  \begin{align} \label{combination1}
    \Hmax^{\epsilon'}(X|AX'B)_\rho \geq
    H^{3\epsilon + \epsilon' + 4\epsilon''}_{\text{max}}(A|B)_\rho -
    H^{\epsilon + 2\epsilon''}_{\text{max}}(AX|B)_\rho -
    \log\frac{2}{\epsilon^2} \,.
  \end{align}
  Now we expand the term $\Hmax^{\epsilon + 2\epsilon''}(AX|B)$ using the chain
  rule:
  \begin{align}
    \label{eq:chained2}
    H^{\epsilon + 2\epsilon''}_{\text{max}}(AX|B)_\rho &\leq
      \underbrace{H^0_{\text{max}}(A|XB)_\rho}_{\Hmax(A|XB)_\rho} +
      H^{\epsilon''}_{\text{max}}(X|B)_\rho + \log\frac{2}{\epsilon^2} \,.
  \end{align}
  Combining \eqref{combination1} with \eqref{eq:chained2} allows us to infer
  \begin{align}
    \label{eq:combination2}
    \Hmax^{\epsilon'}(X|A X'B)_\rho \geq H^{3\epsilon
      + \epsilon' + 4\epsilon''}_{\text{max}}(A|B)_\rho - \Hmax(A|XB)_\rho
      - \Hmax^{\epsilon''}(X|B)_\rho - 2\log\frac{2}{\epsilon^2} \,.
  \end{align}
  Now we use equation~\eqref{eq:touse} that we derived above to rewrite
  inequality~\eqref{eq:combination2} as
  \begin{align}
    \Hmin^{\epsilon'}(X|E)_\rho \leq - H^{3\epsilon
      + \epsilon' + 4\epsilon''}_{\text{max}}(A|B)_\rho + \Hmax(A|XB)_\rho
      + \Hmax^{\epsilon''}(X|B)_\rho + 2\log\frac{2}{\epsilon^2} \,.
  \end{align}
  Reordering terms and using Lemma~\ref{lem:forget} and the uncertainty relation for
  the smooth min- and max-entropy (Lemma~\ref{lem:ur}), we get
  \begin{align}
    H^{3\epsilon + \epsilon' + 4\epsilon''}_{\text{max}}(A|B)_\rho
    &\leq \underbrace{\Hmax(A|XB)_\rho}_{\leq \Hmax(A|X)_\rho}
    + \Hmax^{\epsilon''}(X|B)_\rho \underbrace{
      - \Hmin^{\epsilon'}(X|E)_\rho}_{\leq \Hmax^{\epsilon'}(Z|B)_\rho - Nq}
    + 2\log\frac{2}{\epsilon^2} \\
    &\leq \Hmax(A|X)_\rho + \Hmax^{\epsilon''}(X|B)_\rho
    + \Hmax^{\epsilon'}(Z|B)_\rho - Nq + 2\log\frac{2}{\epsilon^2} \,.
    \label{eq:aft-uncert}
  \end{align}
  Applying the duality relation (Lemma~\ref{lem:duality}) to the left hand side of
  Equation~\ref{eq:aft-uncert}, we get
  \begin{align}
    \label{almost-result}
    H^{3\epsilon + \epsilon' + 4\epsilon''}_{\text{min}}(A|E)_\rho \geq Nq
    - \Hmax(A|X)_\rho - \left( \Hmax^{\epsilon''}(X|B)_\rho
    + \Hmax^{\epsilon'}(Z|B)_\rho \right) + 2\log\frac{2}{\epsilon^2} \,.
  \end{align}

  We are left to show that $\Hmax(A|X)_\rho$ is upper bounded by $0$. We
  show, more precisely, that $\Hmax(A|X)_\rho = 0$. This goes as follows.
  \begin{align}
    \rho_{AX} &= \tr_{X'BE}(\rho_{AXX'BE}) \\
    &= \tr_{X'BE} \left( \sum_x (V_x \otimes \idop{BE}) \rho_{ABE}
      (V_x^\dagger \otimes \idop{BE}) \right) \\
    &= \tr_{X'} \left( \sum_x V_x \rho_{A} V_x^\dagger \right) \\
    &= \sum_x \Pi_X(x) \rho_A \Pi_X(x) \otimes \ket{x} \bra{x}_X\\
    &= \sum_x P_X(x) \ \rho_A^x \otimes \ket{x} \bra{x}_X \,,
      \label{classical-form}
  \end{align}
  where
  \begin{align}
      P_X(x) &= \tr(\Pi_X(x) \rho_A) \,, \\
      \rho_A^x &= \frac{\Pi_X(x) \rho_A {\Pi_X(x)}}{P_X(x)} \,.
  \end{align}
  Now we can apply Lemma~\ref{lem:class-cond} to Equation~\ref{classical-form}: By setting the
  system $C$ in the lemma to a trivial system ($\ca{H}_C \simeq \mathbb{C}$), we
  can deduce that
  \begin{align}
    \label{cond-class-eq}
    \Hmax(A|X)_\rho = \log\left( \sum_x P_X(x) \ 2^{\Hmax(A)_{\rho_A^x}}
    \right) \,,
  \end{align}
  where $\Hmax(A)_{\rho_A^x}$ reduces to the unconditional form of the
  max-entropy,
  \begin{align}
    \Hmax(A)_{\rho_A^x} = \log\left\Vert\sqrt{\rho_A^x}\right\Vert_1^2 =
    \log\left(\tr\left(\sqrt{\rho_A^x}\right)\right)^2 \,.
  \end{align}
  Since the $\Pi_X(x)$ are one-dimensional projectors, we have that
  \begin{align}
    \Hmax(A)_{\rho^x_A} = 0 \quad \text{for all } x \in \{0,1\}^N
  \end{align}
  and therefore $\Hmax(A|X)_\rho = 0$, as claimed.
  Thus, we have proved that
  \begin{align}
    H_{\text{min}}^{3\epsilon + \epsilon' + 4\epsilon''}(A|E)_\rho \geq Nq
      - \left( H_{\text{max}}^{\epsilon''}(X|B)_\rho
      + H_{\text{max}}^{\epsilon'}(Z|B)_\rho \right)
      - 2\log\frac{2}{\epsilon^2} \,,
  \end{align}
  which is what we wanted to show.
\end{proof}

\section{Result: Proof of the capacity bound in terms of protocol parameters}
\label{app:final}

\subsection{Comparison to min-entropy estimation in QKD}
\label{app:setup}

In Section~\ref{app:bound-on-capacity}, we have seen that the one-shot quantum capacity
of a channel is bounded by the min-entropy. In the last section, we have seen
how the smooth min-entropy $\Hmin^{3\epsilon + \epsilon' + 4\epsilon''}(A|E)$ can
be bounded in terms of the max-entropy $\Hmax^{\epsilon''}(X|B)$ and
$\Hmax^{\epsilon'}(Z|B)$ of the classical measurement outcomes $X$ and $Z$ on
$A$. This puts us in a very good position, because we already know from quantum
key distribution how to bound these max-entropies: a modern approach to quantum
key distribution based on smooth entropies proves security by bounding exactly
such a quantity.

In that approach, a QKD protocol is devised in which after sifting, Alice and
Bob have $n$ systems where they both measured in the $\xb$-basis and $k$ systems
where they both measured in the $\zb$-basis. Then they exchange their outcomes
in the $Z$-basis and determine the error rate $\ez$. If the error rate $\ez$
does not exceed a specified error tolerance $\etolz$, then they conclude that
\cite{TLGR12,PLWC16,Pfi16}
\begin{align}
  \label{eq:marcobound}
  \Hmax^{\epsilon'}(Z|B)_\rho \leq n h(\etolz + \mu(\epsilon)) \,,
\end{align}
where $h$ denotes the binary entropy function and
\begin{align}
  \label{eq:marcomu}
  \epsilon' = \frac{\epsilon}{\sqrt{\ppass}} \,, \quad
  \mu(\epsilon) = \sqrt{\frac{n+k}{nk} \frac{k+1}{k} \ln \frac{1}{\epsilon}} \,.
\end{align}
Here $\ppass$ is the probability that the correlation test (which checks whether
$\ez \leq \etolz$) is passed, where $p = 1 - \ppass$ is the parameter given in the theorem.
The state $\rho$ in inequality
\eqref{eq:marcobound} is the state of the $n$ qubits that have actually been
measured in $X$. This means that from the error rate $\ez$ in one part of the
qubits, one can infer a bound on $\Hmax^{\epsilon'}(Z|B)$ for the other part of
the qubits. This is illustrated in Figure~\ref{fig:qkdinfer}.

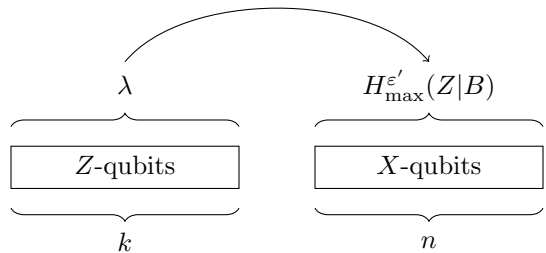
\begin{figure}[h!]
  \centering
  \begin{tikzpicture}[yscale=0.7]
    \node (x) {$X$-qubits};
    \draw ($(x) + (-1.5,-0.4)$) rectangle ($(x) + (1.5 ,0.4)$);
    \node at ($(x) + (-4,0)$) (z) {$Z$-qubits};
    \draw ($(z) + (-1.5,-0.4)$) rectangle ($(z) + (1.5 ,0.4)$);
    \draw[decorate,decoration={brace,amplitude=5pt,mirror,raise=0.05cm}]
      ($(x) + (1.5,0.7)$) -- ($(x) + (-1.5,0.7)$) node[midway,yshift=0.6cm]
      {$\Hmax^{\epsilon'}(Z|B)$};
    \draw[decorate,decoration={brace,amplitude=5pt,mirror,raise=0.05cm}]
      ($(z) + (1.5,0.7)$) -- ($(z) + (-1.5,0.7)$) node[midway,yshift=0.6cm]
      {$\ez$};
    \draw[->] ($(z) + (0,2)$) to[bend left=60] ($(x) + (0,2)$);

    \draw[decorate,decoration={brace,amplitude=5pt,raise=0.05cm}]
      ($(x) + (1.5,-0.7)$) -- ($(x) + (-1.5,-0.7)$) node[midway,yshift=-0.5cm]
      {$n$};
    \draw[decorate,decoration={brace,amplitude=5pt,raise=0.05cm}]
      ($(z) + (1.5,-0.7)$) -- ($(z) + (-1.5,-0.7)$) node[midway,yshift=-0.5cm]
      {$k$};
  \end{tikzpicture}
  \caption{\textbf{Bounding the max-entropy from an error rate on a different
    part.} In the QKD protocol that we consider \cite{TLGR12,PLWC16,Pfi16}, the
    test qubits are measured in $Z$ and the key qubits are measured in $X$. For
    the security of the protocol, $\Hmax^{\epsilon'}(Z|B)$ needs to be bounded
    for the key qubits. This bound can be inferred from the error rate $\ez$ on
    the test qubits.
    \label{fig:qkdinfer}}
\end{figure}

In the QKD scenario we just described, the goal was to infer a bound on
$\Hmax^{\epsilon'}(Z|B)$ on only a part of the total system from the error rate
$\ez$ on its complement. In our one-shot quantum capacity estimation and
verification protocols, the situation is a bit different. It is easier to
discuss the verification protocol first, because it is conceptually closer to
the QKD protocol from which we adopt the estimation techniques.

\subsection{Proof for the verification protocol}
\label{app:verification}

In the verification protocol, the qubits are divided into three subsets: one
subset of test qubits that are measured in the $X$-basis, one subset of test
qubits that are measured in the $Z$-basis, and the data qubits that are not
measured at all (see the top part of Figure~\ref{fig:verification}). In the main
article, we stated the protocol such that each of theses three subets has the
same size $N$.  Here, we consider the more general case where each of theses
subsets might have a different size $n, k, N \in \mathbb{N}_+$, respectively,
and later specialize the result to $n=k=N$. This also helps us in the proof to
keep track of which number we mean. Here, we denote the smoothing parameter by
$\delta$ instead of $\epsilon$ (it will become clear below why it is useful to
do so).

\begin{figure}[h!]
  \centering
  \begin{tikzpicture}[xscale=0.8,yscale=0.7]
    \node (x) {$X$-qubits};
    \node at ($(x) + (-4,0)$) (d) {data qubits};
    \node at ($(d) + (-4,0)$) (z) {$Z$-qubits};
    \draw ($(x) + (-1.5,-0.4)$) rectangle ($(x) + (1.5 ,0.4)$);
    \draw ($(d) + (-1.5,-0.4)$) rectangle ($(d) + (1.5 ,0.4)$);
    \draw ($(z) + (-1.5,-0.4)$) rectangle ($(z) + (1.5 ,0.4)$);
    \draw[decorate,decoration={brace,amplitude=5pt,mirror,raise=0.05cm}]
      ($(d) + (1.5,0.7)$) -- ($(d) + (-1.5,0.7)$) node[midway,yshift=0.6cm]
      {$\Hmax^{\delta'}(Z|B)$, $\Hmax^{\delta''}(X|B)$};
    \draw[decorate,decoration={brace,amplitude=5pt,mirror,raise=0.05cm}]
      ($(z) + (1.5,0.7)$) -- ($(z) + (-1.5,0.7)$) node[midway,yshift=0.6cm]
      {$\ez$};
    \draw[decorate,decoration={brace,amplitude=5pt,mirror,raise=0.05cm}]
      ($(x) + (1.5,0.7)$) -- ($(x) + (-1.5,0.7)$) node[midway,yshift=0.6cm]
      {$\ex$};
    \draw[->] ($(z) + (0,2)$) to[bend left=50] ($(d) + (-1.5,2)$);
    \draw[->] ($(x) + (0,2)$) to[bend right=50] ($(d) + (1.5,2)$);

    \draw[decorate,decoration={brace,amplitude=5pt,raise=0.05cm}]
      ($(x) + (1.5,-0.7)$) -- ($(x) + (-1.5,-0.7)$) node[midway,yshift=-0.5cm]
      {$n$};
    \draw[decorate,decoration={brace,amplitude=5pt,raise=0.05cm}]
      ($(z) + (1.5,-0.7)$) -- ($(z) + (-1.5,-0.7)$) node[midway,yshift=-0.5cm]
      {$k$};
    \draw[decorate,decoration={brace,amplitude=5pt,raise=0.05cm}]
      ($(d) + (1.5,-0.7)$) -- ($(d) + (-1.5,-0.7)$) node[midway,yshift=-0.5cm]
      {$N$};

 \draw[arrows = {->}, line width=0.1cm] ($(x) +
      (0,-2)$) -- ($(x) + (2,-4.5)$);
    \node at ($(x) + (3.5,-3)$) {trace out $Z$-qubits};

    \draw[arrows = {->}, line width=0.1cm] ($(z) +
      (0,-2)$) -- ($(z) + (-2,-4.5)$);
    \node at ($(z) + (-3.5,-3)$) {trace out $X$-qubits};

    \node at ($(z) + (0,-8)$) (d') {data qubits};
    \node at ($(d') + (-4,0)$) (z') {$Z$-qubits};
    \draw ($(d') + (-1.5,-0.4)$) rectangle ($(d') + (1.5 ,0.4)$);
    \draw ($(z') + (-1.5,-0.4)$) rectangle ($(z') + (1.5 ,0.4)$);

    \node (x'') at ($(x) + (4,-8)$) {$X$-qubits};
    \node at ($(x'') + (-4,0)$) (d'') {data qubits};
    \draw ($(d'') + (-1.5,-0.4)$) rectangle ($(d'') + (1.5 ,0.4)$);
    \draw ($(x'') + (-1.5,-0.4)$) rectangle ($(x'') + (1.5 ,0.4)$);

    \draw[decorate,decoration={brace,amplitude=5pt,mirror,raise=0.05cm}]
      ($(x'') + (1.5,0.7)$) -- ($(x'') + (-1.5,0.7)$) node[midway,yshift=0.6cm]
      {$\ex$};
    \draw[decorate,decoration={brace,amplitude=5pt,mirror,raise=0.05cm}]
      ($(d'') + (1.5,0.7)$) -- ($(d'') + (-1.5,0.7)$) node[midway,yshift=0.6cm]
      {$\Hmax^{\delta''}(X|B)$};
    \draw[decorate,decoration={brace,amplitude=5pt,raise=0.05cm}]
      ($(x'') + (1.5,-0.7)$) -- ($(x'') + (-1.5,-0.7)$) node[midway,yshift=-0.5cm]
      {$n$};
    \draw[decorate,decoration={brace,amplitude=5pt,raise=0.05cm}]
      ($(d'') + (1.5,-0.7)$) -- ($(d'') + (-1.5,-0.7)$)
      node[midway,yshift=-0.5cm] {$N$};
    \draw[->] ($(x'') + (0,2)$) to[bend right=50] ($(d'') + (0,2)$);

    \draw[decorate,decoration={brace,amplitude=5pt,raise=0.05cm}]
      ($(z') + (1.5,-0.7)$) -- ($(z') + (-1.5,-0.7)$) node[midway,yshift=-0.5cm]
      {$k$};
    \draw[decorate,decoration={brace,amplitude=5pt,raise=0.05cm}]
      ($(d') + (1.5,-0.7)$) -- ($(d') + (-1.5,-0.7)$)
      node[midway,yshift=-0.5cm] {$N$};
    \draw[decorate,decoration={brace,amplitude=5pt,mirror,raise=0.05cm}]
      ($(d') + (1.5,0.7)$) -- ($(d') + (-1.5,0.7)$) node[midway,yshift=0.6cm]
      {$\Hmax^{\delta'}(Z|B)$};
    \draw[decorate,decoration={brace,amplitude=5pt,mirror,raise=0.05cm}]
      ($(z') + (1.5,0.7)$) -- ($(z') + (-1.5,0.7)$) node[midway,yshift=0.6cm]
      {$\ez$};
    \draw[->] ($(z') + (0,2)$) to[bend left=50] ($(d') + (0,2)$);
  \end{tikzpicture}
  \caption{\textbf{Inference of the max-entropies in the verification protocol.}
    Our verification protocol can be seen as the parallel execution of two QKD
    estimation protocols (see Figure~\ref{fig:qkdinfer}). Thus, if both $\ex \leq
    \etolx$ and $\ez \leq \etolz$, we get two bounds of the form
    \eqref{eq:marcobound}.
    \label{fig:verification}}
\end{figure}
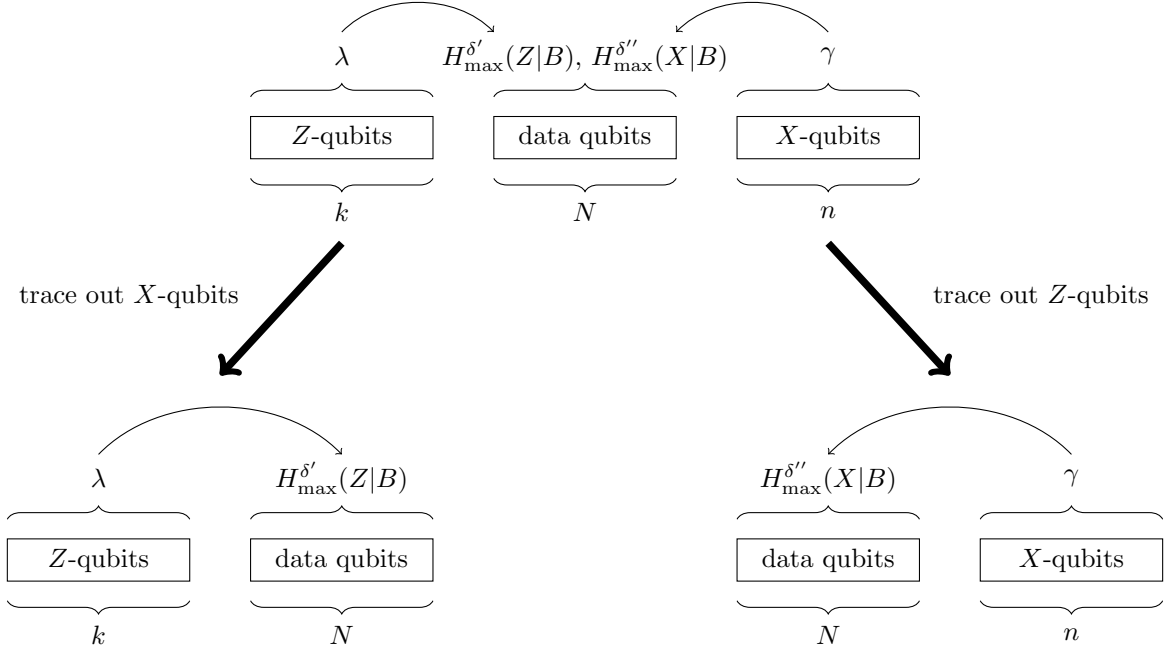

This situation may look more complicated than in the QKD scenario above.
However, it turns out that our verification protocol can be seen as running the
above QKD estimation two times in parallel (see the bottom part of
Figure~\ref{fig:verification}). When we trace out the $X$-qubits, the remainder is in
the same situation as in the QKD case as shown in Figure~\ref{fig:qkdinfer}, with the
$N$ data qubits taking the role of the qubits for which we bound the max-entropy
$\Hmax^{\delta'}(Z|B)$. Therefore, if we find that the error rate $\ez$ is below
a tolerated error rate $\etolz$, we conclude that
\begin{align}
  \Hmax^{\delta'}(Z|B) \leq N h(\etolz + \muz(\delta)) \,,
\end{align}
where
\begin{align}
  \label{eq:muz}
  \delta' = \frac{\delta}{\sqrt{\ppass^z}} \,, \quad
  \muz(\delta) = \sqrt{\frac{N+k}{Nk} \frac{k+1}{k} \ln \frac{1}{\delta}}
\end{align}
and where $\ppass^z$ is the probability that $\ez \leq \etolz$.

Likewise, when we trace out the $Z$-qubits, the remainder looks like in the QKD
case, with the $X$-basis taking the role of the $Z$-basis and with the data
qubits taking the role of the qubits for which we bound the max-entropy
$\Hmax^{\delta''}(X|B)$. If we find that the error rate $\ex$ is below a
tolerated error rate $\etolx$, then
\begin{align}
  \Hmax^{\delta''}(X|B) \leq N h(\etolx + \mux(\delta)) \,,
\end{align}
where
\begin{align}
  \label{eq:mux}
  \delta'' = \frac{\delta}{\sqrt{\ppass^x}} \,, \quad
  \mux(\delta) = \sqrt{\frac{N+n}{Nn} \frac{n+1}{n} \ln \frac{1}{\delta}}
\end{align}
and where $\ppass^x$ is the probability that $\ex \leq \etolx$.

According to our verification protocol (see Protocol~\ref{prot:verification} in the main
article), we are interested in the case where both $\ex \leq \etolx$ and $\ez
\leq \etolz$. In that case, we can conclude that
\begin{align}
  \label{eq:verifbound}
  \Hmax^{\delta'}(Z|B) + \Hmax^{\delta''}(X|B) \leq
  N \Big( h(\etolz + \muz(\delta)) + h(\etolx + \mux(\delta)) \Big) \,.
\end{align}
At this point, we can connect this bound with the bound that we derived in
Section~\ref{app:boundonhmin} (see Theorem~\ref{thm:hminae}), which says that
\begin{align}
  \label{eq:hminaedelta}
  H_{\text{min}}^{3\delta + \delta' + 4\delta''}(A|E)_\rho \geq Nq
    - (H_{\text{max}}^{\delta'}(Z|B)_\rho
    + H_{\text{max}}^{\delta''}(X|B)_\rho)
    - 2\log\frac{2}{\delta^2} \,.
\end{align}
Combining inequalities \eqref{eq:verifbound} and \eqref{eq:hminaedelta}, we get
that
\begin{align}
  \label{eq:comb1}
  H_{\text{min}}^{3\delta + \delta' + 4\delta''}(A|E)_\rho \geq
  N \Big( q  - h(\etolz + \muz(\delta)) + h(\etolx + \mux(\delta)) \Big)
    - 2\log\frac{2}{\delta^2} \,.
\end{align}
This, in turn, can be connected with the min-entropy bound on the one-shot
quantum capacity that we recapitulated in Section~\ref{app:bound-on-capacity}, which
reads
\begin{align}
  \label{eq:caboundrecited}
  \Forall \epsilon > 0: \quad \Q^\epsilon(\chan) \geq \sup_{\eta \in (0,
    \sqrt{\epsilon / 2})} \left( \Hmin^{\sqrt{\epsilon / 2} - \eta}(A|E)_\rho -
    4 \log \frac{1}{\eta} - 2 \right) \,.
\end{align}
To connect inequalities \eqref{eq:comb1} and \eqref{eq:caboundrecited}, we make
a variable transformation such that
\begin{align}
  3\delta + \delta' + 4\delta'' = \sqrt{\epsilon / 2} - \eta \,,
\end{align}
where
\begin{align}
  3\delta + \delta' + 4\delta'' = \delta
  \left( 3 + \frac{1}{\sqrt{\ppass^z}} + \frac{4}{\sqrt{\ppass^x}} \right) \,.
\end{align}
Hence, we get
\begin{align}
  \label{eq:verifgeneral}
  \Forall \epsilon > 0: \quad \Q^\epsilon(\chan) \geq \sup_{\eta \in (0,
    \sqrt{\epsilon / 2})} N \Big( q  - h(\etolz + \muz(\delta)) + h(\etolx +
    \mux(\delta)) \Big) - 2\log\frac{2}{\delta^2} - 4 \log \frac{1}{\eta} - 2
\end{align}
with
\begin{align}
  \label{eq:deltageneral}
  \delta = \frac{ \sqrt{\epsilon / 2} - \eta }{3 + \frac{1}{\sqrt{\ppass^z}} +
    \frac{4}{\sqrt{\ppass^x}}}
\end{align}
and with $\muz$ and $\mux$ as in \eqref{eq:muz} and \eqref{eq:mux},
respectively. This is the general version of our bound for the verification
protocol.

To derive the form of the bound that we presented in the main article, we make
two simplifications. Firstly, we consider the probability that \emph{both} $\ex
\leq \etolx$ \emph{and} $\ez \leq \etolz$, and denothe this joint probability by
$\ppass$. It bounds both probabilities from below, i.e. $\ppass \leq \ppass^z$
and $\ppass \leq \ppass^x$. We set $\pabort = 1 - \ppass$. Thus, the bound
\eqref{eq:verifgeneral} also holds with
\begin{align}
  \label{eq:insert1}
  \delta = \frac{ \sqrt{\epsilon / 2} - \eta }{3 + \frac{5}{\sqrt{1-\pabort}}}
  \,.
\end{align}
Secondly, we set $k=n=N$, and get
\begin{align}
  \label{eq:insert2}
  \muz(\delta) = \mux(\delta) = \mu(\delta) = \sqrt{ \frac{2(N+1)}{N^2}
    \ln \frac{1}{\delta} } \,.
\end{align}
Inserting equations \eqref{eq:insert1} and \eqref{eq:insert2} into equation
\eqref{eq:verifgeneral} gives us the form that we used in the main article.

\subsection{Proof for the estimation protocol}
\label{app:estimation}

Our estimation protocol has one essential difference to the verification
protocol. In the verification protocol, the two error rates $\ex$ and $\ez$ that
are measured do not enter the bound directly. Instead, they are compared with
some maximally tolerated error rates $\etolx$ and $\etolz$, and the bound is a
function of these values. In the estimation protocol, there are no preset
maximally tolerated error rates.  Alice and Bob simply measure two error rates
$\exest$ and $\ezest$, and the bound that they use is a function of these
measured error rates. This may seem different to the verification protocol, but
using a simple argument, we can see that the situation in the estimation
protocol is analogous to the situation in the verification protocol. (This will
also explain why we use the same notation as for the preset values $\etolx$ and
$\etolz$).

Suppose that Alice and Bob run the estimation protocol (see
Protocol~\ref{prot:estimation} of the main article) up to the point where they determine
the error rates. For now, let us denote these error rates by $\ex$ and $\ez$.
Imagine that at that point, Alice and Bob decide that they actually wanted to
make a test in which they check whether both $\ex \leq \etolx$ and $\ez \leq
\etolz$ holds. However, in contrast to the verification protocol, where $\etolx$
and $\etolz$ are preset values, Alice and Bob say that they simply want to make
the test for values of $\etolx$ and $\etolz$ that are exactly equal to the the
error rates that they have just measured, $\etolx = \ex$ and $\etolz = \ez$.
Obviously, if Alice and Bob design the test in this way, they will always pass
the test. Moreover, the interpretation of the passing probability changes: it is
no longer the probability that the measured error rates are below some preset
values. Instead, it becomes the probability that in any run, the measured error
rates stay below the rates that have been measured in this run (more precisely,
it is a lower bound on it). This probability can be seen as a measure for the
typicality of the protocol run, so we may denote it by
$p_{\textnormal{typical}}$. In Protocol~\ref{prot:estimation} in the main article, we
use the complementary probability $p = 1 - p_{\textnormal{typical}}$. Using this
argument, we can consider the state conditioned on passing a correlation test,
just as in the case of the verification protocol.

For this reason, the same general form of the bound \eqref{eq:verifgeneral}
with the same function $\delta$ as in equation \eqref{eq:deltageneral} holds as
for the verification protocol. However, in the estimation protocol, we use the
measured error rates to infer the max-entropies $\Hmax^{\epsilon'}(Z|B)$ and
$\Hmax^{\epsilon''}(X|B)$ for \emph{all} qubits, rather than just on a part that
has not been measured. Therefore, the functions $\mux$ and $\muz$ differ from
the functions for the verification protocol. In order to derive the form of
these functions, we again consider a slight generalization of the protocol that
we considered in the main article. In the main article, we assumed that the $N$
qubits that go through the channel are divided into $N/2$ qubits that are
prepared and measured in the $X$-basis and $N/2$ qubits that are measured in the
$Z$-basis. Here we assume that $n$ qubits are measured in $X$ and $k$ qubits are
measured in $Z$, with $n+k=N$. We denote the measured error rate in $X$ by $\ex$
and the measured error rate in $Z$ by $\ez$. This is shown in
Figure~\ref{fig:estimation}. In order to bound $\Hmax^{\epsilon'}(Z|B)$ from $\ez$, we
follow the original derivation of equation \eqref{eq:marcomu} as in reference
\cite{TLGR12}, adjusted to the situation shown in Figure~\ref{fig:estimation}. For a
detailed derivation, see also \cite{PLWC16,Pfi16}.

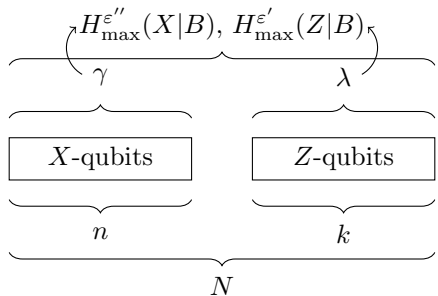
\begin{figure}[h!]
  \centering
  \begin{tikzpicture}[xscale=0.8,yscale=0.7]
    \node (x) {$X$-qubits};
    \draw ($(x) + (-1.5,-0.4)$) rectangle ($(x) + (1.5 ,0.4)$);
    \node at ($(x) + (4,0)$) (z) {$Z$-qubits};
    \draw ($(z) + (-1.5,-0.4)$) rectangle ($(z) + (1.5 ,0.4)$);
    \draw[decorate,decoration={brace,amplitude=5pt,mirror,raise=0.05cm}]
      ($(x) + (1.5,0.7)$) -- ($(x) + (-1.5,0.7)$) node[midway,yshift=0.6cm]
      {$\ex$};
    \draw[decorate,decoration={brace,amplitude=5pt,mirror,raise=0.05cm}]
      ($(z) + (1.5,0.7)$) -- ($(z) + (-1.5,0.7)$) node[midway,yshift=0.6cm]
      {$\ez$};
    \draw[decorate,decoration={brace,amplitude=5pt,mirror,raise=0.05cm}]
      ($(z) + (1.5,1.7)$) -- ($(x) + (-1.5,1.7)$) node[midway,yshift=0.6cm]
      {$\Hmax^{\epsilon''}(X|B)$, $\Hmax^{\epsilon'}(Z|B)$};
    \draw[->] ($(z) + (0.3,1.6)$) to[bend right=50] ($(z) + (0.4,2.5)$);
    \draw[->] ($(x) + (-0.3,1.6)$) to[bend left=50] ($(x) + (-0.4,2.5)$);

    \draw[decorate,decoration={brace,amplitude=5pt,raise=0.05cm}]
      ($(x) + (1.5,-0.7)$) -- ($(x) + (-1.5,-0.7)$) node[midway,yshift=-0.5cm]
      {$n$};
    \draw[decorate,decoration={brace,amplitude=5pt,raise=0.05cm}]
      ($(z) + (1.5,-0.7)$) -- ($(z) + (-1.5,-0.7)$) node[midway,yshift=-0.5cm]
      {$k$};
    \draw[decorate,decoration={brace,amplitude=5pt,raise=0.05cm}]
      ($(z) + (1.5,-1.7)$) -- ($(x) + (-1.5,-1.7)$) node[midway,yshift=-0.5cm]
      {$N$};
  \end{tikzpicture}
  \caption{\textbf{Inference of the max-entropies in the estimation protocol.}
    In the estimation protocol, the measured error rates are used to bound the
    max-entropies on the total system of \emph{all} qubits. This is in contrast
    to the verification protocol, where the measured error rates were used to
    bound the max-entropies on only a part of the total system. This is why we
    cannot simply use the function $\mu$ as in equation \eqref{eq:marcomu} but
    need to derive them for this particular situation.
    \label{fig:estimation}}
\end{figure}

We consider the \emph{Gedankenexperiment} in which all the bits have been
measured in the $Z$-basis. We denote random variable of the error rate $\ez$ in
$Z$ in the $Z$-bits by $\Ez = \Ez_z$, the error rate in $Z$ in the $X$-bits by
$\Ez_x$ and the total error rate in $Z$ by $\Ez_\textnormal{tot}$. Then it holds
that
\begin{align}
  \label{eq:lambdasum}
  \nx \Ez_x + \nz \Ez_z = (\nx + \nz) \Ez_\textnormal{tot} \,.
\end{align}
The division of the qubits into $X$-qubits and $Z$-qubits is fully random.
Therefore, the error number probabilities follow a hypergeometric distribution.
This means that Serfling's bound \cite{Ser74} applies. Here, we use the
particular form presented in inequality (1.3) in \cite{GW15}.
\begin{align}
  \label{eq:serfling}
  \Forall \nu > 0: \quad P[\sqrt{\nx} (\Ez_x - \Ez_\textnormal{tot}) \geq \nu]
  \leq \exp \left( -2\nu^2 \frac{1}{1-\frac{\nx -1}{\nx+\nz}} \right)
\end{align}
Using \eqref{eq:lambdasum}, it is easy to show that
\begin{align}
  \sqrt{\nx} (\Ez_x - \Ez_\textnormal{tot}) \geq \nu \iff
  \Ez_\textnormal{tot} \geq \Ez_z + \frac{\sqrt{\nx}}{\nz} \nu \,.
\end{align}
Therefore, \eqref{eq:serfling} is equivalent to
\begin{align}
  \Forall \nu > 0: \quad P\left[ \Ez_\textnormal{tot} \geq \Ez_z +
    \frac{\sqrt{\nx}}{\nz} \nu \right] \leq \exp \left( -2\nu^2
    \frac{1}{1-\frac{\nx -1}{\nx+\nz}} \right)
\end{align}
With the variable substitution
\begin{align}
  \nu = \frac{k}{\sqrt{n}} \mu_z \,, \quad \text{so that} \quad
  \mu_z = \frac{\sqrt{n}}{k} \nu \,,
\end{align}
we can write this as
\begin{align}
  \Forall \mu_z > 0: \quad P \left[ \Ez_\textnormal{tot} \geq \Ez_z + \mu_z \right]
    & \leq \exp \left(-2 \left(\frac{k}{\sqrt{n}} \mu_z \right)^2
      \frac{1}{1-\frac{\nx -1}{\nx+\nz}} \right) \\
    & = \exp \left( -2 \frac{\nz^2(\nx+\nz)}{\nx(\nz+1)} \mu_z^2 \right) \,.
\end{align}
According to Bayes' theorem, it holds that
\begin{align}
  P \left[ \Ez_\textnormal{tot} \geq \Ez_z + \mu_z \middle\vert \Ez_z \leq \etolz
  \right] \leq \frac{ P[\Ez_\textnormal{tot} \geq \Ez_z + \mu_z] }
  {P[\Ez_z \leq \etolz]}
\end{align}
and thus
\begin{align}
  \label{eq:estimationtail}
  P \left[ \Ez_\textnormal{tot} \geq \Ez_z + \mu_z \middle\vert \Ez_z \leq \etolz
  \right] \leq \frac{\epsilon^2}{\ppass^z} \,,
\end{align}
where
\begin{align}
  \label{eq:estimationepsilon}
  &\epsilon = \exp \left( - \frac{\nz^2(\nx+\nz)}{\nx(\nz+1)} \mu_z^2 \right)
    \,, \\
  &\ppass^z = P[\Ez_z \leq \etolz] \,.
\end{align}
In \cite{TLGR12,Pfi16}, it was shown that inequality \eqref{eq:estimationtail}
implies that the state of the total system of $\N=\nx+\nz$ qubits, conditioned
on $\Ez_z \leq \etolz$, satisfies
\begin{align}
  \Hmax^{\epsilon / \sqrt{\ppass^z}}(Z|B)_\rho \leq \N h(\etolz + \mu_z) \,,
\end{align}
where $\mu_z$ is solved for in \eqref{eq:estimationepsilon},
\begin{align}
  \mu_z = \sqrt{ \frac{\nx(\nz+1)}{\nz^2(\nx+\nz)} \ln \frac{1}{\epsilon} } \,.
\end{align}

The derivation of $\mux$ is essentially analogous: The only difference is that $n$
and $k$ change their roles, and that $\ppass^x$ replaces $\ppass^z$. Thus, we
get
\begin{align}
  \Hmax^{\epsilon / \sqrt{\ppass^x}}(X|B)_\rho \leq \N h(\etolx + \mu_x) \,,
\end{align}
where
\begin{align}
  &\mu_x = \sqrt{ \frac{\nz(\nx+1)}{\nx^2(\nx+\nz)} \ln \frac{1}{\epsilon} } \,,
    \\
  &\ppass^x = P[\Gamma_x \leq \etolx] \,.
\end{align}

To get the form that we use in the main article, we make again make two
simplifications. As for the verification protocol, we bound $\ppass^x$ and
$\ppass^z$ by a joint passing probability $p_\textnormal{typical}$ and set $p =
1 - p_\textnormal{typical}$.  Finally, we set $n = k = N/2$ and get
\begin{align}
  \mux = \muz = \mu = \sqrt{ \frac{N+2}{N^2} \ln \left( \frac{
    3+\frac{5}{\sqrt{1-p}} }{ \sqrt{\epsilon / 2} - \eta} \right) } \,.
\end{align}
This completes the proof.

\end{document}